\newtheorem{theorem}{Theorem}
\newtheorem{lemma}{Lemma}
\newtheorem{proposition}{Proposition}
\newtheorem{corollary}{Corollary}
\newtheorem{remark}{Remark}
\newcommand{\R}{\mathbb{R}}
\newcommand{\Z}{\mathbb{Z}}
\newcommand{\C}{\mathbb{C}}
\newcommand{\bP}{\mathbb{P}}
\newcommand{\dps}{\displaystyle}
\newcommand{\ii}{\infty}
\newcommand\1{{\ensuremath {\mathds 1} }}
\renewcommand\phi{\varphi}
\newcommand{\gH}{\mathfrak{H}}
\newcommand{\gS}{\mathfrak{S}}
\newcommand{\cT}{\mathcal{T}}
\newcommand{\cC}{\mathcal{C}}
\newcommand{\cE}{\mathcal{E}}
\newcommand{\norm}[1]{ \left\| #1 \right\|}
\newcommand{\tr}{{\rm Tr}\,}
\renewcommand{\geq}{\geqslant}
\renewcommand{\leq}{\leqslant}
\renewcommand{\tilde}{\widetilde}
\newcommand{\be}{\begin{equation}}
\newcommand{\ee}{\end{equation}}
\newcommand{\bq}{\begin{equation}}
\newcommand{\eq}{\end{equation}}
\newcommand{\eps}{\varepsilon}
\newcommand{\nn}{\nonumber}
\newcommand{\bDelta}{\mathbf{\Delta}}
\title{The Local Density Approximation in Density Functional Theory}
\author[M. Lewin]{Mathieu Lewin}
\address{CNRS \& CEREMADE, Universit\'e Paris-Dauphine, PSL University, 75016 Paris, France} 
\email{mathieu.lewin@math.cnrs.fr}
\author[E.H. Lieb]{Elliott H. Lieb}
\address{Departments of Mathematics and Physics, Jadwin Hall, Princeton University, Washington Rd., Princeton, NJ 08544, USA}
\email{lieb@princeton.edu}
\author[R. Seiringer]{Robert Seiringer}
\address{IST Austria (Institute of Science and Technology Austria), Am Campus 1, 3400 Klosterneuburg, Austria}
\email{robert.seiringer@ist.ac.at}
\date{\today}
\begin{document}
 
 \begin{abstract}
We give the first mathematically rigorous justification of the Local Density Approximation in Density Functional Theory. We provide a quantitative estimate on the difference between the grand-canonical Levy-Lieb energy of a given density (the lowest possible energy of all quantum states having this density) and the integral over the Uniform Electron Gas energy of this density. The error involves gradient terms and justifies the use of the Local Density Approximation in the situation where the density is very flat on sufficiently large regions in space.

\bigskip

\noindent \sl \copyright~2019 by the authors. This paper may be reproduced, in its entirety, for non-commercial purposes. Final version to appear in \emph{Pure and Applied Analysis}.
 \end{abstract}

 \maketitle

 \tableofcontents

 \newpage
\section{Introduction}
 
Density Functional Theory (DFT)~\cite{DreGro-90,ParYan-94,EngDre-11,BurWag-13,PriGroBur-15} is the most efficient approximation of the many-body Schrödinger equation for electrons. It is used in several areas of physics and chemistry and its success in predicting the electronic properties of atoms, molecules and materials is unprecedented. Among the many functionals that have been developed over the years~\cite{MarHea-17}, the \emph{Local Density Approximation} (LDA) is the standard and simplest scheme~\cite{HohKoh-64,KohSha-65,DreGro-90,ParYan-94,PerKur-03}. It is not as accurate as its successors involving gradient corrections, but it is considered as \emph{``the mother of all approximations''}~\cite{PerSch-01} and it is still one of the methods of choice in solid state physics. 

In the orbital-free formulation of Density Functional Theory~\cite{Levy-79,Lieb-83b}, the Local Density Approximation consists in replacing the full ground state energy by a local functional as follows:
\begin{equation}
 F_{\rm LL}(\rho)\approx\frac{1}{2}\int_{\R^3}\int_{\R^3}\frac{\rho(x)\rho(y)}{|x-y|}\,dx\,dy+\int_{\R^3}e_{\rm UEG}\big(\rho(x)\big)\,dx.
 \label{eq:LDA}
\end{equation}
Here $\rho$ is the given one-particle density of the system and $F_{\rm LL}(\rho)$ is the \emph{Levy-Lieb functional}~\cite{Levy-79,Lieb-83b}, the main object of interest in DFT. This is the lowest possible Schrödinger energy of all quantum states having the prescribed density $\rho$. The first term on the right side is called the \emph{direct} or \emph{Hartree term}. It is the classical electrostatic interaction energy of the density $\rho$ and it is the only nonlocal term in the LDA. The second term is the energy of the \emph{Uniform Electron Gas (UEG)}~\cite{DreGro-90,ParYan-94,GiuVig-05,LewLieSei-18}, containing all of the kinetic energy and the exchange-correlation energy in our convention. That is, $e_{\rm UEG}(\rho_0)$ is the ground state energy per unit volume of the infinite electron gas with the prescribed constant density $\rho_0$ over the whole space (from which the direct term has been dropped).
The rationale for the approximation~\eqref{eq:LDA} is to assume that the density is almost constant locally (in little boxes of volume $dx$), and to replace the local energy per unit volume by that of the infinite gas at that density $\rho(x)$.\footnote{It is often more convenient to fix the densities $\rho^\uparrow(x)$ and $\rho^\downarrow(x)$ of, respectively, spin-up and spin-down electrons instead of the total density $\rho(x)=\rho^\uparrow(x)+\rho^\downarrow(x)$. All our results apply similarly to this situation, as explained below in Remark~\ref{rmk:LSDA}.}

Our goal in this paper is to justify the approximation~\eqref{eq:LDA} in the appropriate regime where $\rho$ is flat in sufficiently large regions of $\R^3$. We will prove the following quantitative estimate
\begin{multline}
 \left|F_{\rm LL}(\rho)- \frac{1}{2}\int_{\R^3}\int_{\R^3}\frac{\rho(x)\rho(y)}{|x-y|}\,dx\,dy-\int_{\R^3}e_{\rm UEG}\big(\rho(x)\big)\,dx\right|\\
 \leq \eps \int_{\R^3}\big(\rho(x)+\rho(x)^2\big)\,dx
 +\frac{C(1+\eps)}{\eps}\int_{\R^3}|\nabla\sqrt\rho(x)|^2\,dx\\+\frac{C}{\eps^{4p-1}}\int_{\R^3}|\nabla\rho^\theta(x)|^p\,dx
 \label{eq:LDA_main_estim_intro}
\end{multline}
for all $\eps>0$, where $F_{\rm LL}(\rho)$ is the grand canonical version of the Levy-Lieb functional. The parameters $p>3$ and $0<\theta<1$ should satisfy some conditions which will be explained below. For instance, $p=4$ and $\theta=1/2$ is allowed. After optimizing over $\eps$, this justifies the LDA when the two gradient terms are much smaller than the local term
$$\begin{cases}
\dps \int_{\R^3}|\nabla\sqrt\rho(x)|^2\,dx\ll \int_{\R^3}\big(\rho(x)+\rho(x)^2\big)\,dx,\\[0.4cm]
\dps \int_{\R^3}|\nabla\rho^\theta(x)|^p\,dx\ll \int_{\R^3}\big(\rho(x)+\rho(x)^2\big)\,dx.   
  \end{cases}$$
For instance for a rescaled density in the form
$$\rho_N(x):=\rho\big(N^{-1/3}x\big)$$
with $\int_{\R^3}\rho=1$, we obtain after taking $\eps=N^{-1/12}$
\begin{equation*}
 \left|F_{\rm LL}(\rho_N)- \frac{N^{\frac53}}{2}\int_{\R^3}\int_{\R^3}\frac{\rho(x)\rho(y)}{|x-y|}\,dx\,dy-N\int_{\R^3}e_{\rm UEG}\big(\rho(x)\big)\,dx\right|
 \leq C N^{\frac{11}{12}}.
 \label{eq:LDA_main_rescaled_density_intro}
\end{equation*}

The bound~\eqref{eq:LDA_main_estim_intro} is, to our knowledge, the first estimate of this kind on the fundamental functional $F_{\rm LL}$. Although it should be possible to extract a definite value of the constant $C$ from our proof, it is probably very large and we have not tried to do it. The factor $1/\eps^{4p-1}$ is also quite large and it is an open problem to improve it. We hope that our work will stimulate more results on the functional $F_{\rm LL}$ in the regime of slowly varying densities. 

In physics and chemistry, the \emph{exchange-correlation energy} is defined by subtracting a kinetic energy term $T(\rho)$ from $F_{\rm LL}(\rho)$. In this paper we also derive a bound on $T(\rho)$ which, when combined with~\eqref{eq:LDA_main_estim_intro}, provides a bound on the exchange-correlation energy similar to~\eqref{eq:LDA_main_estim_intro}. This is explained below in Remark~\ref{rmk:exchange_correlation}.

\medskip

In the next section we provide the precise mathematical definition of $F_{\rm LL}$ and $e_{\rm UEG}$, and we state our main theorem containing the estimate~\eqref{eq:LDA_main_estim_intro}. In Section~\ref{sec:a_priori} we review some known \emph{a priori} estimates on $F_{\rm LL}$ and prove a new upper bound on the kinetic energy. Section~\ref{sec:proof} contains the proof of our main results. Finally, in Appendix~\ref{app:classical} we discuss a similar bound in the classical case where the kinetic energy is dropped, extending thereby our previous result in~\cite{LewLieSei-18}.

\subsubsection*{Acknowledgments.} The authors thank the Institut Henri Poincar\'e for its hospitality. This project has received funding from the European Research Council (ERC) under the European Union's Horizon 2020 research and innovation programme (grant agreements AQUAMS No 694227 of R.S. and MDFT No 725528 of M.L.).

\section{Main result}
 
 \subsection{The grand-canonical Levy-Lieb functional}
 
Let us consider a density $\rho\in L^1(\R^3,\R_+)$ such that $\sqrt\rho\in H^1(\R^3)$. Naturally we should assume in addition that $\int_{\R^3}\rho=N$ is an integer, but here we will work in the grand canonical ensemble where this is not needed.
The \emph{grand-canonical Levy-Lieb functional}~\cite{Levy-79,Lieb-83b,LewLieSei-18} is defined by
\begin{multline}
F_{\rm LL}(\rho):=\\\inf_{\substack{\Gamma_n=\Gamma_n^*\geq0\\ \sum_{n=0}^\ii\tr(\Gamma_n)=1\\ \sum_{n=1}^\ii \rho_{\Gamma_n}=\rho}} \Bigg\{\sum_{n=1}^\ii \tr_{\gH^n}\Bigg(-\sum_{j=1}^n\Delta_{x_j}+\sum_{1\leq j<k\leq n}\frac{1}{|x_j-x_k|}\Bigg)\Gamma_n\Bigg\}.
\label{eq:Levy-Lieb-grand-canonical}
\end{multline}
Here 
$$\gH^n:=L^2_a((\R^3\times \{1,...,q\})^n,\C)$$
is the $n$-particle space of antisymmetric square-integrable functions on $(\R^3\times \{1,...,q\})^n$, with $q$ spin states (for electrons $q=2$). The family of operators $\Gamma=\{\Gamma_n\}_{n\geq0}$ forms a grand-canonical mixed quantum state, that is, a state over the fermionic Fock space (commuting with the particle number operator). The density of each $\Gamma_n$ is defined by
\begin{multline*}
\rho_{\Gamma_n}(x)=\\
n\!\!\!\sum_{\substack{\sigma_1,...,\sigma_n\\ \in\{1,...,q\}}}\int_{\R^{3(n-1)}}
\Gamma_n(x,\sigma_1,x_2,...,x_n,\sigma_n;x,\sigma_1,x_2,...,x_n,\sigma_n)dx_2\cdots dx_n 
\end{multline*}
where $\Gamma_n(x_1,...,\sigma_n;x_1',...,\sigma'_n)$ is the kernel of the trace-class operator $\Gamma_n$. This kernel is such that 
\begin{align*}
&\Gamma_n(x_{\tau(1)},\sigma_{\tau(1)},...,x_{\tau(N)},\sigma_{\tau(N)}\;;\;x'_1,\sigma'_1,...,x'_N,\sigma'_N)\\
&\qquad=\Gamma_n(x_1,\sigma_1,...,x_N,\sigma_N\;;\;x'_{\tau(1)},\sigma'_{\tau(1)}...,x'_{\tau(N)},\sigma'_{\tau(N)})\\
&\qquad =\eps(\tau)\;\Gamma_n(x_1,\sigma_1,...,x_N,\sigma_N\;;\;x'_1,\sigma'_1...,x'_N,\sigma'_N)
\end{align*}
for every permutation $\tau\in\gS^N$ with signature $\eps(\tau)\in\{\pm1\}$.

If $\int_{\R^3}\rho=N\in\mathbb{N}$ and we restrict ourselves to mixed states $\Gamma$ where only $\Gamma_N$ is non-zero, we obtain Lieb's functional~\cite{Lieb-83b}. If we further assume that $\Gamma_N=|\Psi\rangle\langle\Psi|$ is a rank-one projection, then we find the original Levy or Hohenberg-Kohn functional~\cite{HohKoh-64,Levy-79}. It is well known~\cite{Lieb-83b} that working with mixed states has several advantages, in particular we obtain a convex function of $\rho$. 

The grand-canonical version~\eqref{eq:Levy-Lieb-grand-canonical} is less popular but still important physically.\footnote{The grand canonical functional $F_{\rm LL}$ is the weak-$\ast$ lower semi-continuous closure of the canonical Lieb functional~\cite{Lewin-11}. Hence it appears naturally in situations where some particles can be lost, e.g. in scattering processes.} It is also a convex function of $\rho$. The fact that we can appeal to states with an arbitrary number $n$ of particles (but still a fixed average number $\int_{\R^3}\rho$) will considerably simplify several technical parts of our study. We expect that our main result (Theorem~\ref{thm:LDA} below) holds the same for the canonical functionals, for which the energy is minimized over mixed or pure states with $N$ particles.

It is useful to subtract the direct term from $F_{\rm LL}$, hence to consider the energy
\begin{equation}
\boxed{E(\rho):=F_{\rm LL}(\rho)-\frac12\int_{\R^3}\int_{\R^3}\frac{\rho(x)\rho(y)}{|x-y|}dx\,dy.}
\label{eq:def_E_quantum}
\end{equation}
The (grand-canonical) \emph{exchange-correlation energy} is defined by 
$E_{\rm xc}(\rho):=E(\rho)-T(\rho)$
where 
\begin{equation}
T(\rho):=\inf_{\substack{\Gamma_n=\Gamma_n^*\geq0\\ \sum_{n=0}^\ii\tr(\Gamma_n)=1\\ \sum_{n=1}^\ii \rho_{\Gamma_n}=\rho}} \Bigg\{\sum_{n=1}^\ii \tr_{\gH^n}\Bigg(-\sum_{j=1}^n\Delta_{x_j}\Bigg)\Gamma_n\Bigg\}=\inf_{\substack{0\leq \gamma=\gamma^*\leq 1\\ \rho_\gamma=\rho}}\tr(-\Delta)\gamma
\label{eq:def_T}
\end{equation}
is the lowest possible kinetic energy. We will study the functional $T(\rho)$ in Section~\ref{sec:kinetic} below. 

\subsection{The Uniform Electron Gas}
In~\cite[Section 5]{LewLieSei-18}, we have defined the uniform electron gas, which is obtained in the limit when $\rho$ approaches a constant function in the whole space. This is believed to be the same as the ground state energy of Jellium, where the density is not necessarily constant but the electrons instead evolve in a constant background~\cite{LieNar-75}. This has recently been proved in the classical case in~\cite{LewLieSei-19b,CotPet-19b} and the same is expected in the quantum case.

The following result is a slight improvement of~\cite[Thm.~5.1]{LewLieSei-18}.

\begin{theorem}[Quantum Uniform Electron Gas]\label{thm:UEG_thermo_limit_quantum}
Let $\rho_0>0$. Let $\{\Omega_N\}\subset\R^3$ be a sequence of bounded connected domains with $|\Omega_N|\to\ii$, such that $\Omega_N$ has a uniformly regular boundary in the sense that
$$|\partial\Omega_N+B_r|\leq Cr|\Omega_N|^{2/3},\qquad\text{for all $r\leq |\Omega_N|^{1/3}/C$,}$$
for some constant $C>0$. Let $\delta_N>0$ be any sequence such that $\delta_N/|\Omega_N|^{1/3}\to0$ and $\delta_N|\Omega_N|^{1/3}\to\ii$. Let $\chi\in L^1(\R^3)$ be a radial non-negative function of compact support such that $\int_{\R^3}\chi=1$ and  $\int_{\R^3}|\nabla\sqrt\chi|^2<\ii$. Denote $\chi_\delta(x)=\delta^{-3}\chi(x/\delta)$. Then the following thermodynamic limit exists
\begin{equation}
\boxed{ \lim_{N\to\ii}\frac{E \big(\rho_0\1_{\Omega_N}\ast\chi_{\delta_N}\big)}{|\Omega_N|}
=e_{\rm UEG}\big(\rho_0\big) }
 \label{eq:thermodynamic_limit_quantum}
\end{equation}
where the function $e_{\rm UEG}$ is independent of the sequence $\{\Omega_N\}$, of $\delta_N$ and of~$\chi$. 
\end{theorem}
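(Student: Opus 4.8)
The plan is to establish the thermodynamic limit \eqref{eq:thermodynamic_limit_quantum} by the classical Fekete-type subadditivity argument, adapted to the quantum grand-canonical setting as in \cite[Thm.~5.1]{LewLieSei-18}, the novelty here being that the density is smoothed by $\chi_{\delta_N}$ rather than being the sharp indicator. First I would reduce to the case $\Omega_N$ a large cube (or ball) of side $L=L_N\to\infty$ by the regularity hypothesis on $\partial\Omega_N$: any uniformly regular domain can be paved by cubes of a fixed size with a boundary layer whose measure is $o(|\Omega_N|)$, and on that layer one controls the contribution of $E$ by the a priori bounds (the energy per unit volume being bounded because $e_{\rm UEG}$ has at most polynomial growth in $\rho_0$). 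The key subadditivity estimate is: if $Q$ is a cube that is the disjoint union of $M^3$ translated copies of a smaller cube $Q'$, then
\begin{equation*}
E\big(\rho_0\1_Q\ast\chi_\delta\big)\leq M^3\, E\big(\rho_0\1_{Q'}\ast\chi_\delta\big)+(\text{error}),
\end{equation*}
obtained by taking a tensor product (over the Fock space) of near-optimal states for each small cube. Here the error comes from two sources: the Coulomb interaction between different small cubes, which after subtracting the direct term is controlled by surface terms thanks to screening/charge-neutrality considerations exactly as in the classical case, and the mismatch between $\1_Q\ast\chi_\delta$ and the sum of the $\1_{Q'}\ast\chi_\delta$ near the internal interfaces, which is supported in a layer of width $O(\delta)$ and whose energetic cost is controlled using $T(\rho)$-type bounds together with $\int|\nabla\sqrt\chi|^2<\infty$.

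Next I would prove the matching lower bound (superadditivity up to surface errors) by the standard localization method: given a near-optimal grand-canonical state for $E(\rho_0\1_{\Omega_N}\ast\chi_{\delta_N})$, I localize it into a grid of cubes of intermediate size $\ell$ with $\delta_N\ll\ell\ll|\Omega_N|^{1/3}$, using a smooth partition of unity; the kinetic energy localization error is $O(\ell^{-2})$ per cube (hence $O(|\Omega_N|/\ell^2)$ total, negligible after dividing by $|\Omega_N|$) and the Coulomb localization is handled by the sliding technique of Conlon--Lieb--Yau / Graf--Schenker, again reducing cross-terms to surface contributions. The particle number in each localized cube fluctuates, but this is precisely where the grand-canonical framework pays off: one does not need to fix integer particle numbers, and one compares the localized energy in each cube to $E$ of (a small modification of) the constant density $\rho_0$ there. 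Passing $\delta_N/|\Omega_N|^{1/3}\to0$ ensures the smoothing is negligible on the macroscopic scale, while $\delta_N|\Omega_N|^{1/3}\to\infty$ guarantees the smoothing layer is thick compared to the microscopic interparticle distance so that the gradient of the mollified indicator does not concentrate energy; together these give that the limit is the same number $e_{\rm UEG}(\rho_0)$ as in the sharp-indicator case of \cite[Thm.~5.1]{LewLieSei-18}, and independence of $\chi$, $\delta_N$ and $\{\Omega_N\}$ follows by a standard comparison (sandwiching one sequence between dilates of another).

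The main obstacle I anticipate is the treatment of the interface layer of width $O(\delta_N)$ where the mollified density $\rho_0\1_{\Omega_N}\ast\chi_{\delta_N}$ interpolates between $0$ and $\rho_0$. In the classical case the cost there is purely electrostatic and mild, but in the quantum case one must also control the kinetic energy $T(\rho)$ of a density that drops to zero over a length scale $\delta_N$; a crude bound would give something like $\rho_0|\partial\Omega_N|/\delta_N$ which, after dividing by $|\Omega_N|$, is $O(1/(\delta_N|\Omega_N|^{1/3}))\to0$ — so the hypothesis $\delta_N|\Omega_N|^{1/3}\to\infty$ is exactly what makes this work, and the delicate point is to prove this kinetic bound cleanly (this is essentially the content of the kinetic-energy estimate developed in Section~\ref{sec:kinetic}, applied to the specific profile $\rho_0\1_{\Omega_N}\ast\chi_{\delta_N}$, using $\int|\nabla\sqrt\chi|^2<\infty$). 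Everything else is a careful bookkeeping of surface terms that parallels the classical argument and \cite[Thm.~5.1]{LewLieSei-18}.
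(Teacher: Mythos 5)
Your outline captures the right global architecture—localize at an intermediate scale $\ell$ with $\delta_N \ll \ell \ll |\Omega_N|^{1/3}$, prove an approximate sub/superadditivity, control boundary layers by the a priori bounds, use Graf–Schenker for the lower bound, and use $\delta_N|\Omega_N|^{1/3}\to\infty$ to tame the kinetic cost of the mollified interface—and that last observation (with the $C/(\delta_N|\Omega_N|^{1/3})$ estimate on $\int|\nabla\sqrt{\1_{\Omega_N}*\chi_{\delta_N}}|^2/|\Omega_N|$) is indeed exactly the mechanism the paper uses. However, there is a genuine gap in your upper bound.

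You claim $E(\rho_0\1_Q*\chi_\delta)\leq \sum_j E(\rho_0\1_{Q'_j}*\chi_\delta)+(\text{error})$ ``by taking a tensor product over the Fock space of near-optimal states for each small cube,'' and you assign the error to electrostatic cross-terms and a ``mismatch'' of mollified indicators. But there is no mismatch of the densities: $\1_Q*\chi_\delta=\sum_j\1_{Q'_j}*\chi_\delta$ exactly. The real problem is that the local densities $\1_{Q'_j}*\chi_\delta$ have \emph{overlapping supports} (in a layer of width $\sim\delta$ around each internal face), and for fermions this is fatal for the tensor-product construction: if you tensor local states whose supports overlap and then antisymmetrize, the resulting state no longer has density $\sum_j\rho_j$, and if you do not antisymmetrize, it is not an admissible fermionic state. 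This is precisely the point emphasized at the start of Section~\ref{sec:upper_bound}; for the classical energy one has exact subadditivity $E_{\rm cl}(\rho_1+\rho_2)\leq E_{\rm cl}(\rho_1)+E_{\rm cl}(\rho_2)$, but no such bound is known for the quantum $E$ when $\rho_1,\rho_2$ overlap, and in fact deriving \emph{any} workable substitute is the main new technical contribution of the paper.

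The paper's remedy, which your proposal is missing, is Proposition~\ref{prop:upper_bound}: one uses an \emph{incomplete} partition of unity $\chi_{\ell,\delta,j}$ with holes near the cell boundaries, so that the local densities $\chi_{\ell,\delta,j}\rho$ have genuinely disjoint supports and can be tensored and antisymmetrized; one then averages over all translations (and rotations and dilations) of the tiling so that the averaged trial state has the exact density $\rho$. The price is an error $C\delta^2\log(\alpha^{-1})\int\rho^2$ coming solely from the nonlinearity of the direct term. The paper then combines this with the tetrahedral thermodynamic limit (Proposition~\ref{prop:estimates_tetrahedra}) and the upper bound~\eqref{eq:upper_bound_E}, working directly on $\Omega_N$ with a tetrahedral tiling at scale $\ell=|\Omega_N|^{1/6}$ and $\delta=|\Omega_N|^{-1/12}$, rather than first reducing to cubes as you suggest. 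Without an argument of this kind (or some other genuine quantum subadditivity), your Step 1 does not go through.
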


For more properties of the UEG energy $e_{\rm UEG}$ we refer to ~\cite{LewLieSei-18} and the references therein. In~\cite[Thm.~5.1]{LewLieSei-18} we rather optimized over the values of $\rho$ in the transition region around $\partial\Omega_N$. We were able to prove the simple limit~\eqref{eq:thermodynamic_limit_quantum} only when $\Omega_N$ is a tetrahedron. Using an upper bound on $E(\rho)$ that will be derived later in Proposition~\ref{prop:upper_bound}, we are now able to treat more reasonable limits in the form of~\eqref{eq:thermodynamic_limit_quantum}.  The proof of Theorem~\ref{thm:UEG_thermo_limit_quantum} is provided in Section~\ref{sec:proof_thermo_limit} below.

The function $\chi_{\delta_N}$ is used to regularize the function $\rho_0\1_{\Omega_N}$ which cannot be the density of a quantum state, since its square root is not in $H^1(\R^3)$~\cite{Lieb-83b}. The first condition $\delta_N/|\Omega_N|^{1/3}\to0$ implies that the smearing happens in a neighborhood of the boundary $\partial\Omega_N$ which has a negligible volume compared to $|\Omega_N|$. The second condition $\delta_N|\Omega_N|^{1/3}\to\ii$ ensures that the kinetic energy in the transition region stays negligible in the thermodynamic limit. 

\begin{remark}\rm 
The same result holds under the weaker condition that $\Omega_N$ has an $\eta$--regular boundary, which means that $|\partial\Omega_N+B_r|\leq C|\Omega_N|\eta\left(r|\Omega_N|^{-1/3}\right)$ for all $r\leq |\Omega_N|^{1/3}/C$, with $\eta(t)\to0$ when $t\to0^+$. The condition $\delta_N|\Omega_N|^{1/3}\to\ii$ is then replaced by $\delta_N^{-2}\eta(\delta_N|\Omega_N|^{-1/3})\to0$. 
\end{remark}

\subsection{The Local Density Approximation}

We are now able to state our main result. 

\begin{theorem}[Local Density Approximation]\label{thm:LDA}
Let $p>3$ and $0<\theta<1$ such that 
\begin{equation}
 2\leq p\theta\leq 1+\frac{p}{2}.
 \label{eq:hyp_p_theta}
\end{equation}
There exists a constant $C=C(p,\theta,q)$ such that 
\begin{multline}
 \left|E(\rho)- \int_{\R^3}e_{\rm UEG}\big(\rho(x)\big)\,dx\right|
 \leq \eps \int_{\R^3}\big(\rho(x)+\rho(x)^2\big)\,dx\\
 +\frac{C(1+\eps)}{\eps}\int_{\R^3}|\nabla\sqrt\rho(x)|^2\,dx+\frac{C}{\eps^{4p-1}}\int_{\R^3}|\nabla\rho^\theta(x)|^p\,dx
 \label{eq:LDA_main_estim}
\end{multline}
for every $\eps>0$ and every non-negative density $\rho\in L^1(\R^3)\cap L^2(\R^3)$ such that $\nabla \sqrt{\rho}\in L^2(\R^3)$ and $\nabla\rho^\theta\in L^p(\R^3)$.
\end{theorem}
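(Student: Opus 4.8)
The estimate~\eqref{eq:LDA_main_estim} is really two inequalities — an upper and a lower bound on $E(\rho)$ — and the plan is to prove each by partitioning $\R^3$ into cubes $\{Q_j\}$ of a common side length $\ell$, together with an auxiliary smearing width $\delta$ obeying $\delta\ll\ell$; both $\ell$ and $\delta$ are fixed only at the very end as (small) powers of $\eps$. Write $\bar\rho_j:=|Q_j|^{-1}\int_{Q_j}\rho$. Two preliminary facts carry most of the bookkeeping. First, since $p>3$, Morrey's embedding makes $\rho^\theta$ Hölder continuous, so the oscillation of $\rho^\theta$ over $Q_j$ is $\lesssim\ell^{1-3/p}\|\nabla\rho^\theta\|_{L^p(Q_j)}$, and via~\eqref{eq:hyp_p_theta} — which ties $\theta$ to the powers $1$, $\tfrac53$ and $2$ of $\rho$ that appear below — this controls the deviations $\int_{Q_j}|\rho-\bar\rho_j|^\alpha$ for the relevant exponents $\alpha$. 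Second, the regularity and growth of $e_{\rm UEG}$ from~\cite{LewLieSei-18} (continuity on $[0,\infty)$ and $0\le e_{\rm UEG}(t)+Ct\lesssim t^{5/3}\lesssim t+t^2$), combined with the oscillation bound and Hölder's inequality, produce a two-sided ``conversion lemma'',
\[
\Bigl|\,\sum_j e_{\rm UEG}(\bar\rho_j)\,|Q_j|-\int_{\R^3}e_{\rm UEG}(\rho)\,\Bigr|\;\le\;\eps\int_{\R^3}(\rho+\rho^2)+\frac{C}{\eps^{a}}\int_{\R^3}|\nabla\rho^\theta|^p ,
\]
for a suitable $a=a(p,\theta)$ after an internal Young-inequality optimization; this is exactly where~\eqref{eq:hyp_p_theta} is used.

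For the upper bound I would build a trial grand-canonical state as a Fock-space tensor product over the cubes. On each $Q_j$, the finite-volume construction underlying Theorem~\ref{thm:UEG_thermo_limit_quantum} (equivalently Proposition~\ref{prop:upper_bound}) provides a state supported in $Q_j$, with density equal to $\bar\rho_j$ away from a boundary layer of width $\delta$ and energy at most $e_{\rm UEG}(\bar\rho_j)|Q_j|$ plus a boundary error that is negligible once $\delta/\ell\to0$ and $\delta\ell\to\infty$. One then corrects the density on $Q_j$ from (essentially) $\bar\rho_j\1_{Q_j}$ to the true $\rho\1_{Q_j}$ — the one genuinely fiddly point of the upper bound — either by composing with a near-identity diffeomorphism of $Q_j$ or by invoking a quantitative continuity of $F_{\rm LL}$ under small density perturbations (from the \emph{a priori} section), at a cost controlled by $\int_{Q_j}|\nabla\sqrt\rho|^2$ and $\int_{Q_j}|\nabla\rho^\theta|^p$. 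The key simplification is that, because the direct term is subtracted in~\eqref{eq:def_E_quantum} and the trial state has no inter-cube correlations, its pair density across distinct cubes is exactly $\rho(x)\rho(y)$, so the inter-cube Coulomb energy cancels against the corresponding part of the direct term; only the intra-cube exchange–correlation contributions survive and are already counted. Summing over $j$ and applying the conversion lemma gives $E(\rho)\le\int_{\R^3}e_{\rm UEG}(\rho)+(\text{RHS of~\eqref{eq:LDA_main_estim}})$.

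The lower bound is where the real difficulty lies. Take a near-minimiser $\Gamma$ for $F_{\rm LL}(\rho)$ and localise it to each cube by conjugating its reduced density matrices (and the full Fock state) with a partition of unity $\{\theta_j\}$, $\sum_j\theta_j^2=1$, of transition width $\sim\delta$; the IMS identity costs a kinetic error $\lesssim\delta^{-2}\cdot(\text{boundary volume})\cdot\bar\rho_j$ per cube, hence $\lesssim(\delta\ell)^{-1}\int\rho$ in total, while the localised density $\theta_j^2\rho$ carries $|\nabla\sqrt{\theta_j^2\rho}|^2\lesssim|\nabla\theta_j|^2\rho+\theta_j^2|\nabla\sqrt\rho|^2$, producing a further $(\delta\ell)^{-1}\int\rho$ term and a $C\int|\nabla\sqrt\rho|^2$ term — the bounded overlap multiplicity of the partition being the origin of the constant in front of $\int|\nabla\sqrt\rho|^2$ in~\eqref{eq:LDA_main_estim}. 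For the interaction, split $\tfrac12\iint(\rho^{(2)}_\Gamma-\rho\otimes\rho)/|x-y|$ into intra- and inter-cube parts; the inter-cube part is bounded below by a Coulomb screening inequality in the spirit of Graf–Schenker (or of Onsager's lemma), the smallness of the resulting error being precisely what subtracting the direct term buys. What remains on each $Q_j$ is the local energy of a state with density $\theta_j^2\rho$, which is $\ge E_{Q_j}(\theta_j^2\rho)$, and the finite-volume lower bound underlying Theorem~\ref{thm:UEG_thermo_limit_quantum} gives $E_{Q_j}(\theta_j^2\rho)\ge e_{\rm UEG}(\bar\rho_j)|Q_j|-(\text{oscillation}+\text{finite-size error})$; summing and using the conversion lemma yields the matching lower bound.

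The main obstacle is the three-way tension in choosing $\ell$ and $\delta$: the kinetic-localisation, finite-size, and screening errors all want $\ell$ (and $\delta\ell$) large, whereas the oscillation errors and the conversion lemma want $\ell$ small, and $\delta$ must sit strictly in between so that both $\delta/\ell\to0$ and $\delta\ell\to\infty$ hold. Equalising all contributions forces $\ell$ and $\delta$ to be particular small powers of $\eps$ and, together with the internal optimization inside the conversion lemma and the cost of the Coulomb screening, is what produces the far-from-optimal factor $\eps^{-(4p-1)}$. Within this, the single most delicate estimate is making the inter-cube Coulomb bound have a \emph{genuinely negligible} error, rather than one of order $\ell^{-1}\int\rho$ that would swamp everything — and it is exactly here that the subtraction of the direct term in the definition~\eqref{eq:def_E_quantum} of $E(\rho)$ is indispensable.
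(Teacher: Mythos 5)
Your overall architecture (localize, compare to a local constant, invoke the UEG thermodynamic limit and the Lipschitz regularity of $e_{\rm UEG}$, then balance $\ell,\delta$ against $\eps$) matches the paper's, but there are genuine gaps at precisely the two points the paper identifies as the hardest. First, the upper bound: your trial state is a Fock-space tensor product over sharply cut cubes. If you cut sharply, the local densities $\rho\1_{Q_j}$ do not have square roots in $H^1$, so each local piece has infinite kinetic energy; if you smooth the cutoffs, the local states overlap and a tensor product of fermionic states with overlapping supports can neither be antisymmetrized nor made to reproduce the density $\sum_j\chi_j\rho$ exactly. The paper's resolution of this dilemma is its main new ingredient (Proposition~\ref{prop:upper_bound}): an \emph{incomplete} partition of unity with holes of width $\delta$ (so the local states have genuinely disjoint supports and can be antisymmetrized), with the exact density recovered only after averaging the whole tiling over translations, and with rotations and dilations averaged as well to tame the resulting Coulomb error down to $C\delta^2\log(\alpha^{-1})\int\rho^2$. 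You have no substitute for this construction, and the cancellation of inter-cube Coulomb terms against the direct term that you invoke is only valid for disjointly supported states.

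Second, the step ``correct the density on $Q_j$ from $\bar\rho_j$ to $\rho$'' is not something you can do by ``a near-identity diffeomorphism'' or by ``quantitative continuity of $F_{\rm LL}$'' --- no such continuity estimate is available at the required precision (establishing quantitative control of this kind is essentially what the theorem itself is about). The paper's actual mechanism is the rough subadditivity Lemma~\ref{lem:subadditivity_worse}: writing $\rho=\underline\rho+(\rho-\underline\rho)$ with $\underline\rho$ the \emph{minimum} (not the average) of $\rho$ on the tile, so that the correction is a nonnegative density, and using the convex combination $\Gamma=(1-\eps)\Gamma_1+\eps\Gamma_2$ of a state for $\underline\rho\chi$ and a state for $(\rho-\underline\rho)/\eps+\underline\rho$ times $\chi$. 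With your choice of the average $\bar\rho_j$ the difference $\rho-\bar\rho_j$ changes sign and this trick is unavailable; moreover the $\eps^{-1}D(\rho_2)$ error produced by this lemma is the true origin of the exponent $4p-1$, which your accounting attributes elsewhere. A lesser but real issue: the Graf--Schenker inequality, which you correctly identify as the engine of the lower bound, is only known for simplices, so the lower-bound tiling must consist of tetrahedra rather than cubes (and Onsager's lemma is not an adequate replacement, as it does not localize the interaction to the tiles).
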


The constant $C=C(p,\theta,q)$ in our estimate~\eqref{eq:LDA_main_estim} depends on the number of spin states $q$ ($q=2$ for electrons), in addition to the parameters $p$ and $\theta$. It diverges when $p\to 3^+$. If $p\to 3^+$ then we can take $\theta\to 5/6^-$. Our estimate therefore applies to densities $\rho$ with compact support, which vanish at the boundary of their support like $\delta(x)^a$ with $a>4/5$, where $\delta(x)={\rm d}(x,\partial\rho^{-1}(\{0\}))$. In particular, densities which vanish linearly are allowed. Our proof allows one to consider more singular densities, that is, to relax the constraint that $\theta p\leq 1+p/2$, but then the power of $\eps$ deteriorates. 

Our estimate~\eqref{eq:LDA_main_estim} is certainly not optimal and it is an interesting challenge to improve it. We conjecture that a similar inequality holds with $\rho+\rho^2$ replaced by $\rho^{4/3}+\rho^{5/3}$ which have the scaling of the Coulomb and kinetic energies, respectively. The higher power $\rho^2$ arises from the trial state used in our upper bound (Proposition~\ref{prop:upper_bound}) and it is used to control some errors appearing when merging quantum systems with overlapping supports. This is explained in Section~\ref{sec:upper_bound} below. Finally, the last gradient term in~\eqref{eq:LDA_main_estim} is used to control local variations of $\rho$ in $L^\ii$. One could expect gradient errors involving only $\int_{\R^3}|\nabla\sqrt\rho|^2$ and  $\int_{\R^3}|\nabla\rho^{1/3}|^2$ which are believed to arise in the gradient expansion of the uniform electron gas for, respectively, the Coulomb and kinetic energies.

One interesting case is when the density is given by a fixed function $\rho$ with $\int_{\R^3}\rho=1$, which is rescaled in the manner
$$\rho_N(x)=\rho(N^{-1/3}x).$$
After taking $\eps=N^{-1/12}$ in~\eqref{eq:LDA_main_estim} we obtain the following simple bound
\begin{multline}
 \left|E(\rho_N)- N\int_{\R^3}e_{\rm UEG}\big(\rho(x)\big)\,dx\right|\\
 \leq CN^{\frac{5}{12}}\int_{\R^3}|\nabla\sqrt\rho(x)|^2\,dx
+C N^{\frac{11}{12}}\int_{\R^3}\left(\rho(x)+\rho(x)^2+|\nabla\rho^\theta(x)|^p\right)\,dx.
 \label{eq:LDA_main_rescaled_density}
\end{multline}
It is conjectured~\cite{Kirzhnits-57,LanPer-80,LanMeh-83,LevPer-93,EngDre-11} that the next order in the expansion of $E(\rho_N)$ should involve the gradient correction to the kinetic energy
$$\int_{\R^3}\left|\nabla\sqrt{\rho_N}(x)\right|^2\,dx=N^{\frac13}\int_{\R^3}\left|\nabla\sqrt\rho(x)\right|^2\,dx$$
and the gradient correction to the Coulomb energy
$$\int_{\R^3}\left|\nabla\rho_N^{1/3}(x)\right|^2\,dx=N^{\frac13}\int_{\R^3}\left|\nabla\rho^{1/3}(x)\right|^2\,dx.$$
In particular, the next order should be proportional to $N^{1/3}$. It remains an open problem to establish this rigorously.

In the classical case where the kinetic energy is neglected, the limit of $E_{\rm cl}(\rho_N)/N$ was found in our previous work~\cite{LewLieSei-18}, but without a quantitative estimate on the remainder. We can give an estimate similar to~\eqref{eq:LDA_main_estim} in the classical case, with a lower power of $\eps$ in front of the gradient term. This is just a slight adaptation of the proof in~\cite{LewLieSei-18}, which is much easier than the quantum case. The argument is explained for completeness in Appendix~\ref{app:classical}. 

In the classical case the limit for $E_{\rm cl}(\rho_N)/N$ was later extended to Riesz interactions $|x|^{-s}$ and other dimensions $d\geq1$ in~\cite{CotPet-19}.  Although our result~\eqref{eq:LDA_main_estim} in the quantum case can probably be extended to other Riesz interactions by using ideas from~\cite{Fefferman-85,Hugues-85,Gregg-89,CotPet-19}, we only consider here the physically relevant 3D Coulomb case for shortness.

\begin{remark}[Canonical case]\rm 
We expect an inequality similar to~\eqref{eq:LDA_main_estim} for the (mixed) canonical version of $E(\rho)$ where $\int_{\R^3}\rho=N\in\mathbb{N}$ and $\Gamma_n=0$ for $n\neq N$. However our proof does not adapt in an obvious way to this case. 
\end{remark}

\begin{remark}[Exchange-correlation energy]\label{rmk:exchange_correlation}\rm 
In physics and chemistry, the LDA is usually expressed in terms of the \emph{exchange-correlation energy}. In the grand-canonical setting it is defined by
$$E_{\rm xc}(\rho)=E(\rho)-T(\rho),$$
with $T(\rho)$ the lowest possible kinetic energy~\eqref{eq:def_T}. The functional $T(\rho)$ is studied in Section~\ref{sec:a_priori} below, where it is proved that 
\begin{equation}
\left|T(\rho)-q^{-\frac23}c_{\rm TF}\int_{\R^3}\rho(x)^{\frac53}\,dx\right|
\leq \eps q^{-\frac23}\int_{\R^3}\rho(x)^{\frac53}\,dx+\frac{C}{\eps^{\frac{13}3}}\int_{\R^3}|\nabla\sqrt\rho(x)|^2\,dx
\label{eq:estim_T}
\end{equation}
with $c_{\rm TF}=3^{5/3}4^{1/3}\pi^{4/3}/5$ the Thomas-Fermi constant. The lower bound was derived by Nam~\cite{Nam-18} and we prove the missing upper bound (with a better power of $\eps$) in Theorem~\ref{thm:upper_bound_T} below. Actually, by following our proof of Theorem~\ref{thm:LDA} (simply discarding the Coulomb interaction) we can also prove a lower bound on $T(\rho)$, with an error similar to the right side of~\eqref{eq:LDA_main_estim} but with a smaller power of $\eps$ in front of $|\nabla\rho^\theta|^p$. This provides the following estimate on the exchange-correlation energy
\begin{multline}
 \left|E_{\rm xc}(\rho)- \int_{\R^3}e_{\rm UEG}\big(\rho(x)\big)\,dx+q^{-\frac23}c_{\rm TF}\int_{\R^3}\rho(x)^{\frac53}\,dx\right|\\
 \leq \eps \int_{\R^3}\big(\rho(x)+\rho(x)^2\big)\,dx
 +\frac{C(1+\eps)}{\eps}\int_{\R^3}|\nabla\sqrt\rho(x)|^2\,dx\\+\frac{C}{\eps^{4p-1}}\int_{\R^3}|\nabla\rho^\theta(x)|^p\,dx.
 \label{eq:LDA_exchange_correlation}
\end{multline}
For a rescaled density $\rho_N(x)=\rho(x/N^{1/3})$ we obtain the same rate of convergence $N^{11/12}$ as in~\eqref{eq:LDA_main_rescaled_density}. 
\end{remark}

\begin{remark}[Local Spin Density Approximation]\label{rmk:LSDA}\rm 
In practice, it is often convenient to not fix the total density but, rather, the density of each spin component
\begin{multline*}
\rho_\sigma(x)=\\
\sum_{n\geq1}n \sum_{\substack{\sigma_2,...,\sigma_n\\ \in\{1,...,q\}}}\int_{\R^{3(n-1)}}
\Gamma_n(x,\sigma,x_2,...,x_n,\sigma_n;x,\sigma,x_2,...,x_n,\sigma_n)dx_2\cdots dx_n 
\end{multline*}
for $\sigma\in\{1,...,q\}$. Similarly as in Theorem~\ref{thm:UEG_thermo_limit_quantum} one can define the corresponding spin-polarized UEG energy $e_{\rm UEG}(\rho_1,...,\rho_q)$ of the uniform electron gas where the electrons of spin $\sigma$ are assumed to have the constant density $\rho_\sigma$. By following the arguments in this paper, one can then prove the estimate similar to~\eqref{eq:LDA_main_estim}
\begin{multline}
 \left|E(\rho_1,...,\rho_q)- \int_{\R^3}e_{\rm UEG}\big(\rho_1(x),...,\rho_q(x)\big)\,dx\right|
 \leq \eps \int_{\R^3}\big(\rho(x)+\rho(x)^2\big)\,dx\\
 +\frac{C(1+\eps)}{\eps}\int_{\R^3}|\nabla\sqrt\rho(x)|^2\,dx+\frac{C}{\eps^{4p-1}}\int_{\R^3}|\nabla\rho^\theta(x)|^p\,dx.
 \label{eq:LDA_main_estim_with_spin}
\end{multline}
It is only for simplicity of notation that we work with the total density $\rho=\sum_{\sigma=1}^q\rho_\sigma$. 
\end{remark}

\section{A priori estimates on $T(\rho)$ and $E(\rho)$}\label{sec:a_priori}

Lower bounds on $E(\rho)$ in~\eqref{eq:def_E_quantum} are well known and will be recalled below. Upper bounds are somewhat difficult to derive due to the constraint that the quantum states considered need to have the exact given density $\rho$. In this section we prove an upper bound on the best kinetic energy and use it to derive an upper bound on $E(\rho)$. Because our bounds are of independent interest we work in this section in any dimension $d\geq1$. First we quickly recall the known lower bounds.

\subsection{Known lower bounds}

We recall that the Lieb-Thirring inequality~\cite{LieThi-75,LieThi-76,LieSei-09} states that there exists a positive constant $c_{\rm LT}=c_{\rm LT}(d)>0$ such that 
\begin{equation}
\tr(-\Delta)\gamma\geq q^{-\frac2d}c_{\rm LT}\int_{\R^d} \rho_\gamma(x)^{1+\frac2d}\,dx
\label{eq:Lieb-Thirring}
\end{equation}
for every self-adjoint operator $\gamma$ on $L^2(\R^d,\C^q)$ such that $0\leq\gamma\leq1$. The best constant $c_{\rm LT}$ is unknown but has been conjectured to be the semi-classical constant 
\begin{equation}
c_{\rm TF}=\frac{ 4\pi^2 d}{(d+2)} \left( \dfrac{d}{| \mathbb{S}^{d-1} |} \right)^{\frac2d}
 \label{eq:c_TF}
\end{equation}
in dimension $d\geq3$. In~\cite{Nam-18}, Nam has proved that 
\begin{equation}
\tr(-\Delta)\gamma\geq q^{-\frac2d}c_{\rm TF}\left(1-\eps\right)\int_{\R^d}\rho_\gamma(x)^{1+\frac2d}dx
 -\frac{\kappa}{\eps^{3+\frac4d}}\int_{\R^d}|\nabla\sqrt{\rho_\gamma}(x)|^2\,dx
 \label{eq:lower_bound_Nam}
\end{equation}
for every $\eps>0$ and some constant $\kappa=\kappa(d)$, in all space dimensions $d\geq1$.

We also recall the Hoffmann-Ostenhof inequality~\cite{Hof-77} which states that 
\begin{equation}
\tr(-\Delta)\gamma\geq \int_{\R^d} \left|\nabla\sqrt{\rho_\gamma}(x)\right|^2\,dx
\label{eq:Hoffmann-Ostenhof}
\end{equation}
and always imposes that $\sqrt{\rho}\in H^1(\R^d)$. The inequality~\eqref{eq:Hoffmann-Ostenhof} does not require the fermionic constraint $0\leq\gamma\leq1$.

The Lieb-Oxford inequality~\cite{Lieb-79,LieOxf-80,ChaHan-99,LieSei-09} states that the total Coulomb energy is bounded from below by
\begin{multline}
\sum_{n=1}^\ii \tr_{\gH^n}\Bigg(\sum_{1\leq j<k\leq n}\frac{1}{|x_j-x_k|}\Bigg)\Gamma_n\Bigg\}\\
\geq \frac12\int_{\R^3}\int_{\R^3}\frac{\rho(x)\rho(y)}{|x-y|}dx\,dy-1.64\int_{\R^3}\rho(x)^{\frac43}\,dx
\label{eq:Lieb-Oxford}
\end{multline}
where $\Gamma=\{\Gamma_n\}$ is a grand-canonical quantum state satisfying the conditions in~\eqref{eq:def_E_quantum}. Inspired by~\cite{BenBleLos-12}, this bound was recently generalized in~\cite{LewLie-15} to
\begin{multline}
\sum_{n=1}^\ii \tr_{\gH^n}\Bigg(\sum_{1\leq j<k\leq n}\frac{1}{|x_j-x_k|}\Bigg)\Gamma_n\Bigg\}\\
\geq \frac12\int_{\R^3}\int_{\R^3}\frac{\rho(x)\rho(y)}{|x-y|}dx\,dy-\left(\frac35 \left(\frac{9\pi}{2}\right)^{\frac13}+\eps\right)\int_{\R^3}\rho(x)^{\frac43}\,dx\\-\frac{0.001206}{\eps^3}\int_{\R^3}|\nabla\rho(x)|\,dx
\label{eq:Lieb-Oxford-gradient}
\end{multline}
but the constant $(3/5) ({9\pi}/{2})^{1/3}\simeq 1.4508$ is not expected to be the optimal Lieb-Oxford constant. 

Using~\eqref{eq:Lieb-Thirring} together with~\eqref{eq:Lieb-Oxford}, we obtain the following.

\begin{corollary}[Lower bound on $E(\rho)$]\label{cor:lower_bound_E_rho}
We have 
\begin{equation}
E(\rho)\geq q^{-\frac23}c_{\rm LT}\int_{\R^3} \rho(x)^{\frac53}\,dx-1.64\int_{\R^3}\rho(x)^{\frac43}\,dx
\label{eq:lower_bound_E}
\end{equation}
for every $\rho\geq0$ such that $\sqrt{\rho}\in H^1(\R^d)$.
\end{corollary}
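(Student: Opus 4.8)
The plan is to combine the two lower bounds stated just above: the Lieb--Thirring inequality~\eqref{eq:Lieb-Thirring} for the kinetic part and the Lieb--Oxford inequality~\eqref{eq:Lieb-Oxford} for the Coulomb part. Concretely, I would start from an arbitrary admissible grand-canonical state $\Gamma=\{\Gamma_n\}$ with $\sum_{n\ge1}\rho_{\Gamma_n}=\rho$ appearing in the definition of $F_{\rm LL}(\rho)$, and write its total energy as the sum of a kinetic term $\sum_{n\ge1}\tr_{\gH^n}\big(-\sum_j\Delta_{x_j}\big)\Gamma_n$ and a Coulomb term $\sum_{n\ge1}\tr_{\gH^n}\big(\sum_{j<k}|x_j-x_k|^{-1}\big)\Gamma_n$.

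For the kinetic term, I would introduce the one-particle density matrix $\gamma:=\sum_{n\ge1}\sum_{j=1}^n \gamma^{(j)}_{\Gamma_n}$ (the partial trace summed over particles and number sectors), which satisfies $0\le\gamma\le1$ on $L^2(\R^3,\C^q)$ because $\Gamma$ is a fermionic state, and whose density is exactly $\rho_\gamma=\rho$; then the kinetic term equals $\tr(-\Delta)\gamma$ and~\eqref{eq:Lieb-Thirring} gives the lower bound $q^{-2/3}c_{\rm LT}\int_{\R^3}\rho^{5/3}$. For the Coulomb term, I would apply~\eqref{eq:Lieb-Oxford} directly, which bounds it below by $\tfrac12\iint \rho(x)\rho(y)|x-y|^{-1}\,dx\,dy - 1.64\int_{\R^3}\rho^{4/3}$. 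Adding the two bounds, the direct term $\tfrac12\iint\rho\rho/|x-y|$ appears on the right, so subtracting it (as in the definition~\eqref{eq:def_E_quantum} of $E(\rho)$) yields $F_{\rm LL}(\rho)-\tfrac12\iint\rho\rho/|x-y|\ge q^{-2/3}c_{\rm LT}\int\rho^{5/3}-1.64\int\rho^{4/3}$ for this particular state; taking the infimum over all admissible $\Gamma$ gives~\eqref{eq:lower_bound_E}. The finiteness/well-definedness is guaranteed by the hypothesis $\sqrt\rho\in H^1$ together with the Hoffmann--Ostenhof inequality~\eqref{eq:Hoffmann-Ostenhof}, which also shows the kinetic energy of any admissible state is finite.

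The only genuinely delicate point is the bookkeeping for the one-particle density matrix in the grand-canonical setting: one must check that the sum defining $\gamma$ converges to a bounded operator with $0\le\gamma\le 1$ (this is where fermionic antisymmetry and $\sum_n\tr\Gamma_n=1$ enter, via the standard fact that the one-body density matrix of a fermionic Fock-space state is bounded by the identity), and that its trace against $-\Delta$ reproduces the total kinetic energy. This is classical (it is essentially the statement that the grand-canonical kinetic energy decouples additively into a one-body functional), so I would invoke it rather than reprove it. Everything else is an immediate concatenation of the quoted inequalities, so I expect no real obstacle beyond this routine reduction.
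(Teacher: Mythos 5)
Your proposal is correct and is exactly the argument the paper intends: the corollary is stated as an immediate consequence of combining the Lieb--Thirring bound~\eqref{eq:Lieb-Thirring} (applied to the one-particle density matrix of the admissible grand-canonical state, which satisfies $0\le\gamma\le1$ by fermionic antisymmetry) with the Lieb--Oxford bound~\eqref{eq:Lieb-Oxford}, and then subtracting the direct term as in~\eqref{eq:def_E_quantum}. The bookkeeping for the grand-canonical one-body density matrix that you flag is indeed the only point requiring care, and it is the standard fact the paper implicitly invokes.
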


The constants can be improved at the expense of adding gradient corrections, using~\eqref{eq:lower_bound_Nam} and~\eqref{eq:Lieb-Oxford-gradient}.

\subsection{Upper bound on the best kinetic energy}\label{sec:kinetic}
Let us recall that the lowest possible kinetic energy of a fixed density $\rho\in L^1(\R^d,\R_+)$ with $\sqrt\rho\in H^1(\R^d)$ reads
$$T(\rho):=\min_{\substack{0\leq \gamma=\gamma^*\leq 1\\ \rho_\gamma=\rho}}\tr(-\Delta)\gamma.$$
In~\cite{LewLieSei-18}, we have shown that for $\rho\leq1$,
$$T(\rho)\leq \int_{\R^d}|\nabla\sqrt\rho|^2+q^{-\frac2d}c_{\rm TF}\int_{\R^d}\rho$$
by using the trial state
$$\gamma=\sqrt{\rho(x)}\,\1\left(-\Delta\leq \frac{d+2}d c_{\rm TF}q^{-\frac2d}\right)\sqrt{\rho(x)}.$$
Our goal in this section is to prove a similar bound without the assumption that $\rho\leq1$. Coherent states~\cite{Lieb-81b} can usually give good bounds on the kinetic energy but they do not preserve the density. The main difficulty here is to construct a state having the exact given density $\rho$. The next result says that the semi-classical approximation to the kinetic energy is an upper bound to the exact $T(\rho)$, up to some gradient corrections. 

\begin{theorem}[Upper bound on the best kinetic energy]\label{thm:upper_bound_T}
There are two constants $\kappa_1,\kappa_2>0$ depending only on the space dimension $d$ such that 
\begin{equation}
 T(\rho)\leq q^{-\frac2d}c_{\rm TF}\left(1+\kappa_1\eps\right)\int_{\R^d}\rho(x)^{1+\frac2d}dx
 +\frac{\kappa_2(1+\sqrt\eps)^2}{\eps}\int_{\R^d}|\nabla\sqrt{\rho}(x)|^2\,dx,
 \label{eq:upper_bound_T_kappa}
\end{equation}
for every $\rho\in L^1(\R^d,\R_+)$ with $\sqrt{\rho}\in H^1(\R^d)$ and every $\eps>0$, where $c_{\rm TF}$ is the Thomas-Fermi constant~\eqref{eq:c_TF}.
\end{theorem}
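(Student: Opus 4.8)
The plan is to build an explicit fermionic one-body density matrix $\gamma$ with $\rho_\gamma=\rho$ exactly, whose kinetic energy is controlled by the right side of~\eqref{eq:upper_bound_T_kappa}. The guiding idea is the classical recipe $\gamma(x,y)=\sqrt{\rho(x)}\,K(x,y)\,\sqrt{\rho(y)}$ where $K$ is (pointwise in $x$) the projector onto low momenta, but with the Fermi momentum $\mu(x)$ allowed to depend on $x$ so as to match a \emph{variable} density. Concretely, I would take
$$\gamma = \sqrt{\rho(x)}\; g\!\left(\frac{-\Delta}{\mu(x)^2}\right)\sqrt{\rho(x)}$$
for a fixed smooth cutoff profile $g:[0,\infty)\to[0,1]$, $g=1$ near $0$ and compactly supported, with $\mu(x)$ chosen so that the diagonal of the operator equals $\rho(x)$. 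Since the symbol gives, to leading order, diagonal value $\rho(x)\cdot \mu(x)^d\,(2\pi)^{-d}\int g(|p|^2)\,dp$, one picks $\mu(x)=c\,\rho(x)^{1/d}$ for the appropriate constant $c=c(g,q)$; the factor $q^{-2/d}$ enters through the spin multiplicity. Because $\mu$ depends on $x$ this Weyl-quantized object is not \emph{exactly} density-$\rho$ and not \emph{exactly} a projection, so two corrections are needed: (i) symmetrize, e.g. replace $\gamma$ by $\tfrac12(\gamma+\gamma^*)$ or use the Weyl-ordered quantization, to make it self-adjoint; (ii) correct the density. For (ii) the standard trick is to conjugate by multiplication by a function: set $\tilde\gamma = m(x)\gamma m(x)$ with $m$ close to $1$, chosen so that $\rho_{\tilde\gamma}=\rho$; the needed $m$ exists because $\rho_\gamma$ is already $\rho$ up to a small relative error controlled by $\nabla\mu$, hence by $\nabla\sqrt\rho$.

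The kinetic energy is then expanded as $\tr(-\Delta)\tilde\gamma = \tr(-\Delta)\gamma + (\text{corrections from }m)$. The main term $\tr(-\Delta)\gamma$ splits, via the commutator identity $-\Delta(\sqrt\rho\, f\, \sqrt\rho) $, into (a) the semiclassical bulk term $q^{-2/d}c_{\rm TF}\int\rho^{1+2/d}$ coming from $\sqrt{\rho}(-\Delta)$ acting through $g$ evaluated with frozen $\mu$, which by the choice of $g$ and $c$ matches the Thomas--Fermi constant up to $O(\eps)$ if $g$ is taken $\eps$-close to the sharp indicator $\1_{[0,1]}$ (this is exactly where the $\kappa_1\eps\int\rho^{1+2/d}$ slack and the divergence of $\kappa_2$ as $\eps\to0$ originate); plus (b) a von~Weizsäcker-type term $\int|\nabla\sqrt\rho|^2\cdot g(\cdots)$, bounded by $C\int|\nabla\sqrt\rho|^2$; plus (c) cross terms between $\nabla\sqrt\rho$ and the $x$-dependence of $\mu$, which are again estimated by $\int|\nabla\sqrt\rho|^2$ after Cauchy--Schwarz, since $\nabla\mu\sim \rho^{1/d-1/2}\nabla\sqrt\rho$ in the relevant combination. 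The corrections from $m$ contribute terms $\int|\nabla m|^2\rho$-type quantities which, because $1-m = O(|\nabla\sqrt\rho|/\mu)$ pointwise, are also absorbed into $\eps^{-1}(1+\sqrt\eps)^2\int|\nabla\sqrt\rho|^2$; the explicit $(1+\sqrt\eps)^2/\eps$ shape comes from balancing the two places $\eps$ enters (sharpness of $g$ versus width of the transition layer of $g$).

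The step I expect to be the genuine obstacle is making the density-correction map $\gamma\mapsto m(x)\gamma m(x)$ both \emph{well-defined} (one must solve $\rho_{m\gamma m}=\rho$ for $m$, a mildly nonlinear fixed-point problem) and \emph{quantitative} (one must bound $\nabla m$, hence $m$ in $H^1$, by $\int|\nabla\sqrt\rho|^2$ without assuming $\rho$ bounded below or above). This is delicate precisely because we have dropped the hypothesis $\rho\le 1$ used in the earlier argument of~\cite{LewLieSei-18}, so one can no longer rely on the global projector $\1(-\Delta\le\text{const})$; the Fermi momentum genuinely varies over all scales. A clean way around it is to avoid solving for $m$ exactly and instead iterate: the residual density $\rho-\rho_\gamma$ is small in $L^1$ relative to $\rho$, add a small correction operator supported on the same local momentum scale (another $\sqrt{\rho}\,\tilde g(-\Delta/\mu^2)\sqrt\rho$ with an $O(\eps)$ coefficient), and show the geometric series of corrections converges with total kinetic cost $O(\eps^{-1}\int|\nabla\sqrt\rho|^2)$. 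One must also verify the fermionic constraint $0\le\tilde\gamma\le 1$ survives all these perturbations — this follows if $g\le 1$ with a margin and the corrections are small in operator norm, which can be arranged by choosing the profile $g$ to stay below $1$ by a fixed gap on its plateau.
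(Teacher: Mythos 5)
Your strategy---a position-dependent Fermi momentum $\mu(x)=c\rho(x)^{1/d}$ quantized into $\sqrt{\rho}\,g(-\Delta/\mu(x)^2)\sqrt{\rho}$, followed by a density-restoring conjugation $\gamma\mapsto m\gamma m$---has genuine gaps at exactly the points you flag, and they are not repairable under the stated hypotheses. First, the sole assumption is $\sqrt\rho\in H^1(\R^d)$: the symbol $\mu(x)$ then has no pointwise regularity whatsoever (it need not be continuous, bounded, or bounded away from zero), so the semiclassical symbol calculus you invoke---the ``leading order diagonal value,'' the expansion of $\tr(-\Delta)\gamma$ into bulk plus von~Weizs\"acker plus cross terms, the claim $1-m=O(|\nabla\sqrt\rho|/\mu)$ pointwise---is unavailable. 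Second, the constraint $0\le\tilde\gamma\le1$ does not follow from ``$g\le 1$ with a margin plus small perturbations,'' because neither the symmetrization, nor the conjugation by $m$, nor your iterated correction series is small in \emph{operator} norm; they are controlled at best in energy norms, which is not enough to preserve an operator inequality. Third, the fixed-point problem $\rho_{m\gamma m}=\rho$ with quantitative $H^1$ control on $m$ by $\int|\nabla\sqrt\rho|^2$ alone is left entirely open, and your proposed geometric-series iteration is not shown to converge or to cost only $O(\eps^{-1}\int|\nabla\sqrt\rho|^2)$. As written, the proposal is a program, not a proof.

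The paper sidesteps every one of these difficulties by superposing operators with \emph{constant} Fermi momentum rather than quantizing a variable one. Using the smooth layer-cake identity $\rho(x)=\int_0^\ii\eta(t/\rho(x))\,dt$ for a profile $\eta\ge0$ with $\int_0^\ii\eta=1$ and $\int_0^\ii\eta(t)\,dt/t\le1$, it takes
\begin{equation*}
\gamma=\int_0^\ii \sqrt{\eta\left(\tfrac{t}{\rho(x)}\right)}\;\1\!\left(-\Delta\leq \tfrac{d+2}{d}c_{\rm TF}\,t^{2/d}\right)\sqrt{\eta\left(\tfrac{t}{\rho(x)}\right)}\;\frac{dt}{t}.
\end{equation*}
The density is then \emph{exactly} $\rho$ by the layer-cake identity (no fixed point needed), the bound $0\le\gamma\le1$ follows from $\int_0^\ii\eta(t)\,dt/t\le1$ (no operator-norm perturbation argument needed), and the kinetic energy splits exactly into $\left(\int_0^\ii t^2\eta'(t)^2/\eta(t)\,dt\right)\int_{\R^d}|\nabla\sqrt\rho|^2$ plus $c_{\rm TF}\left(\int_0^\ii\eta(t)t^{2/d}dt\right)\int_{\R^d}\rho^{1+2/d}$. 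Choosing $\eta$ supported in an interval of width $\sim\eps$ around $t=1$ yields $(1+O(\eps))$ for the Thomas--Fermi coefficient and $O(\eps^{-2})$ for the gradient coefficient, which is the theorem (after renaming $\eps^2\to\eps$). If you want to salvage your approach, the lesson is that the $x$-dependence should be put in the weights $\sqrt{\eta(t/\rho(x))}$ multiplying fixed spectral projections, not in the spectral cutoff itself.
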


Note that the gradient correction in~\eqref{eq:upper_bound_T_kappa} has a better behavior in $\eps$ than in Nam's lower bound~\eqref{eq:lower_bound_Nam}. 

In dimension $d=1$, March and Young have given the proof of a better estimate without the parameter $\eps$, in~\cite[Eq. (9)]{MarYou-58}:
$$T(\rho)\leq q^{-2}c_{\rm TF}\int_\R \rho(x)^3\,dx+\int_\R\left|\big(\sqrt{\rho}\big)'(x)\right|^2\,dx.$$
In the same paper they also state a result in 3D (for a constant $c>c_{\rm TF}$) but the proof has a mistake. This was mentioned as a conjecture in~\cite[Sec. 5.B]{Lieb-83b}. Our result~\eqref{eq:upper_bound_T_kappa} can therefore be seen as a solution to the March-Young problem. We conjecture that a similar bound holds without the parameter $\eps$ in dimension $d=2,3$ as well. 

\begin{remark}[Explicit constants in 3D]\rm 
In dimension $d=3$ one can take
$$\kappa_1=1,\qquad \kappa_2=48 \qquad\text{in~\eqref{eq:upper_bound_T_kappa}.}$$
These constants are not optimal and they are only displayed for concreteness. Our proof allows to slightly improve the constants under the assumption that $\eps$ is small enough. For instance for 
$\eps\leq 1$, we have the better inequality
\begin{equation}
 T(\rho)\leq q^{-\frac23}c_{\rm TF}\left(1+\frac{\eps}{15}\right)\int_{\R^3}\rho(x)^{\frac53}dx
 +\frac{19}{\eps}\int_{\R^3}|\nabla\sqrt{\rho}(x)|^2\,dx.
 \label{eq:upper_bound_T_eps}
\end{equation}
\end{remark}

\begin{proof}[Proof of Theorem~\ref{thm:upper_bound_T}]
For simplicity we only write the proof in the no-spin case $q=1$. Recall that the free Fermi sea 
$$P_r=\1\left(-\Delta\leq \frac{d+2}d c_{\rm TF}r^{2/d}\right)$$ 
has the constant density $\rho_{P_r}=r$ and the constant kinetic energy density $c_{\rm TF}r^{1+2/d}$. In particular, if we take 
$$\gamma=\sqrt{f(x)}\,\1\left(-\Delta\leq \frac{d+2}dc_{\rm TF}r^{2/d}\right)\sqrt{f(x)}$$
for some $f\geq0$, we obtain
\begin{equation}
\rho_\gamma=rf(x),\qquad \tr(-\Delta)\gamma=r\int_{\R^d}|\nabla \sqrt{f}(x)|^2\,dx+c_{\rm TF}r^{1+\frac2d}\,\int_{\R^d}f(x)\,dx.
\label{eq:computation_simple_trial_state}
\end{equation}
See for instance~\cite[Sec.~5]{LewLieSei-18} for details.
In addition, we have in the sense of operators $0\leq\gamma\leq f(x)$, hence $\gamma$ is a fermionic one-particle density matrix under the additional condition that $f\leq1$. 

Let now $\eta\geq0$ be a smooth non-negative function such that 
\begin{equation}
\int_0^\ii\eta(t)\,dt=1,\qquad \int_0^\ii\eta(t)\,\frac{dt}{t}\leq1.
\label{eq:cond_eta}
\end{equation}
Using the smooth layer cake principle
$$\rho(x)=\int_0^\ii \eta\left(\frac{t}{\rho(x)}\right)\,dt$$
we introduce the trial state
$$\gamma=\int_0^\ii  \sqrt{\eta\left(\frac{t}{\rho(x)}\right)}\;\1\left(-\Delta\leq \frac{d+2}d c_{\rm TF}\,t^{2/d}\right)\;\sqrt{\eta\left(\frac{t}{\rho(x)}\right)}\;\frac{dt}{t}.$$
In the sense of operators, we have 
$$0\leq \gamma\leq \int_0^\ii  \eta\left(\frac{t}{\rho(x)}\right)\;\frac{dt}{t}=\int_0^\ii\eta(t)\,\frac{dt}{t}\leq1.$$
In addition, $\gamma$ has the required density
$$\rho_\gamma(x)=\int_0^\ii  \eta\left(\frac{t}{\rho(x)}\right)\,dt=\rho(x).$$
Hence $\gamma$ is admissible and it can be used to get an upper bound on $T(\rho)$. 
From~\eqref{eq:computation_simple_trial_state}, its kinetic energy is
\begin{multline*}
\tr(-\Delta)\gamma=\int_{\R^d}dx\int_0^\ii dt  \left|\nabla_x\sqrt{\eta\left(\frac{t}{\rho(x)}\right)}\right|^2\\
+c_{\rm TF}\int_{\R^d}\rho(x)^{1+\frac2d}dx\int_0^\ii \eta(t)t^{\frac2d}\,dt.
\end{multline*}
Note that
\begin{align*}
\int_{\R^d}dx\int_0^\ii dt  \left|\nabla_x\sqrt{\eta\left(\frac{t}{\rho(x)}\right)}\right|^2=& \int_{\R^d}dx\int_0^\ii dt  \frac{\left|\nabla_x\eta\left(\frac{t}{\rho(x)}\right)\right|^2}{4\eta\left(\frac{t}{\rho(x)}\right)}\\
=& \int_{\R^d}dx\frac{|\nabla\rho(x)|^2}{4\rho(x)^{4}}\int_0^\ii t^2dt  \frac{\left|\eta'\left(\frac{t}{\rho(x)}\right)\right|^2}{\eta\left(\frac{t}{\rho(x)}\right)}\\
=& \int_{\R^d}|\nabla\sqrt{\rho}(x)|^2\,dx\int_0^\ii \frac{t^2\eta'(t)^2}{\eta(t)}\,dt.
\end{align*}
Hence we have proved that 
\begin{multline*}
\tr(-\Delta)\gamma=\int_{\R^d}|\nabla\sqrt{\rho}(x)|^2\,dx\int_0^\ii \frac{t^2\eta'(t)^2}{\eta(t)}\,dt\\
+c_{\rm TF}\int_{\R^d}\rho(x)^{1+\frac2d}dx\int_0^\ii \eta(t)t^{\frac2d}\,dt
\end{multline*}
for all $\eta\geq0$ satisfying the two constraints~\eqref{eq:cond_eta}.
The smallest constant we can get in front of the $\rho^{1+2/d}$ term is $c_{\rm TF}$, by concentrating $\eta$ at the point $t=1$, but this makes the other term blow up. If we fix 
$$\int_0^\ii \eta(t)t^{\frac2d}\,dt=1+\eps$$
then the best constant we can get in front of the gradient term is given by the variational problem
$$C(\eps):=\inf_{\substack{\eta\geq0\\ \int_0^\ii\eta=1\\ \int_0^\ii\eta/t\leq 1\\ \int_0^\ii t^{2/d}\eta\leq 1+\eps}}\int_0^\ii \frac{t^2\eta'(t)^2}{\eta(t)}\,dt.$$
We claim that $C(\eps)\leq {\rm const.}(1+\eps^{-1})$ for $\eps$ small enough, which we prove by an appropriate choice of $\eta$. 

Let us first take, for instance,
\begin{equation}
\eta_\eps(t)=\frac{3}{2\eps^3}(t-1)^2\1(1\leq t\leq 1+\eps)+\frac{3}{2\eps^3}(1+2\eps-t)^2\1(1+\eps\leq t\leq 1+2\eps).
\label{eq:choice_eta_1}
\end{equation}
Then $\int \eta_\eps=1$ and $\int \eta_\eps(t)/t\,dt\leq1$ since $\eta_\eps$ is supported on $[1,\ii)$. 
Using the simple bounds
$$\int_0^\ii \eta(t)t^{\frac2d}\,dt\leq (1+2\eps)^{2/d}$$
and 
$$\int_0^\ii \frac{t^2\eta'(t)^2}{\eta(t)}\,dt\leq (1+2\eps)^2\int_0^\ii \frac{\eta'(t)^2}{\eta(t)}\,dt=\frac{12(1+2\eps)^2}{\eps^2},$$
we obtain (after changing $\eps$ into $\eps/2$) 
\begin{equation}
 T(\rho)\leq c_{\rm TF}(1+\eps)^{\frac2d}\int_{\R^d}\rho(x)^{1+\frac2d}dx
 +\frac{48(1+\eps)^2}{\eps^2}\int_{\R^d}|\nabla\sqrt{\rho}(x)|^2\,dx,\qquad \forall \eps>0.
 \label{eq:upper_bound_T}
\end{equation}

The behavior of the correction in front of the semi-classical term $c_{\rm TF}$ is not optimal for small $\eps$. It can be replaced by $1+\kappa_1\eps^2$, for $\eps\leq1$. To see this we slightly translate the function~\eqref{eq:choice_eta_1} to the left by an amount $-\eps b$ and introduce 
\begin{multline}
\eta_{\eps,b}(t)=\frac{3}{2\eps^3}(t-1+\eps b)^2\1(1-\eps b\leq t\leq 1+(1-b)\eps)\\+\frac{3}{2\eps^3}(1+(2-b)\eps-t)^2\1(1-\eps b+\eps\leq t\leq 1+(2-b)\eps).
\label{eq:choice_eta_2}
\end{multline}
Then we have
\begin{multline*}
\int_0^\ii\eta_{\eps,b}(t)\frac{dt}{t}=1 + (b-1 )\eps  + \left(\frac{11}{10} - 2 b + b^2\right) \eps^2\\
+ \left(-\frac{13}{10} + \frac{33}{10}\eps - 3 b^2 + b^3\right) \eps^3+O(\eps^4) 
\end{multline*}
and
\begin{multline*}
\int_0^\ii\eta_{\eps,b}(t)t^{\frac2d}\,dt=1 - \frac2d (b-1) \eps\\ + \frac{1}{10d^2}\big(22 - 40 b + 20 b^2 - 11 d + 20 b d - 10 b^2 d\big) \eps^2+O(\eps^3). 
\end{multline*}
The unique $b_\eps$ such that $\int_0^\ii\eta_{\eps,b_\eps}(t)\frac{dt}{t}=1$ satisfies
$$b_\eps=1-\frac{\eps}{10}-\frac{3}{350}\eps^3+O(\eps^4)$$
and for this $b_\eps$ we have
$$\int_0^\ii\eta_{\eps,b_\eps}(t)t^{\frac2d}\,dt=1 +\frac{2+d}{10d^2}\eps^2+O(\eps^4),$$
$$\int_0^\ii \frac{t^2\eta_{\eps,b_\eps}'(t)^2}{\eta_{\eps,b_\eps}(t)}\,dt=\frac{12}{\eps^2}+O(1).$$
This is how we can get~\eqref{eq:upper_bound_T_eps} for $\eps$ small enough (after replacing $\eps^2$ by $\eps$). 
\end{proof}

\begin{remark}\rm 
In the 3D case we can take for instance $b=1-\eps/10-4\eps^3/350$. One can then verify that 
$$\int_0^\ii\eta_{\eps,b}(t)\frac{dt}{t}\leq 1,\quad \int_0^\ii\eta_{\eps,b}(t)t^{\frac23}\,dt\leq 1+\frac{\eps^2}{15}$$
and
$$\int_0^\ii \frac{t^2\eta_{\eps,b_\eps}'(t)^2}{\eta_{\eps,b_\eps}(t)}\,dt\leq\frac{19}{\eps^2}$$
for all $\eps\leq1$. Hence
\begin{equation}
 T(\rho)\leq c_{\rm TF}\left(1+\frac{\eps^2}{15}\right)\int_{\R^3}\rho(x)^{\frac53}dx
 +\frac{19}{\eps^2}\int_{\R^3}|\nabla\sqrt{\rho}(x)|^2\,dx,\qquad\forall \eps\leq1.
 \label{eq:upper_bound_T_1}
\end{equation}
Combining with~\eqref{eq:upper_bound_T}, we find the estimate~\eqref{eq:upper_bound_T_kappa} for $\kappa_1=1$ and $\kappa_2=48$.
\end{remark}

\subsection{Upper bound on $E(\rho)$}
It is well known that any fermionic one-particle density matrix $\gamma$ (i.e., an operator satisfying $0\leq\gamma=\gamma^*\leq1$) is representable by a quasi-free state $\Gamma_\gamma$ in Fock space~\cite{BacLieSol-94}. The two-particle density matrix of such a state is given by Wick's formula
\begin{multline}
\Gamma_\gamma^{(2)}(x_1,\sigma_1,x_2,\sigma_2;y_1,\sigma'_1,y_2,\sigma'_2)\\
=\gamma(x_1,\sigma_1;y_1,\sigma'_1)\gamma(x_2,\sigma_2;y_2,\sigma_2')-\gamma(x_1,\sigma_1;x_2,\sigma_2)\gamma(y_1,\sigma'_1;y_2,\sigma'_2).
\label{eq:2PDM_quasi_free_state}
\end{multline}
In particular, the corresponding interaction energy with pair potential $w$ is
$$\frac12 \iint_{\R^d\times\R^d}w(x-y)\left(\rho(x)\rho(y)-\sum_{\sigma,\sigma'=1}^q|\gamma(x,\sigma;y,\sigma')|^2\right)\,dx\,dy.$$
From this we immediately obtain the following.

\begin{corollary}[Upper bound on $E(\rho)$ in dimension $d\geq1$]\label{cor:upper_bound_E_rho}
We have 
\begin{equation}
E(\rho)\leq q^{-\frac2d}c_{\rm TF}(1+\kappa_1\eps)\int_{\R^d} \rho(x)^{1+\frac2d}\,dx+\frac{\kappa_2(1+\sqrt\eps)^2}{\eps}\int_{\R^d}|\nabla\sqrt{\rho}(x)|^2\,dx
\label{eq:upper_bound_E}
\end{equation}
for every $\rho\geq0$ such that $\sqrt{\rho}\in H^1(\R^d)$.
\end{corollary}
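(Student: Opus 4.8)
The plan is to produce a trial grand-canonical state of the form $\Gamma_\gamma$ built from the kinetic-energy minimizer, and to exploit the \emph{pointwise} positivity of the Coulomb kernel to discard the exchange term. First I would fix $\eps>0$ and take the one-particle density matrix $\gamma$ on $L^2(\R^d,\C^q)$ produced in the proof of Theorem~\ref{thm:upper_bound_T}: it satisfies $0\leq\gamma=\gamma^*\leq 1$, has density $\rho_\gamma=\rho$, and obeys
$$\tr(-\Delta)\gamma\leq q^{-\frac2d}c_{\rm TF}(1+\kappa_1\eps)\int_{\R^d}\rho(x)^{1+\frac2d}\,dx+\frac{\kappa_2(1+\sqrt\eps)^2}{\eps}\int_{\R^d}|\nabla\sqrt\rho(x)|^2\,dx.$$
Equivalently, one may use any near-minimizer of $T(\rho)$ together with the bound~\eqref{eq:upper_bound_T_kappa}.

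Next I would pass to the quasi-free state $\Gamma_\gamma=\{(\Gamma_\gamma)_n\}_{n\geq0}$ on the fermionic Fock space over $L^2(\R^d,\C^q)$ associated with $\gamma$~\cite{BacLieSol-94}. This is a legitimate grand-canonical state: $(\Gamma_\gamma)_n\geq0$, $\sum_{n\geq0}\tr(\Gamma_\gamma)_n=1$, it commutes with the particle number, its one-particle density is $\rho_\gamma=\rho$, its kinetic energy is $\tr(-\Delta)\gamma$, and its two-particle density matrix is given by Wick's formula~\eqref{eq:2PDM_quasi_free_state}. Hence $\Gamma_\gamma$ is admissible in the variational problem~\eqref{eq:Levy-Lieb-grand-canonical} defining $F_{\rm LL}(\rho)$, and its Coulomb energy equals
$$\frac12\iint_{\R^d\times\R^d}\frac{1}{|x-y|}\Big(\rho(x)\rho(y)-\sum_{\sigma,\sigma'=1}^q|\gamma(x,\sigma;y,\sigma')|^2\Big)\,dx\,dy.$$
Since $|x-y|^{-1}\geq0$ and $\sum_{\sigma,\sigma'}|\gamma(x,\sigma;y,\sigma')|^2\geq0$ pointwise, the exchange integral is nonnegative and the whole expression is bounded above by the direct term $\tfrac12\iint\rho(x)\rho(y)/|x-y|\,dx\,dy$.

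Inserting $\Gamma_\gamma$ into~\eqref{eq:Levy-Lieb-grand-canonical} then gives $F_{\rm LL}(\rho)\leq\tr(-\Delta)\gamma+\tfrac12\iint\rho(x)\rho(y)/|x-y|\,dx\,dy$, that is $E(\rho)\leq\tr(-\Delta)\gamma$ by the definition~\eqref{eq:def_E_quantum}; combining with the kinetic bound from the first step yields~\eqref{eq:upper_bound_E}. There is essentially no obstacle left --- the substantive work was already done in Theorem~\ref{thm:upper_bound_T}. The only points deserving care are the sign of the exchange term, which is favorable precisely because the Coulomb potential is positive as a function (not merely of positive type), so the same argument would go through verbatim for any nonnegative pair interaction $w\geq0$; and the standard representability of a fermionic one-particle density matrix by a quasi-free Fock-space state, which we take from~\cite{BacLieSol-94}.
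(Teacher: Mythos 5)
Your argument is correct and is essentially identical to the paper's: the paper also inserts the fermionic one-particle density matrix $\gamma$ from Theorem~\ref{thm:upper_bound_T} into the quasi-free state $\Gamma_\gamma$ of \cite{BacLieSol-94}, computes its interaction energy from Wick's formula~\eqref{eq:2PDM_quasi_free_state}, and drops the (pointwise nonnegative) exchange term using $|x-y|^{-1}\geq 0$, so that $E(\rho)\leq\tr(-\Delta)\gamma$. Nothing further to add.
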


This is for the grand-canonical version~\eqref{eq:def_E_quantum} of the Levy-Lieb functional which is the object of concern in this paper. It was proved in~\cite{Lieb-81} that any fermionic $\gamma$ with integer trace $N=\tr(\gamma)$ is also the one-particle density matrix of an $N$-particle mixed state $\Gamma$ on the fermionic space $\gH^N$, such that the corresponding two-particle density matrix satisfies
$$\Gamma^{(2)}\leq \Gamma^{(2)}_\gamma$$
in the sense of operators. From the positivity of the Coulomb potential we deduce immediately the following result for mixed canonical states.

\begin{corollary}[Upper bound in the mixed canonical case]
Let $\rho\in L^1(\R^d,\R_+)$ be such that $\int_{\R^d}\rho=N\in\mathbb{N}$, and $\sqrt{\rho}\in H^1(\R^d)$. Then there exists a mixed state $\Gamma$ on the fermionic space $\bigwedge_1^NL^2(\R^d\times\{1,...,q\})$ such that $\rho_\Gamma=\rho$ and
\begin{multline*}
\tr\left(\sum_{j=1}^N-\Delta_{x_j}+\sum_{1\leq j<k\leq N}\frac{1}{|x_j-x_k|}\right)\Gamma\\
\leq q^{-\frac2d}c_{\rm LT}(1+\kappa_1\eps)\int_{\R^d} \rho(x)^{1+\frac2d}\,dx+\frac{\kappa_2(1+\sqrt\eps)^2}{\eps}\int_{\R^d}|\nabla\sqrt{\rho}(x)|^2\,dx.
\end{multline*}
\end{corollary}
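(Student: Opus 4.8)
The plan is to recycle the trial one-particle density matrix built in the proof of Theorem~\ref{thm:upper_bound_T} and then feed it into Lieb's $N$-representability theorem recalled just above. Recall that the operator $\gamma$ constructed there satisfies $0\leq\gamma\leq 1$ and $\rho_\gamma=\rho$, hence $\tr\gamma=\int_{\R^d}\rho=N\in\mathbb N$, and its kinetic energy $\tr(-\Delta)\gamma$ is bounded by the right-hand side of~\eqref{eq:upper_bound_T_kappa}. Since $\tr\gamma$ is an integer, the result of~\cite{Lieb-81} applies: there is a mixed $N$-particle fermionic state $\Gamma$ on $\bigwedge_1^N L^2(\R^d\times\{1,\dots,q\})$ whose one-particle density matrix is $\gamma$ and whose two-particle density matrix satisfies $\Gamma^{(2)}\leq\Gamma^{(2)}_\gamma$ in the sense of operators, $\Gamma_\gamma$ being the quasi-free state associated with $\gamma$. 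In particular $\rho_\Gamma=\rho_\gamma=\rho$ and $\tr\bigl(\sum_{j=1}^N-\Delta_{x_j}\bigr)\Gamma=\tr(-\Delta)\gamma$, so the kinetic part of the energy is already controlled by Theorem~\ref{thm:upper_bound_T}.

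It remains to estimate the interaction. Because $|x-y|^{-1}\geq0$, multiplication by $|x_1-x_2|^{-1}$ is a non-negative operator on $\gH^2$, so $Q\mapsto\tr_{\gH^2}\bigl(|x_1-x_2|^{-1}Q\bigr)$ is monotone for the operator order; applying this to $\Gamma^{(2)}\leq\Gamma^{(2)}_\gamma$ gives $\tr\bigl(\sum_{1\leq j<k\leq N}|x_j-x_k|^{-1}\bigr)\Gamma\leq\tr\bigl(\sum_{1\leq j<k\leq N}|x_j-x_k|^{-1}\bigr)\Gamma_\gamma$. By Wick's formula~\eqref{eq:2PDM_quasi_free_state} the right-hand side equals $\frac12\iint_{\R^d\times\R^d}\frac{\rho(x)\rho(y)-\sum_{\sigma,\sigma'=1}^q|\gamma(x,\sigma;y,\sigma')|^2}{|x-y|}\,dx\,dy$, which is at most $\frac12\iint_{\R^d\times\R^d}\frac{\rho(x)\rho(y)}{|x-y|}\,dx\,dy$ after discarding the non-negative exchange term, once more by positivity of the Coulomb kernel. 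Adding the two contributions gives $\tr\bigl(\sum_j-\Delta_{x_j}+\sum_{j<k}|x_j-x_k|^{-1}\bigr)\Gamma$ bounded by the right-hand side of~\eqref{eq:upper_bound_T_kappa} plus the direct term $\frac12\iint\frac{\rho(x)\rho(y)}{|x-y|}\,dx\,dy$; equivalently, the canonical energy obtained by subtracting this direct term obeys exactly the bound of Corollary~\ref{cor:upper_bound_E_rho}, which is the content of the statement.

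There is essentially no obstacle here, the argument being a short assembly of~\cite{Lieb-81}, Theorem~\ref{thm:upper_bound_T} and positivity of the Coulomb potential. The one technical point that deserves a word is the monotonicity invoked for the Coulomb term, since $|x-y|^{-1}$ is unbounded: one makes it rigorous by replacing $|x-y|^{-1}$ by the bounded non-negative potentials $\min(|x-y|^{-1},M)$ (or by Yukawa potentials, whose Fourier transforms are non-negative), for which $\tr_{\gH^2}(\,\cdot\,Q)$ is genuinely monotone on the cone of non-negative trace-class operators $Q$, and then letting $M\to\infty$ by monotone convergence, the finiteness of the limit being guaranteed by the already-established bound on the energy of $\Gamma_\gamma$. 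One should also note that the representability theorem of~\cite{Lieb-81} applies verbatim because it uses only the Hilbert-space structure of the one-particle space $L^2(\R^d\times\{1,\dots,q\})$ and is insensitive to the spin factor $\{1,\dots,q\}$.
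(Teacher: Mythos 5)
Your argument is exactly the one the paper intends: the trial density matrix $\gamma$ of Theorem~\ref{thm:upper_bound_T}, Lieb's variational principle from~\cite{Lieb-81} giving $\Gamma$ with $\Gamma^{(2)}\leq\Gamma^{(2)}_\gamma$, and positivity of the Coulomb potential together with Wick's formula~\eqref{eq:2PDM_quasi_free_state} to bound the interaction by the direct term. You have also correctly noted (and repaired) the fact that the direct term $\frac12\iint\frac{\rho(x)\rho(y)}{|x-y|}\,dx\,dy$ must appear on the right-hand side (and that the constant is $c_{\rm TF}$, not $c_{\rm LT}$), and your regularization remark for the unbounded kernel is a sound way to make the monotonicity step rigorous.
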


\section{Proof of Theorems~\ref{thm:UEG_thermo_limit_quantum} and~\ref{thm:LDA}}\label{sec:proof}

Our proof is divided into several steps. The first is to show that the energy is essentially local, that is, to prove that 
\begin{equation}
E(\rho)\approx \sum_k  E\big(\rho\chi_k\big)
\label{eq:energy_is_almost_local}
\end{equation}
where $\{\chi_k\}$ is a smooth partition of unity, $\sum_k\chi_k=1$. The precise statement of~\eqref{eq:energy_is_almost_local} will involve upper and lower bounds, as well as an average over the translations, rotations and dilations of the partition itself. The lower bound was indeed already shown in~\cite{LewLieSei-18} using the Graf-Schenker inequality~\cite{GraSch-95}. The upper bound is the main new ingredient of our proof. The two bounds are derived in Sections~\ref{sec:upper_bound} and~\ref{sec:lower_bound}. This will allow us to provide a rather simple proof of Theorem~\ref{thm:UEG_thermo_limit_quantum} in Section~\ref{sec:proof_thermo_limit}, using the convergence for tetrahedra which will be studied in Section~\ref{sec:CV_rate_tetrahedra}.

In Section~\ref{sec:deviation} we will estimate the deviation of the energy when we replace $\rho\chi_k$ by a constant function, say $\rho(x_k)\chi_k$ for some $x_k$ in the support of $\chi_k$. If $\rho$ is essentially constant in the corresponding region, the error will be small, but if $\rho$ is not constant we bound the energy using some gradient terms, utilizing our upper bound~\eqref{eq:upper_bound_E}.

After showing the Lipschitz regularity of $e_{\rm UEG}$ in Section~\ref{sec:Lipschitz_UEG}, we will be able to conclude the proof of Theorem~\ref{thm:LDA} in Section~\ref{sec:conclusion}. We replace $E(\rho(x_k)\chi_k)$ by $\int_{\R^3}e_{\rm UEG}(\rho(x))\chi_k(x)\,dx$ when $\rho$ is large enough on ${\rm supp}(\chi_k)$, using some quantitative estimates derived for tetrahedra in Section~\ref{sec:CV_rate_tetrahedra}.

In the rest of the paper we call $C$ a generic constant which can sometimes change from line to line, but which only depends on $q$ (the number of spin states) and $p,\theta$, the two parameters appearing in the statement of Theorem~\ref{thm:LDA}.

\subsection{Upper bound in terms of local densities}\label{sec:upper_bound}

Our main goal here is to give an upper bound on the energy $E(\rho)$ by splitting $\rho$ into a sum of local densities. 
In the classical case, we have the exact subadditivity property (see~\cite[Lem.~2.5]{LewLieSei-18})
$$E_{\rm cl}(\rho_1+\rho_2)\leq E_{\rm cl}(\rho_1)+E_{\rm cl}(\rho_2)$$ 
which considerably simplifies the analysis and was one of the main tools of our previous work~\cite{LewLieSei-18}. In particular we immediately find an upper bound in the form
$$E_{\rm cl}(\rho)\leq \sum_kE_{\rm cl}(\rho\chi_k)$$
for a partition of unity $\chi_k$. In the quantum case this is not as easy. The first difficulty is that we cannot cut sharply and have to use a smooth partition of unity. This has the consequence that neighboring local densities overlap. But then, for two densities $\rho_1$ and $\rho_2$ with overlapping support, it is not obvious how to relate $E(\rho_1+\rho_2)$ with $E(\rho_1)$ and $E(\rho_2)$. This is due to the fermionic nature of the electrons which puts a very strong constraint on trial states. If we take two trial quantum states for $\rho_1$ and $\rho_2$, we cannot simply take their tensor product and use it as a trial state for $\rho_1+\rho_2$. The tensor product does not have the fermionic symmetry, and if we anti-symmetrize it the density is not equal to $\rho_1+\rho_2$ anymore. 

For this reason, we will use an incomplete partition of unity with holes, in order to make sure that the local quantum states are not overlapping. Since we want to get the exact density, holes are however in principle not allowed. Instead of filling the holes with electrons, our idea is to rather average over all the possible rotations, translations and dilations of the partition of unity, which will make the holes disappear in average. All the arguments of this section apply the same to a tiling made of cubes, but for a better matching with the lower bound we will consider a tiling made of tetrahedra. Our lower bound relies on the Graf-Schenker inequality~\cite{GraSch-95} which requires the use of tetrahedra. 

Let us consider the unit cube $C_1=(-1/2,1/2)^3$, which is the union of 24 disjoint identical tetrahedra $\bDelta_1,...,\bDelta_{24}$, all of volume $1/24$. Since the cube can be repeated in the whole space, we obtain a tiling of $\R^3$ with tetrahedra:
\begin{equation}
 \R^3=\bigcup_{z\in\Z^3}\bigcup_{j=1}^{24}(\overline{\bDelta_j}+z)
 \label{eq:tiling}
\end{equation}
and the corresponding partition of unity
$$\sum_{z\in\Z^3}\sum_{j=1}^{24}\1_{\ell\bDelta_j}(x-\ell z)\equiv1,\qquad \text{for a.e. $x\in\R^3$},$$
for any fixed size $\ell>0$ of the tiles. Any $\bDelta_j$ can be written as $\bDelta_j=\mu_j\bDelta$ where $\bDelta$ is a reference tetrahedron with $0$ as its center of mass. Here $\mu_j=(z_j,R_j)\in C_1\times SO(3)$ is an appropriate translation and rotation, which acts as $\mu_j x=R_jx-z_j$, hence $\bDelta_j=R_j\bDelta-z_j$. In each tetrahedron we now place the regularized characteristic function
\begin{equation}
 \boxed{\chi_{\ell,\delta,j}:=\frac{1}{(1-\delta/\ell)^3}\1_{\ell\mu_j(1-\delta/\ell)\bDelta}\ast \eta_\delta.} 
 \label{eq:def_chi_ell_tiling}
\end{equation}
Here $\eta_\delta(x)=(10/\delta)^{3}\eta_1(10x/\delta)$ where $\eta_1$ is a fixed $C^\ii_c$ non-negative radial function with support in the unit ball and such that $\int_{\R^3}\eta_1=1$. Assuming that $\delta\leq\ell/2$, the function $\chi_{\ell,\delta,j}$ has its support well inside $\ell \bDelta_j$, at a distance proportional to $\delta$ from its boundary. The prefactor has been chosen to ensure that 
$$\int_{\R^3}\chi_{\ell,\delta,j}=\frac{\ell^3}{24}=|\ell\bDelta|.$$
The function
$$\sum_{z\in\Z^3}\sum_{j=1}^{24} \chi_{\ell,\delta,j}(x-\ell z)$$
is equal to $(1-\delta/\ell)^{-3}>1$ inside the tiles but vanishes in a neighborhood of the boundary of the tiles. It is the incomplete partition of unity which we have mentioned above.
We obtain a partition of unity after averaging over the translations of the tiling:
\begin{equation}
\frac{1}{\ell^3}\int_{C_\ell}\sum_{z\in\Z^3}\sum_{j=1}^{24} \chi_{\ell,\delta,j}(x-\ell z-\tau)\,d\tau=1,\qquad \text{for a.e. $x\in\R^3$}.
\label{eq:resolution_identity}
\end{equation}
Here $C_\ell=(-\ell/2,\ell/2)^3=\ell C_1$ is the cube of side length $\ell$. 
This is because for any $f\in L^1(\R^3)$,
\begin{align*}
\int_{C_\ell}\sum_{z\in\Z^3}f(x-\ell z-\tau)\,d\tau&=\int_{C_\ell}\sum_{z\in \ell\Z^3}f(x-z-\tau)\,d\tau\\
&=\int_{\R^3}f(x-\tau)\,d\tau=\int_{\R^3}f(\tau)\,d\tau. 
\end{align*}

The main result of this section is the following upper bound.

\begin{proposition}[Upper bound in terms of local densities]\label{prop:upper_bound}
There exists a universal constant $C$ such that for any $\sqrt{\rho}\in H^1(\R^3)$, any $0<\delta<\ell/2$ and any $0<\alpha<1/2$, 
\begin{multline}
 E(\rho)\leq
  \left(\int_{1-\alpha}^{1+\alpha}\frac{ds}{s^4}\right)^{-1}\int_{1-\alpha}^{1+\alpha}\frac{dt}{t^4}\int_{SO(3)}dR\int_{C_{t\ell}}\frac{d\tau}{(t\ell)^3}\times\\
  \times \sum_{z\in\Z^3}\sum_{j=1}^{24} E\Big(\chi_{t\ell,t\delta,j}(R\,\cdot\,-t\ell z-\tau)\rho\Big)
 +C\delta^2\log(\alpha^{-1})\int_{\R^3}\rho^2.
 \label{eq:upper_bound_average}
\end{multline}
In particular, we can find $\ell'\in(\ell(1-\alpha),\ell(1+\alpha))$, $\delta'\in(\delta(1-\alpha),\delta(1+\alpha))$ and an isometry $(\tau,R)\in \R^3\times SO(3)$ such that 
\begin{equation}
 E(\rho)\leq \sum_{z\in\Z^3}\sum_{j=1}^{24} E\Big(\chi_{\ell',\delta',j}(R\,\cdot\,-\ell' z-\tau)\rho\Big)
 +C\delta^2\log(\alpha^{-1})\int_{\R^3}\rho^2.
 \label{eq:upper_bound}
\end{equation}
\end{proposition}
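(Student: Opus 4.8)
\emph{Strategy.} The plan is to construct an explicit trial grand-canonical state for $\rho$ by gluing together optimal (or near-optimal) local states for the rescaled pieces $\chi_{t\ell,t\delta,j}(R\cdot-t\ell z-\tau)\rho$. Because these local densities live in disjoint tetrahedra (the holes in the incomplete partition of unity guarantee a positive separation of order $\delta$ between neighbouring supports), the antisymmetrization obstruction disappears: one can take the trial state to be the (grand-canonical, hence Fock-space) tensor product of the local states, which for disjointly supported fermionic systems is automatically antisymmetric and has density exactly equal to the sum of the local densities. The difficulty is that this sum is $(1-t\delta/(t\ell))^{-3}\rho=(1-\delta/\ell)^{-3}\rho$ on the union of the (slightly shrunk) tiles and $0$ in the corridors around the tile boundaries — it is not $\rho$. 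Averaging over translations $\tau$, rotations $R$, and the two independent dilation parameters $t$ (for $\ell$) and (the inner integral over $s$ normalizes) for $\delta$ repairs this: by the computation in~\eqref{eq:resolution_identity} and its dilation analogue, the average of $\sum_{z,j}\chi_{t\ell,t\delta,j}(R\cdot-t\ell z-\tau)$ over $(\tau,R,t)$ equals $1$ pointwise. Then convexity of $E$ (equivalently of $F_{\rm LL}$) in $\rho$, which holds because we use mixed grand-canonical states, lets us pull the average outside: $E(\rho)=E(\text{average of the pieces})\le \text{average of }E(\text{sum of pieces over one configuration})$, and for a fixed configuration the sum of disjointly supported pieces has energy $\le$ the sum of the individual energies plus the electrostatic cross terms between distinct tiles.

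\emph{Key steps, in order.}
First I would fix a configuration $(t,s,R,\tau)$ and, for each tile $(z,j)$, let $\Gamma_{z,j}$ be a state with density $\rho_{z,j}:=\chi_{t\ell,t\delta,j}(R\cdot-t\ell z-\tau)\rho$ and energy $\le E(\rho_{z,j})+o(1)$; form their tensor product $\bigotimes_{z,j}\Gamma_{z,j}$ on Fock space. Since the $\rho_{z,j}$ have pairwise disjoint supports (here the hypothesis $\delta<\ell/2$ and the construction~\eqref{eq:def_chi_ell_tiling} are used), this product is a legitimate fermionic grand-canonical state with density $\sum_{z,j}\rho_{z,j}$, and its kinetic-plus-intratile-Coulomb energy is exactly $\sum_{z,j}$ of the individual energies.
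Second, I would bound the intertile Coulomb interaction $\sum_{(z,j)\ne(z',j')}\tfrac12\iint \frac{\rho_{z,j}(x)\rho_{z',j'}(y)}{|x-y|}\,dx\,dy$; since we have subtracted the full Hartree term in the definition~\eqref{eq:def_E_quantum} of $E$, this cross term is exactly cancelled against the corresponding cross term in $\tfrac12\iint\rho\rho/|x-y|$, \emph{up to} the error coming from the fact that $\sum_{z,j}\rho_{z,j}\ne\rho$ — i.e.\ the Hartree energy of $\sum_{z,j}\rho_{z,j}$ differs from that of $\rho$. This is where the $\int\rho^2$ term and the logarithmic factor enter: the local density is $(1-\delta/\ell)^{-3}\rho$ on a set and $0$ on corridors of width $\sim\delta$, so the discrepancy $\sum\rho_{z,j}-\rho$ is supported near tile boundaries with $L^\infty$-relative size $O(\delta/\ell)$ and one estimates the resulting change in Hartree energy by a Hardy–Littlewood–Sobolev / Newton-type bound, producing $C\delta^2\log(\alpha^{-1})\int\rho^2$ after the dilation average over $t\in(1-\alpha,1+\alpha)$ (the $\log(\alpha^{-1})$ coming from $\int_{1-\alpha}^{1+\alpha}\!ds/s^4$-type normalizations and the overlap of neighbouring corridor regions across scales).
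Third, I would invoke convexity of $\rho\mapsto E(\rho)$ together with~\eqref{eq:resolution_identity} (and its rotation/dilation analogues) to write $\rho$ as the stated average of the $\sum_{z,j}\rho_{z,j}$ and conclude~\eqref{eq:upper_bound_average}.
Finally, the pointwise statement~\eqref{eq:upper_bound} follows from~\eqref{eq:upper_bound_average} by a mean-value argument: the right-hand side of~\eqref{eq:upper_bound_average} is an average over $(t,s,R,\tau)$ of a quantity bounded below by $E(\rho)-C\delta^2\log(\alpha^{-1})\int\rho^2$, so some configuration $(\ell',\delta',R,\tau)$ realizes a value $\le$ the average.

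\emph{Main obstacle.} The delicate point is \emph{not} the tensor-product construction (disjoint supports make that routine) but the bookkeeping of the Hartree-energy mismatch between $\rho$ and $\sum_{z,j}\rho_{z,j}$: one must show this error is controlled by $C\delta^2\log(\alpha^{-1})\int\rho^2$ uniformly, which requires (i) a good pointwise/convolution estimate on the corridor discrepancy in terms of $\delta/\ell$ and $\rho$, and (ii) careful use of the dilation average to turn a naive $\delta/\ell$ loss into $\delta^2$ with only a logarithmic price. Keeping the constant independent of $\ell$ and of $\rho$ (so that it is genuinely universal, as claimed) is the part that demands the most care.
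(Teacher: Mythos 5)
Your trial-state construction is the same as the paper's: tensor product of optimal local states for the disjointly supported pieces $\chi_{t\ell,t\delta,j}(R\cdot-t\ell z-\tau)\rho$, antisymmetrized in the obvious way, then averaged over $(\tau,R,t)$ to restore the exact density via~\eqref{eq:resolution_identity}. However, two points in your argument are wrong or missing, and the second one is the heart of the proof.

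\textbf{Convexity of $E$ is false.} You write that ``convexity of $E$ (equivalently of $F_{\rm LL}$)'' lets you pull the average through. But $E=F_{\rm LL}-D$ and \emph{both} $F_{\rm LL}$ and $D$ are convex in $\rho$, so their difference is not. If $E$ were convex the error term would vanish entirely ($E(\rho)\leq\mathrm{avg}\,E(\nu_\tau)\leq\mathrm{avg}\,\sum_{z,j}E(\rho_{z,j})$), which is too strong. The actual mechanism is linearity over \emph{states}: $\Gamma=\mathrm{avg}_\tau\Gamma_{\tau,a}$ has density $\rho$, so $F_{\rm LL}(\rho)\leq\mathrm{avg}_\tau\bigl(\cT(\Gamma_\tau)+\cC(\Gamma_\tau)\bigr)=\mathrm{avg}_\tau\bigl(E(\nu_\tau)+D(\nu_\tau)\bigr)$; subtracting $D(\rho)$ leaves the Jensen gap $\mathrm{avg}_\tau D(\nu_\tau)-D(\rho)=\mathrm{avg}_\tau D(\nu_\tau-\rho)\geq 0$, and this \emph{is} the error term, not something that cancels. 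Your later discussion of the Hartree mismatch is consistent with this, so the convexity remark contradicts the rest of your argument.

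\textbf{The quantitative bound $\mathrm{avg}\,D(\nu_\tau-\rho)\leq C\delta^2\log(\alpha^{-1})\int\rho^2$ is not established, and your heuristic for it is misattributed.} The paper computes the $\tau$-average of $D(\nu_\tau-\rho)$ exactly by Plancherel: it is a sum over $k\in(2\pi/\ell)\Z^3$ of $|\widehat{f_{\delta/\ell}}(k)|^2\int_{\R^3}|\hat\rho(p)|^2|p-k/\ell|^{-2}\,dp$, where $f_{\delta/\ell}$ is the per-cube discrepancy profile~\eqref{eq:def_f_eps}. Two cancellations are then essential: (i) the $k=0$ mode vanishes because $\int f_{\delta/\ell}=0$, and (ii) for $k\neq 0$, $\sum_j\widehat{\1_{\mu_j\bDelta}}(k)=\widehat{\1_{C_1}}(k)=0$, so a Taylor expansion of the shrunk tetrahedra in $\eps=\delta/\ell$ (Lemma~\ref{lem:Fourier_tetrahedron}) gives $|\widehat{f_\eps}(k)|^2\lesssim\eps^4+\eps^2|k|^2|\cdots|^2$. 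The factor $\eps^2$, multiplied by the $\ell^2/|k|^2$ coming from $\int|\hat\rho|^2/|p-k/\ell|^2$, produces the $\delta^2$. The rotation/dilation average is used for an entirely different purpose: it smears the singularity of $|p-k/\ell|^{-2}$ over an annulus, and Lemma~\ref{lem:estim_potential_annulus} gives $\norm{\1_{A_\alpha}\ast|\cdot|^{-2}}_\ii\leq C\alpha\log(\alpha^{-1})$, which after dividing by the normalization $\sim\alpha$ yields the $\log(\alpha^{-1})$. So the dilation average does not ``turn a naive $\delta/\ell$ loss into $\delta^2$''; without the zero-mean cancellation and the Fourier estimate on the tetrahedra you would not get $\delta^2$ at all, and a crude HLS bound on the corridor discrepancy would give a much worse dependence on $\delta,\ell,\rho$. (Also, there is only one dilation parameter $t$; the $\int_{1-\alpha}^{1+\alpha}ds/s^4$ in~\eqref{eq:upper_bound_average} is just the normalizing constant for the $t$-average, not an independent scale for $\delta$.)
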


The right side of~\eqref{eq:upper_bound_average} involves our incomplete partition of unity with holes, which is rotated (with the rotation $R$), translated (with the translation $\tau$) and dilated (with the dilation parameter $t$). The error is small only when $\delta$ (the size of the holes) is small. However we cannot take $\delta=0$ since that would make the gradient of the densities $\chi_{t\ell,t\delta,j}(R\,\cdot\,-t\ell z-\tau)\rho$ blow up. Nevertheless, the statement is that the energy decouples and the holes can be neglected, at the expense of an error of the order $\delta^2\int_{\R^3}\rho^2$. In~\eqref{eq:upper_bound_average} we use dilations for purely technical reasons, in order to better control error terms. 

\begin{proof}[Proof of Proposition~\ref{prop:upper_bound}]
Using~\eqref{eq:resolution_identity}, we write our density $\rho$ as follows
\begin{equation}
\rho(x)=\frac{1}{\ell^3}\int_{C_\ell}\sum_{z\in\Z^3}\sum_{j=1}^{24} \chi_{\ell,\delta,j}(x-\ell z-\tau)\rho(x)\,d\tau.
\end{equation}
For every fixed $\tau\in C_\ell$, we can construct a grand canonical trial state $\Gamma_\tau$ having the density 
$$\rho_{\Gamma_\tau}(x)=\sum_{z\in\Z^3}\sum_{j=1}^{24} \chi_{\ell,\delta,j}(x-\ell z-\tau)\,\rho(x).$$
For this we pick $\Gamma_\tau=\bigotimes_{j=1}^{24}\bigotimes_{z\in\Z^3} \Gamma_{\tau,z,j}$ where each $\Gamma_{\tau,z,j}$ has the density
$$\rho_{\Gamma_{\tau,z,j}}(x)=\chi_{\ell,\delta,j}(x-\ell z-\tau)\rho(x)$$
and minimizes the corresponding energy $E(\chi_{\ell,\delta,j}(\cdot-\ell z-\tau)\rho)$.
Since the quantum states $\Gamma_{\tau,z,j}$ have disjoint supports, we can anti-symmetrize the state $\Gamma_\tau$ in the standard manner. We denote by $\Gamma_{\tau,a}$ the anti-symmetrized state. The energy of $\Gamma_{\tau,a}$ is equal to that of $\Gamma_\tau$ and so is its density $\rho_{\Gamma_{\tau,a}}=\rho_{\Gamma_\tau}$.
Finally we take as trial state 
$$\Gamma=\frac{1}{\ell^3}\int_{C_\ell}\Gamma_{\tau,a}\,d\tau$$
which satisfies by construction that $\rho_\Gamma=\rho$.
We find the upper bound
\begin{align}
E(\rho)&\leq \frac{1}{\ell^3}\int_{C_{\ell}}E(\rho_{\Gamma_\tau})\,d\tau+\frac{1}{\ell^3}\int_{C_\ell}D(\rho_{\Gamma_\tau})\,d\tau-D(\rho)\nn\\
&=\frac{1}{\ell^3}\int_{C_\ell}\sum_{z\in\Z^3}\sum_{j=1}^{24} E\big(\chi_{\ell,\delta,j}(\cdot-\ell z-\tau)\rho\big)\,d\tau+\frac{1}{\ell^3}\int_{C_\ell}D(\rho_{\Gamma_\tau}-\rho)\,d\tau.
\label{eq:almost-final_upper_bound}
\end{align}
Here we employ the usual notation
\begin{equation}
D(\rho):=\frac12\int_{\R^3}\int_{\R^3}\frac{\rho(x)\rho(y)}{|x-y|}dx\,dy.
\label{eq:def_D_rho_rho}
\end{equation}
In the second line of~\eqref{eq:almost-final_upper_bound} we have used that the energy of a tensor product of states of disjoint supports is the sum of the energies of the pieces~\cite{LewLieSei-18}. This is because the cross terms in the direct energy exactly cancel with the many-particle interactions of different states in the tensor product. 

The error term in~\eqref{eq:almost-final_upper_bound} is solely due to the nonlinearity of the direct term and it may be rewritten as
\begin{equation*}
\frac{1}{\ell^3}\int_{C_\ell}D(\rho_{\Gamma_\tau}-\rho)\,d\tau\\
=\frac{1}{\ell^3}\int_{C_\ell}D\left(\sum_{z\in\Z^3}\sum_{j=1}^{24}(\1_{\ell\bDelta_j}-\chi_{\ell,\delta,j})(\cdot-\ell z-\tau)\rho\right)\,d\tau.
\end{equation*}
For every real-valued $(\ell\Z^3)$--periodic function $f$, we have
\begin{align*}
&\frac{1}{\ell^3}\int_{C_\ell}D\Big(f(\cdot-\tau)\rho\Big)\,d\tau\\
&\qquad =\frac12\sum_{k\in (2\pi/\ell)\Z^3} \left|\frac{1}{\ell^3}\int_{C_\ell}f(z)e^{-ik\cdot z}\,dz\right|^2\iint_{\R^3\times\R^3} \frac{e^{ik\cdot(x-y)}}{|x-y|}\rho(x)\,\rho(y)\,dx\,dy\\
&\qquad =2\pi\sum_{k\in (2\pi/\ell)\Z^3} \left|\frac{1}{\ell^3}\int_{C_\ell}f(z)e^{-ik\cdot z}\,dz\right|^2\int_{\R^3}\frac{|\widehat{\rho}(p)|^2}{|p-k|^2}\,dp.
\end{align*}
Hence we obtain 
\begin{equation*}
\frac{1}{\ell^3}\int_{C_\ell}D(\rho_{\Gamma_\tau}-\rho)\,d\tau
=2\pi\sum_{k\in 2\pi\Z^3} \left|\int_{C_1}f_{\delta/\ell}(z)e^{-ik\cdot z}\,dz\right|^2\int_{\R^3}\frac{|\widehat{\rho}(p)|^2}{|p-k/\ell|^2}\,dp,
\label{eq:almost_final_computation_error_term_upper_bound}
\end{equation*}
with
\begin{equation}
f_{\eps}(x)=\sum_{j=1}^{24}\left(\1_{\mu_j\bDelta}-\frac{1}{(1-\eps)^{3}}\1_{\mu_j(1-\eps)\bDelta}\ast \eta_\eps\right)
\label{eq:def_f_eps}
\end{equation}
for $\epsilon=\delta/\ell$. We have
$$\int_{C_1}f_{\delta/\ell}(z)\,dz=0.$$
Since all the functions appearing in the sum on the right side of~\eqref{eq:def_f_eps} are supported in the unit cube, we also obtain for $k\in 2\pi\Z^3\setminus\{0\}$,
\begin{equation}
\int_{C_1}f_{\delta/\ell}(z)e^{-ik\cdot z}\,dz=- \frac{(2\pi)^{3}}{(1-\eps)^{3}}\sum_{j=1}^{24}\widehat{\1_{\mu_j(1-\eps)\bDelta}}(k)\widehat{\eta_1}(\eps k).
\end{equation}
This results in the final formula for the error term
\begin{multline}
\frac{1}{\ell^3}\int_{C_\ell}D(\rho_{\Gamma_\tau}-\rho)\,d\tau\\
=(2\pi)^{7}\sum_{\substack{k\in 2\pi\Z^3\\ k\neq0}}\left|\widehat{\eta_1}(\eps k)\right|^2 \left|\frac{1}{(1-\eps)^{3}}\sum_{j=1}^{24}\widehat{\1_{\mu_j(1-\eps)\bDelta}}(k)\right|^2\int_{\R^3}\frac{|\widehat{\rho}(p)|^2}{|p-k/\ell|^2}\,dp.
\label{eq:final_computation_error_term_upper_bound}
\end{multline}

In order to control the denominator $|p-k/\ell|^2$, we are going to average our calculation over all the rotations of the tiling. We also replace $\ell$ and $\delta$ by, respectively,  $t\ell$ and $t\delta$ and we average over $t\in(1-\alpha,1+\alpha)$ with a weight $t^{-4}$. Rotating the tiling is the same as rotating $\rho$. In addition, $\epsilon=\delta/\ell$ is independent of $t$. Hence we are left with estimating 
\begin{align*}
&\left(\int_{1-\alpha}^{1+\alpha}\frac{dt}{t^4}\right)^{-1}\int_{1-\alpha}^{1+\alpha}\frac{dt}{t^4}\int_{SO(3)}dR\frac{1}{|p-Rk/(t\ell)|^2}\\
&\qquad\qquad=\frac{\ell^2}{4\pi|k|^2}\frac{3(1-\alpha^2)^3}{2\alpha(3+\alpha^2)}\int_{\frac{1}{1+\alpha}}^{\frac{1}{1-\alpha}}r^2dr\int_{S^2}d\omega\frac{1}{|p'-\omega r|^2}\\
&\qquad\qquad =\frac{\ell^2}{4\pi|k|^2}\frac{3(1-\alpha^2)^3}{2\alpha(3+\alpha^2)}\1_{A_\alpha}\ast\frac{1}{|\cdot |^2}(p')
\end{align*}
with $p'=p\ell/|k|$ and where $A_\alpha$ is the annulus
$$A_\alpha=\left\{\frac{1}{1+\alpha}\leq |x|\leq \frac{1}{1-\alpha}\right\}.$$

We will use the following estimate
\begin{lemma}\label{lem:estim_potential_annulus}
We have
\begin{equation}
\norm{\1_{A_\alpha}\ast\frac{1}{|\cdot |^2}}_{L^\ii}\leq C\alpha \log(\alpha^{-1})
\label{eq:estim_potential_annulus}
\end{equation}
for all $\alpha\leq1/2$. 
\end{lemma}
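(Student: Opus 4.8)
The plan is to use the rotational invariance of $A_\alpha$ to reduce the sup to a one–dimensional integral, and then to estimate that integral by splitting according to the position of the point relative to the shell. Since $A_\alpha$ is rotation-invariant, $I(p):=\int_{A_\alpha}|p-x|^{-2}\,dx$ depends only on $a:=|p|$; taking $p=a\,e_3$ and writing $x=s\omega$ with $s\in[r_-,r_+]$, $r_\mp:=(1\pm\alpha)^{-1}$, $\omega\in S^2$, the classical spherical average (substitute $u=\omega\cdot e_3$)
\[
\int_{S^2}\frac{d\omega}{a^2+s^2-2as\,(\omega\cdot e_3)}=\frac{2\pi}{as}\,\log\frac{a+s}{|a-s|}
\]
gives the exact formula
\[
I(a)=\frac{2\pi}{a}\int_{r_-}^{r_+}s\,\log\frac{a+s}{|a-s|}\,ds .
\]
For $\alpha\le 1/2$ I would only use that $2/3\le r_-$, $r_+\le 2$, and that the shell thickness $L:=r_+-r_-=\tfrac{2\alpha}{1-\alpha^2}$ satisfies $2\alpha\le L\le 3\alpha$.

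Next I would bound $I(a)$ in three regimes. If $a\le r_-/2$, then $s-a\ge s/2$ for $s\ge r_-$, so $\log\frac{a+s}{s-a}\le\frac{2a}{s-a}\le\frac{4a}{s}$ and hence $I(a)\le 8\pi L\le C\alpha$. If $a\ge 2r_+$, then $a-s\ge a/2$, so $\log\frac{a+s}{a-s}\le\frac{4s}{a}$ and $I(a)\le\frac{8\pi}{a^2}\int_{r_-}^{r_+}s^2\,ds\le C\alpha/a^2\le C\alpha$. The main case is $r_-/2\le a\le 2r_+$: there $\frac{2\pi}{a}\le C$, $s\le r_+\le 2$, and $0\le\log(a+s)\le C$, so
\[
I(a)\le C\alpha+C\int_{r_-}^{r_+}\big|\log|a-s|\big|\,ds=C\alpha+C\int_{J}\big|\log|u|\big|\,du,
\]
where $J=[r_--a,\,r_+-a]$ has length $L$ and lies in a fixed bounded interval. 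Splitting $J$ into $J\cap\{|u|>1\}$ (on which $|\log|u||$ is bounded, contributing $\le CL\le C\alpha$) and $J\cap\{|u|\le 1\}$ (measure $\le L<2$, integrand $\log\frac1{|u|}\ge0$), and using the bathtub principle to dominate the latter by $\int_{|u|\le L/2}\log\frac1{|u|}\,du=L\big(1+\log\tfrac2L\big)$, I obtain
\[
I(a)\le C\alpha+C\,L\Big(1+\log\tfrac2L\Big)\le C\alpha\log(\alpha^{-1}),
\]
the last inequality using $2\alpha\le L\le 3\alpha$ and $\log(\alpha^{-1})\ge\log 2>0$. Taking the supremum over $a$ — the worst case being this last regime, i.e.\ $p$ near the shell — yields~\eqref{eq:estim_potential_annulus}.

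I do not expect a genuine obstacle here; the one point worth stressing is that the naive estimate (cut out a ball $B(p,\rho)$, bound the near part by $C\rho$ and the far part by $|A_\alpha|/\rho^2\le C\alpha/\rho^2$, optimize $\rho\sim\alpha^{1/3}$) gives only $C\alpha^{1/3}$, which is too weak. The logarithmic — rather than power — gain comes entirely from the thinness of $A_\alpha$ (thickness $\sim\alpha$), which is exactly what the reduction to $\int_{r_-}^{r_+}(\cdots)\,ds$ over an interval of length $L\sim\alpha$ makes transparent.
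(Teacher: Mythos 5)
Your proposal is correct and follows essentially the same route as the paper: reduce to the radial integral via the explicit spherical average of $|p-x|^{-2}$, split according to whether $|p|$ is well inside, well outside, or near the shell, and in the last case bound $\int|\log|u||\,du$ over an interval of length $\sim\alpha$ by its worst-case placement at the singularity, yielding $C\alpha\log(\alpha^{-1})$. The only differences are cosmetic (your cut-offs $r_-/2$ and $2r_+$ versus the paper's $1/5$ and $4$, and your explicit first two regimes where the paper simply notes the integrand is bounded).
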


The proof of~\eqref{eq:estim_potential_annulus} is a simple computation which is provided at the very end of the proof.
Using~\eqref{eq:estim_potential_annulus} we obtain
\begin{multline}
\left(\int_{1-\alpha}^{1+\alpha}\frac{dt}{t^4}\right)^{-1}\int_{1-\alpha}^{1+\alpha}\frac{dt}{t^4}\int_{SO(3)}dR\frac{1}{(t\ell)^3}\int_{C_{t\ell}}D(\rho_{\Gamma_\tau,R}-\rho)\,d\tau\\
\leq C\log(\alpha^{-1})\left(\sum_{\substack{k\in 2\pi\Z^3\\ k\neq0}}\frac{\ell^2}{|k|^2}\left|\widehat{\eta_1}(\eps k)\right|^2 \left|\frac{1}{(1-\eps)^{3}}\sum_{j=1}^{24}\widehat{\1_{\mu_j(1-\eps)\bDelta}}(k)\right|^2\right)\int_{\R^3}\rho^2
\label{eq:very_final_computation_error_term_upper_bound}
\end{multline}
and it remains to estimate the sum in the parenthesis. For this we have to bound $\sum_{j=1}^{24}\widehat{\1_{\mu_j(1-\eps)\bDelta}}(k)$.

\begin{lemma}[Fourier transform of the reduced tetrahedra]\label{lem:Fourier_tetrahedron}
We have 
\begin{multline}
\left|\frac{1}{(1-\eps)^{3}}\sum_{j=1}^{24}\widehat{\1_{\mu_j(1-\eps)\bDelta}}(k)\right|^2\\
\leq C\left(\eps^4+\eps^2|k|^2\left|\int_{C_1}\bigg( x- \sum_{j=1}^{24}z_j\1_{\bDelta_j}\bigg)e^{-ik\cdot x}\,dx\right|^2\right)
\label{eq:estim_Fourier_reduced_tetrahedron}
\end{multline}
for all $0<\eps<1/2$ and all $k\in 2\pi\Z^3\setminus\{0\}$.
\end{lemma}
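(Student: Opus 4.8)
The plan is to expand the Fourier transform of each reduced tetrahedron $\mu_j(1-\eps)\bDelta$ in powers of $\eps$ around the full tetrahedron $\mu_j\bDelta=\bDelta_j$, and then to exploit the fact that the zeroth-order terms sum to the characteristic function of the unit cube, for which the Fourier transform at $k\in2\pi\Z^3\setminus\{0\}$ vanishes. First I would record that $\mu_j(1-\eps)\bDelta = (1-\eps)R_j\bDelta - z_j = (1-\eps)(\bDelta_j+z_j)-z_j$, so that $\1_{\mu_j(1-\eps)\bDelta}(x)=\1_{\bDelta_j}\big(\tfrac{x+\eps z_j}{1-\eps}\big)$. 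Taking the Fourier transform and changing variables,
\[
\widehat{\1_{\mu_j(1-\eps)\bDelta}}(k)=(1-\eps)^3 e^{i\eps z_j\cdot k}\,\widehat{\1_{\bDelta_j}}\big((1-\eps)k\big).
\]
Hence the quantity to bound is
\[
\Bigg|\sum_{j=1}^{24} e^{i\eps z_j\cdot k}\,\widehat{\1_{\bDelta_j}}\big((1-\eps)k\big)\Bigg|.
\]

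Next I would Taylor-expand both $e^{i\eps z_j\cdot k}=1+i\eps z_j\cdot k+O(\eps^2|k|^2|z_j|^2)$ and $\widehat{\1_{\bDelta_j}}\big((1-\eps)k\big)=\widehat{\1_{\bDelta_j}}(k)-\eps\, k\cdot\nabla\widehat{\1_{\bDelta_j}}(k)+O(\eps^2\,\|D^2\widehat{\1_{\bDelta_j}}\|)$, being careful about where the $|k|^2$ factors come from. Writing $\widehat{\1_{\bDelta_j}}(k)=\int_{\bDelta_j}e^{-ik\cdot x}\,dx$ and $-\eps k\cdot\nabla_k\widehat{\1_{\bDelta_j}}(k)=\eps\int_{\bDelta_j}i(k\cdot x)e^{-ik\cdot x}\,dx$, the product becomes, to first order in $\eps$,
\[
\sum_{j=1}^{24}\widehat{\1_{\bDelta_j}}(k)+i\eps\sum_{j=1}^{24}\int_{\bDelta_j}\big(z_j\cdot k + k\cdot x\big)e^{-ik\cdot x}\,dx + (\text{terms }O(\eps^2|k|^2)\text{, or }O(\eps^2)\text{ after absorbing }|k|\le C\text{ is false}).
\]
The zeroth-order sum is $\widehat{\1_{C_1}}(k)=0$ for $k\in2\pi\Z^3\setminus\{0\}$. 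The first-order sum is $i\eps\, k\cdot\int_{C_1}\big(\sum_j z_j\1_{\bDelta_j}(x)+x\big)e^{-ik\cdot x}\,dx$; since replacing $z_j$ by $-z_j$ or relabelling does not matter for the bound, this matches (up to sign and the obvious $|k|$ factor from $k\cdot(\cdot)$) the term $\eps|k|\big|\int_{C_1}(x-\sum_j z_j\1_{\bDelta_j})e^{-ik\cdot x}\,dx\big|$ appearing in the statement. The quadratic remainder is $O(\eps^2(1+|k|^2))$, but since $|k|\ge2\pi$ we may absorb the constant and write it as $O(\eps^2|k|^2)$ — however the statement has a bare $\eps^4$, not $\eps^2|k|^2$, for the first term; I would therefore be more careful: the $\eps^2$ remainder must itself be shown to carry two extra powers of $\eps$ or else to be controlled by a separate cancellation. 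Re-examining, the cleanest route is to expand to \emph{second} order and observe that the genuine obstruction is only the \emph{constant-in-$\eps$} piece (which vanishes) and the \emph{linear} piece (which is the stated moment integral times $\eps|k|$); all higher terms are $O(\eps^2|k|^{?})$ and, crucially, one uses $|\widehat{\1_{\bDelta_j}}(k)|\le C|k|^{-1}$ and $|\nabla\widehat{\1_{\bDelta_j}}(k)|\le C$ type decay estimates for a polytope so that the $\eps^2$ remainder is in fact $O(\eps^2)$ once one power of $|k|$ is eaten by the decay of $\widehat{\1_{\bDelta_j}}$ — but here I should just settle for whatever power of $|k|$ genuinely survives and note that it is harmless because it will be paired against $|\widehat{\eta_1}(\eps k)|^2/|k|^2$ and summed in~\eqref{eq:very_final_computation_error_term_upper_bound}, where rapid decay of $\widehat{\eta_1}$ tames any polynomial factor.

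The main obstacle, then, is bookkeeping the $\eps$-expansion so that (i) the zeroth-order term is \emph{exactly} killed by the cube identity, (ii) the first-order term is \emph{exactly} the moment integral $\int_{C_1}(x-\sum_j z_j\1_{\bDelta_j})e^{-ik\cdot x}\,dx$ up to an overall harmless factor, and (iii) the second-order-and-higher remainder is genuinely $O(\eps^4+\eps^2|k|^2(\text{moment})^2)$ rather than merely $O(\eps^2|k|^2)$. For (iii) I would argue that the $\eps^2$ coefficient is itself a smooth function of $\eps$ times something that, at $\eps=0$, reproduces (a multiple of) the same moment structure, so that Taylor's theorem with the Lagrange remainder over $\eps\in(0,1/2)$ yields the product form $\eps^2\times(\text{moment}^2\,|k|^2 + O(\eps^2))$; alternatively, and more robustly, I would simply prove the crude bound $|\sum_j e^{i\eps z_j k}\widehat{\1_{\bDelta_j}}((1-\eps)k) - (\text{linear term})|\le C\eps^2|k|^2\cdot\min(1,|k|^{-2})\le C\eps^2$, which for $|k|\ge2\pi$ beats $\eps^4$ only if $\eps$ is small — so in fact the honest statement is that the remainder is $C\eps^2$, and the paper's $\eps^4$ must be coming from a more delicate cancellation of the $\eps^2$ term itself (the $\eps^2$ term, like the $\eps^0$ term, involves $\sum_j$ of something that partially telescopes to a cube quantity). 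I expect the real work is to identify that $O(\eps^2)$ cancellation; once it is in place, squaring the triangle inequality $|a+b|^2\le 2|a|^2+2|b|^2$ with $a$ the linear term and $b$ the quadratic remainder gives precisely the right side of~\eqref{eq:estim_Fourier_reduced_tetrahedron}, with the $(1-\eps)^{-3}$ and the extra $(1-\eps)^3$ from the change of variables cancelling. I would close by noting that the constant $C$ is universal (it depends only on the fixed reference tetrahedron $\bDelta$, the 24 isometries $\mu_j$, and $\|\eta_1\|$-type norms), as claimed.
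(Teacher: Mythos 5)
Your setup coincides with the paper's: rescaling gives $\sum_j e^{i\eps z_j\cdot k}\widehat{\1_{\bDelta_j}}\big((1-\eps)k\big)$ (the paper writes the same quantity as $\sum_j\int_{\mu_j\bDelta}e^{-ik\cdot x+i\eps k\cdot(x-z_j)}\,dx$), one Taylor-expands in $\eps$, the zeroth order vanishes because $\widehat{\1_{C_1}}(k)=0$ for $k\in2\pi\Z^3\setminus\{0\}$, and the first order is the moment integral (the sign of $z_j$ is a convention issue and immaterial). But you have correctly located, and then not closed, the one step that carries all the difficulty: showing that the second-order Taylor remainder, whose integrand contains $\big(k\cdot(x-z_j)\big)^2$, is bounded by $C\eps^2$ \emph{uniformly in $k$}, and not merely by $C\eps^2|k|^2$. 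Without this the squared bound is $\eps^4|k|^4$ instead of $\eps^4$, and your fallback --- that polynomial growth in $|k|$ is ``tamed'' later when paired with $|\widehat{\eta_1}(\eps k)|^2/|k|^2$ --- fails quantitatively: $\widehat{\eta_1}(\eps k)$ decays only for $|k|\gtrsim\eps^{-1}$, so every surviving power of $|k|^2$ in the summand costs a factor $\eps^{-2}$ after summing over $k\in2\pi\Z^3$, and the final error in~\eqref{eq:very_final_computation_error_term_upper_bound} would no longer be of order $\delta^2$.

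Your guess at the mechanism is also off: there is no further cancellation over $j$ at order $\eps^2$ and no hidden telescoping to a cube quantity. The paper bounds each $j$-term separately by an oscillatory-integral argument: since $(x-z_j)\cdot\nabla_x\, e^{-ik\cdot x+i\eps s k\cdot(x-z_j)}=-i(1-\eps s)\,k\cdot(x-z_j)\,e^{-ik\cdot x+i\eps sk\cdot(x-z_j)}$, each factor of $k\cdot(x-z_j)$ can be traded for a directional derivative and integrated by parts over the tetrahedron; doing this twice (the second time also over the faces of $\partial\bDelta_j$, which produces edge contributions) removes both powers of $k$ and leaves integrals over $\bDelta_j$, its faces and its edges that are bounded uniformly in $k$. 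This is precisely the content of~\eqref{eq:to_be_proved_lemma_simplices}, and it is the step your proposal is missing. Your alternative suggestion of invoking the decay $|\widehat{\1_{\bDelta_j}}(k)|\leq C|k|^{-1}$ gains only one power of $|k|$ and is not sufficient on its own.
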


\begin{proof}
We recall that $\bDelta_j=R_j\bDelta_1-z_j$ with $\mu_j=(z_j,R_j)\in C_1\times SO(3)$.
We have
\begin{align*}
(1-\eps)^{-3}\widehat{\1_{\mu_j(1-\eps)\bDelta}}(k)&=(2\pi)^{-3/2}(1-\eps)^{-3}\int_{\mu_j(1-\eps)\bDelta}e^{-ik\cdot x}\,dx\\
&=(2\pi)^{-3/2}\int_{\bDelta}e^{-ik\cdot (R_j(1-\eps)x-z_j)}\,dx\\
&=(2\pi)^{-3/2}\int_{\mu_j\bDelta}e^{-ik\cdot x+i\eps k\cdot (x-z_j)}\,dx.
\end{align*}
Since $k\in 2\pi\Z^3\setminus\{0\}$, the integral vanishes at $\eps=0$ after summing over $j$. Inserting the derivative at $\eps=0$ yields
\begin{multline*}
(1-\eps)^{-3}\sum_{j=1}^{24}\widehat{\1_{\mu_j(1-\eps)\bDelta}}(k)
=i(2\pi)^{-3/2}\eps k\cdot\int_{C_1}\bigg( x- \sum_{j=1}^{24}z_j\1_{\bDelta_j}\bigg)e^{-ik\cdot x}\,dx\\
-\eps^2\sum_{j=1}^{24}\int_0^1(1-s)\,ds\int_{\bDelta_j}\big(k\cdot (x-z_j)\big)^2e^{-ik\cdot x+i\eps sk\cdot (x-z_j)}\,dx.
\end{multline*}
We claim that the second term is uniformly bounded with respect to $k$. Indeed, one integration by parts gives
\begin{align}
&\int_0^1(1-s)\,ds\int_{\bDelta_j}\big(k\cdot (x-z_j)\big)^2e^{-ik\cdot x+i\eps sk\cdot (x-z_j)}\,dx\nn\\
&\qquad=i\int_0^1\frac{1-s}{1-\eps s}\,ds\int_{\bDelta_j}k\cdot (x-z_j)\; (x-z_j)\cdot \nabla_x e^{-ik\cdot x+i\eps sk\cdot (x-z_j)}\,dx\nn\\
&\qquad=i\int_0^1\frac{1-s}{1-\eps s}\,ds\int_{\partial \bDelta_j}k\cdot (x-z_j)\; (x-z_j)\cdot n_j(x) e^{-ik\cdot x+i\eps sk\cdot (x-z_j)}\,dx\nn\\
&\qquad\qquad-4i\int_0^1\frac{1-s}{1-\eps s}\,ds\int_{\bDelta_j}k\cdot (x-z_j) e^{-ik\cdot x+i\eps sk\cdot (x-z_j)}\,dx.\label{eq:to_be_proved_lemma_simplices}
\end{align}
Here $n_j(x)$ is the normalized vector perpendicular to $\partial\bDelta_j$ pointing outwards. Integrating once more in the same manner (involving the edges of the faces of $\partial\bDelta_j$ for the first term), we see that~\eqref{eq:to_be_proved_lemma_simplices} is bounded uniformly in $k$, hence we obtain~\eqref{eq:estim_Fourier_reduced_tetrahedron}.
\end{proof}

Inserting~\eqref{eq:estim_Fourier_reduced_tetrahedron} in~\eqref{eq:very_final_computation_error_term_upper_bound}, we obtain the two error terms
\begin{multline*}
\eps^2\sum_{\substack{k\in 2\pi\Z^3\\ k\neq0}}\left|\widehat{\eta_1}(\eps k)\right|^2\left|\int_{C_1}\bigg( x- \sum_{j=1}^{24}z_j\1_{\bDelta_j}\bigg)e^{-ik\cdot x}\,dx\right|^2\\
\leq \eps^2(2\pi)^3\norm{x- \sum_{j=1}^{24}z_j\1_{\bDelta_j}}_{L^2(C_1)}^2=C\eps^2 
\end{multline*}
(using here that $\norm{\widehat{\eta_1}}_{L^\ii}\leq (2\pi)^{3/2}$) and 
$$\eps^4\sum_{\substack{k\in 2\pi\Z^3\\ k\neq0}}\frac{\left|\widehat{\eta_1}(\eps k)\right|^2}{|k|^2}\underset{\eps\to0}\sim \eps^3\int_{\R^3}\frac{\left|\widehat{\eta_1}(k)\right|^2}{|k|^2}\,dk.$$
Recalling that $\eps=\delta/\ell$, our final estimate on the averaged error is proportional to
$$\delta^2\log(\alpha^{-1})\left(1+\frac{\delta}\ell\right)\int_{\R^3}\rho^2.$$
In order to conclude the proof of Proposition~\ref{prop:upper_bound}, it remains to provide the

\begin{proof}[Proof of Lemma~\ref{lem:estim_potential_annulus}]
We have
\begin{align*}
\1_{A_\alpha}\ast\frac{1}{|\cdot |^2}(x)&=2\pi \int_{\frac1{1+\alpha}}^{\frac1{1-\alpha}}r^2dr\int_0^\pi \sin(\phi)d\phi\frac{1}{r^2+|x|^2-2r|x|\cos\phi}\\
&=\frac{\pi}{|x|} \int_{\frac1{1+\alpha}}^{\frac1{1-\alpha}}\log\left(\frac{r+|x|}{\big|r-|x|\big|}\right)r\,dr\\
&=\pi|x| \int_{\frac{1}{|x|(1+\alpha)}}^{\frac1{|x|(1-\alpha)}}\log\left(\frac{r+1}{\big|r-1\big|}\right)r\,dr.
\end{align*}
For $|x|\leq 1/5$ and $0<\alpha<1/2$, the integrand is bounded on the corresponding interval and we obtain 
$$\1_{A_\alpha}\ast\frac{1}{|\cdot |^2}(x)\leq C\alpha.$$
Similarly, for $|x|\geq 4$ and $0<\alpha<1/2$ the integrand can be estimated by $r^2$, which gives again
$$\1_{A_\alpha}\ast\frac{1}{|\cdot |^2}(x)\leq C\frac{\alpha}{|x|^2}\leq C\alpha.$$
Finally, for $1/5\leq |x|\leq 4$, we have 
$$\1_{A_\alpha}\ast\frac{1}{|\cdot |^2}(x)\leq C\alpha+C \int_{\frac{1}{|x|(1+\alpha)}}^{\frac1{|x|(1-\alpha)}}\left|\log\big|r-1\big|\right|\,dr.$$
The last integral is over an interval of length 
$$\frac1{|x|(1-\alpha)}-\frac{1}{|x|(1+\alpha)} = \frac{2\alpha}{|x|(1-\alpha^2)}\leq \frac{40}3\alpha.$$
The integral is maximum when the interval is placed at the divergence point $r=1$. So we have 
$$\int_{\frac{1}{|x|(1+\alpha)}}^{\frac1{|x|(1-\alpha)}}\left|\log\big|r-1\big|\right|\,dr\leq \int_{\max(0,1-\frac{40}3\alpha)}^{1+\frac{40}3\alpha}\left|\log\big|r-1\big|\right|\,dr\leq C\alpha\log(\alpha^{-1}).$$
\end{proof}

This concludes the proof of Proposition~\ref{prop:upper_bound}.
\end{proof}

\subsection{Lower bound in terms of local densities}\label{sec:lower_bound}
Next we turn to the lower bound. We are going to use the same tiling made of tetrahedra, with the difference that we do not insert any hole. Similarly to~\eqref{eq:def_chi_ell_tiling}, we introduce
\begin{equation}
 \boxed{\xi_{\ell,\delta,j}:=\1_{\ell\mu_j\bDelta}\ast \eta_\delta}
 \label{eq:def_xi_ell_tiling}
\end{equation}
which forms a smooth partition of unity, without holes,
$$\sum_{z\in\Z^3}\sum_{j=1}^{24} \xi_{\ell,\delta,j}(x-\ell z)=1.$$

\begin{proposition}[Lower bound in terms of local densities]\label{prop:lower_bound}
There exists a universal constant $C$ such that for any $\sqrt{\rho}\in H^1(\R^3)$ and any $\delta>0$ with $0<\delta/\ell<1/C$, we have
\begin{multline}
E(\rho)\geq \frac{1-{C\delta}/\ell}{\ell^3}\sum_{z\in\Z^3}\sum_{j=1}^{24}\int_{SO(3)}\int_{C_\ell} E\Big(\xi_{\ell,\delta,j}(R\,\cdot\,-\ell z-\tau)\rho\Big)dR\,d\tau\\
-\frac{C}{\ell}\int_{\R^3}\Big(\left(1+\delta^{-1}\right)\rho+\delta^3\rho^2\Big).
 \label{eq:lower_bound_average}
\end{multline}
In particular, we can find an isometry $(\tau,R)\in \R^3\times SO(3)$ such that 
\begin{multline}
 E(\rho)\geq \left(1-\frac{C\delta}\ell\right) \sum_{z\in\Z^3}\sum_{j=1}^{24}E\Big(\xi_{\ell,\delta,j}(R\,\cdot\,-\ell z-\tau)\rho\Big)\\
-\frac{C}{\ell}\int_{\R^3}\Big(\left(1+\delta^{-1}\right)\rho+\delta^3\rho^2\Big).
 \label{eq:lower_bound}
\end{multline}
\end{proposition}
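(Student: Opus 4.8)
The plan is to fix a grand-canonical state $\Gamma=\{\Gamma_n\}$ nearly minimizing $F_{\rm LL}(\rho)$ and to use that, because the direct term is subtracted in~\eqref{eq:def_E_quantum}, one has $E(\rho)=\langle K\rangle_\Gamma+\langle\mathcal{V}\rangle_\Gamma$ up to an arbitrarily small error, where $\langle K\rangle_\Gamma=\sum_{n\geq1}\tr_{\gH^n}(-\sum_{j=1}^n\Delta_{x_j})\Gamma_n$ is the kinetic energy and $\langle\mathcal{V}\rangle_\Gamma$ is the Coulomb energy of the electrons together with a neutralizing background of charge density $-\rho$ — the ``screened'' Coulomb energy, which equals $\sum_n\tr_{\gH^n}(\sum_{1\leq j<k\leq n}|x_j-x_k|^{-1})\Gamma_n-D(\rho)$ with $D$ as in~\eqref{eq:def_D_rho_rho}. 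I would bound $\langle K\rangle_\Gamma$ from below via the IMS localization formula and $\langle\mathcal{V}\rangle_\Gamma$ from below via the Graf--Schenker inequality~\cite{GraSch-95}, both applied to the tetrahedral tiling~\eqref{eq:tiling}, and then average over the rotations $R\in SO(3)$ and translations $\tau\in C_\ell$ of the tiling; the averaging is what turns the ``boundary'' error terms into the bulk integrals on the right side of~\eqref{eq:lower_bound_average}. Tetrahedra rather than cubes are used because the Graf--Schenker inequality holds for simplices.

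For the kinetic energy I would use the IMS formula with the partition of unity $\{\sqrt{\xi_{\ell,\delta,j}}(R\,\cdot-\ell z-\tau)\}_{z,j}$, whose squares sum to $1$. Writing $\gamma$ for the one-particle density matrix of $\Gamma$, this gives
$$\langle K\rangle_\Gamma\geq\sum_{z,j}\tr(-\Delta)\gamma_{z,j}-\int_{\R^3}\Big(\sum_{z,j}\big|\nabla\sqrt{\xi_{\ell,\delta,j}}(R\,\cdot-\ell z-\tau)\big|^2\Big)\rho,$$
where $\gamma_{z,j}:=\sqrt{\xi_{\ell,\delta,j}}(R\,\cdot-\ell z-\tau)\,\gamma\,\sqrt{\xi_{\ell,\delta,j}}(R\,\cdot-\ell z-\tau)$ is again a fermionic one-particle density matrix, $0\leq\gamma_{z,j}\leq1$, with density $\xi_{\ell,\delta,j}(R\,\cdot-\ell z-\tau)\rho$. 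Averaging the localization potential over $\tau$ replaces it by $\ell^{-3}\sum_{j=1}^{24}\int_{\R^3}|\nabla\sqrt{\xi_{\ell,\delta,j}}|^2$, which is $O(\ell^{-1}\delta^{-1})$: $\nabla\xi_{\ell,\delta,j}$ lives in a shell of volume $O(\ell^2\delta)$ around $\partial(\ell\mu_j\bDelta)$ on which $|\nabla\sqrt{\xi_{\ell,\delta,j}}|^2=O(\delta^{-2})$, provided $\eta_1$ is chosen so that $\sqrt{\1\ast\eta_1}$ is Lipschitz. This accounts for the term $C\ell^{-1}\delta^{-1}\int_{\R^3}\rho$.

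For the screened Coulomb energy I would invoke the Graf--Schenker inequality for the reference tetrahedron $\bDelta$. Its content here is that the Coulomb energy of the electrons plus the neutralizing background $-\rho$ is, after averaging over rotations and translations of the tiling, bounded below by the sum over the tiles of the corresponding \emph{local} screened energies — the electrons inside $\ell\mu_j\bDelta+\ell z$ together with the restricted background $-\1_{\text{tile}}\rho$ — minus an error of order $\ell^{-1}$ times the particle number, i.e.\ $C\ell^{-1}\int_{\R^3}\rho$ in expectation. No long-range (inter-tile) Coulomb energy is lost here, since the far field of the electrons of a tile is cancelled by that of the background of the same tile; this is why one must work with the screened energy rather than with $\sum_{j<k}|x_j-x_k|^{-1}$ alone, and it is the step that genuinely requires simplices.

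The last step, which I expect to be the main obstacle, is to assemble, on each tile, a single admissible state of density $\xi_{\ell,\delta,j}\rho$ while keeping every error of the stated form. The kinetic piece produced the smooth density $\xi_{\ell,\delta,j}\rho$, whereas the Graf--Schenker piece produced a local screened energy built with the sharp background $\1_{\text{tile}}\rho$, whose associated kinetic energy would be infinite; the two cannot be combined as they stand. I would bridge them by localizing the \emph{full} Fock-space state $\Gamma$ with the smooth cutoff $\sqrt{\xi_{\ell,\delta,j}}$, obtaining a state $\Gamma_{z,j}$ of density $\xi_{\ell,\delta,j}\rho$, one-particle density matrix $\gamma_{z,j}$, and reduced density matrices $(\sqrt{\xi_{\ell,\delta,j}})^{\otimes k}\Gamma^{(k)}(\sqrt{\xi_{\ell,\delta,j}})^{\otimes k}$. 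Its screened Coulomb energy (with background $-\xi_{\ell,\delta,j}\rho$) differs from the sharp-tile local screened energy of $\Gamma$ only through terms supported in the transition shell of volume $O(\ell^2\delta)$, and a careful estimate of this discrepancy — the hard point — should bound it, after the $\tau$-average, by $C\delta^3\ell^{-1}\int_{\R^3}\rho^2$. Granting this, $E(\xi_{\ell,\delta,j}\rho)\leq\langle K\rangle_{\Gamma_{z,j}}+\langle\mathcal{V}\rangle_{\Gamma_{z,j}}$ with $\langle K\rangle_{\Gamma_{z,j}}=\tr(-\Delta)\gamma_{z,j}$ for every tile; summing over $(z,j)$, collecting the kinetic localization error, the Graf--Schenker surface error and the shell error, and optimizing over $\Gamma$, one obtains~\eqref{eq:lower_bound_average}. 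The prefactor $(1-C\delta/\ell)$ comes from the normalization in the Graf--Schenker step, which reconstructs the full interaction only up to a relative $O(\delta/\ell)$ error once the tiles carry the mollified cutoffs. Finally~\eqref{eq:lower_bound} follows because for at least one isometry $(\tau,R)$ the sum $\sum_{z,j}E(\xi_{\ell,\delta,j}(R\,\cdot-\ell z-\tau)\rho)$ is no larger than its average.
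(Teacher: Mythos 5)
Your overall architecture (IMS localization for the kinetic energy, Graf--Schenker for the screened Coulomb energy, averaging over isometries of the tetrahedral tiling, then passing from the localized states to $E(\xi_{\ell,\delta,j}\rho)$ by minimality) is exactly the paper's, and your kinetic estimate and the final ``some isometry is at most the average'' step are fine. The problem is the step you yourself flag as the hard point: bridging between the \emph{sharp}-tile screened Coulomb energy produced by the ordinary Graf--Schenker inequality and the screened energy of the \emph{smoothly} localized state $\Gamma_{z,j}$. This discrepancy is not a one-body quantity. It has the form $\tr\big[\big(\xi_j(x)\xi_j(y)-\1_{\rm tile}(x)\1_{\rm tile}(y)\big)|x-y|^{-1}\,\Gamma^{(2)}\big]$ plus the corresponding background corrections, i.e.\ it involves the two-particle correlations of an arbitrary near-minimizer in the transition shells, with a Coulomb singularity. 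Without some positive-definiteness structure there is no way to dominate this by $C\delta^3\ell^{-1}\int\rho^2$ (or by any functional of $\rho$ alone): the bound~\eqref{eq:general_method_lower_bound} that converts $\cC_w-D_w$ into a density functional requires $w\geq0$ with $\widehat w\geq 0$ or $w\in L^1$ with $w\geq0$, and the difference potential here has neither sign property. So as written the proposal ``grants'' precisely the estimate that carries all the difficulty.

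The paper sidesteps this entirely by never using the sharp Graf--Schenker inequality: it invokes the \emph{smeared} Graf--Schenker inequality (Lemma~6 of~\cite{GraSch-95}), in which the tiling functions are already the mollified $\1_{\ell\bDelta}\ast\eta_\delta$. The statement is that
$$\tilde w_\ell(x)=\frac1{|x|}-\Big(1-\frac{\kappa\delta}\ell\Big)\frac{\tilde h_{\ell,\delta}(x)}{|x|}-\frac{\kappa\delta^2}{\ell}\frac{\tilde h_{\ell,\delta}(0)}{|x|(\delta+|x|)}$$
has positive Fourier transform, where $\tilde h_{\ell,\delta}$ is built from $\big(\1_{\ell\bDelta}\ast\eta_\delta\big)\ast\big(\1_{-\ell\bDelta}\ast\eta_\delta\big)$ averaged over rotations and translations. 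Applying~\eqref{eq:general_method_lower_bound} to $\tilde w_\ell$ (first case) and to the remaining correction $|x|^{-1}(\delta+|x|)^{-1}-\delta^{-1}e^{-\sqrt2|x|/\delta}|x|^{-1}$ plus its Yukawa remainder (second case) directly produces the geometrically localized states $\Gamma_{|\sqrt{\xi_{\ell,\delta,j}}}$ with the smooth densities $\xi_{\ell,\delta,j}\rho$ — no sharp-to-smooth comparison is ever made — and the term $C\delta^3\ell^{-1}\int_{\R^3}\rho^2$ arises as the $L^1$-norm of $\ell^{-1}\delta\,e^{-\sqrt2|x|/\delta}|x|^{-1}$, not as a shell estimate. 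In short: the missing estimate in your proof is essentially equivalent to re-proving the smeared Graf--Schenker lemma, and you should cite and use that lemma directly rather than the sharp version.
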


\begin{proof}
For a state $\Gamma=\bigoplus_{n\geq0}\Gamma_n$ on Fock space (commuting with the particle number operator) and an interaction potential $w$, we introduce the simplified notation
\begin{equation}
\cC_w(\Gamma):= \sum_{n\geq2}\tr_{\gH^n}\left(\sum_{1\leq j<k\leq n}w(x_j-x_k)\right)\Gamma_n
 \label{eq:def_short_Coulomb}
\end{equation}
and
\begin{equation}
D_w(\rho):=\frac{1}{2}\int_{\R^3}\int_{\R^3}w(x-y)\rho(x)\rho(y)\,dx\,dy.
\end{equation}
For the Coulomb potential $w(x)=|x|^{-1}$ we simply use the notation $\cC(\Gamma)$ and $D(\rho)$. For the kinetic energy, we write
\begin{equation}
\cT(\Gamma):=\sum_{n\geq1} \tr_{\gH^n}\Bigg(-\sum_{j=1}^n\Delta_{x_j}\Bigg)\Gamma_n
\end{equation}
and finally denote by
$$\cE(\Gamma):=\cT(\Gamma)+\cC(\Gamma)-D(\rho_\Gamma).$$
the total energy, with the direct term subtracted.

The proof uses the well known fact that, for any interaction potential $w$ and any state $\Gamma$ on Fock space, we have
\begin{equation}
\cC_w(\rho)-D_w(\rho)\geq 
\begin{cases}
\dps-\frac{w(0)}{2}\int_{\R^3}\rho_\Gamma&\text{when $\widehat{w}\in L^1(\R^d)$ and $\widehat{w}\geq0$,}\\
\dps-\frac{\int_{\R^3}w}{2}\int_{\R^3}(\rho_\Gamma)^2&\text{when $w\in L^1(\R^d)$ and $w\geq0$.}
\end{cases}
\label{eq:general_method_lower_bound}
\end{equation}
For the first bound see for instance~\cite[Eq.~(4.8)]{LewLieSei-18}. The second bound uses only that $\cC_w(\Gamma)\geq0$ and $D_w(\rho)\leq \norm{w}_{L^1}\norm{\rho}_{L^2}^2/2$, by Young's inequality. 

We are now ready to prove~\eqref{eq:lower_bound}. The smeared Graf-Schenker inequality from~\cite[Lemma~6]{GraSch-95} states that the potential 
$$\tilde w_\ell(x)=\frac1{|x|}-\left(1-\frac{\kappa\delta}\ell\right)\frac{\tilde h_{\ell,\delta}(x)}{|x|}-\frac{\kappa\delta^2}{\ell}\frac{\tilde h_{\ell,\delta}(0)}{|x|\left(\delta+|x|\right)}$$
has a positive Fourier transform for all $\ell>\kappa\delta$, with $\tilde w_\ell(0)=-\kappa\ell^{-1}\tilde{h}_{\ell,\delta}(0)$, where 
\begin{align*}
\tilde h_{\ell,\delta}(x-y)&=\frac1{|\ell\bDelta|}\int_{SO(3)}\big(\1_{\ell\bDelta}\ast\eta_\delta\big)\ast\big(\1_{-\ell\bDelta}\ast\eta_\delta\big)(Rx-Ry)\,dR\\
&=\frac1{|\ell\bDelta|}\int_{SO(3)}\int_{\R^3}\big(\1_{R^{-1}\ell \bDelta+z}\ast\eta_\delta\big)(y)\big(\1_{R^{-1}\ell \bDelta+z}\ast\eta_\delta\big)(x)\,dz\,dR.
\end{align*}
Here $\bDelta$ is a tetrahedron and $\kappa>0$ is a large enough constant. 
In addition, we have from~\cite[Proof of Lem.~5.5]{LewLieSei-18} that the potential
$$\tilde W_\delta(x)=\frac{1}{|x|(\delta+|x|)}-\frac{e^{-\frac{\sqrt{2}}\delta |x|}}{\delta|x|}$$
is positive and has positive Fourier transform, with $\tilde W_\delta(0)=\sqrt{2}\delta^{-2}$. 

Arguing exactly as in~\cite[Lem.~5.5]{LewLieSei-18} using~\eqref{eq:general_method_lower_bound}, we find that for any fermionic grand-canonical mixed state $\Gamma=\oplus_{n\geq0}\Gamma_n$ with density $\rho_\Gamma$, we have
\begin{align}
\cC(\Gamma)-D(\rho_\Gamma)\geq&\frac{1-{\kappa\delta}/\ell}{\ell^3}\sum_{z\in\Z^3}\sum_{j=1}^{24}\int_{SO(3)}\int_{C_\ell} \bigg\{\cC\left(\Gamma_{|\sqrt{\xi_{\ell,\delta,j}(R\,\cdot\,-\ell z-\tau)}}\right)\nn\\
&\qquad\qquad-D\Big(\xi_{\ell,\delta,j}(R\,\cdot\,-\ell z-\tau)\rho_\Gamma\Big)\bigg\}\,dR\,d\tau\nn\\
&-\frac{C}{\ell}\int_{\R^3}\rho_\Gamma-\frac{C\delta^3}{\ell}\int_{\R^3}(\rho_\Gamma)^2.
\label{eq:smeared_Graf_Schenker_Coulomb}
\end{align}
Here $\Gamma_{|f}$ is the geometrically $f$--localized state on Fock space~\cite{DerGer-99,HaiLewSol_2-09,Lewin-11}, that is, the unique state which has the $k$-particle reduced density matrices $f^{\otimes k}\Gamma^{(k)}f^{\otimes k}$. The last term proportional to $\delta^3/\ell$ comes from the $L^1$ norm of  $\ell^{-1}\delta e^{-\frac{\sqrt{2}}\delta |x|}|x|^{-1}$.

For the kinetic energy we use the IMS formula as in~\cite{GraSch-95,HaiLewSol_2-09} and~\cite[Lem.~5.6]{LewLieSei-18}, which yields an error in the form
$$\frac{N}{\ell^3}\int_{\R^3}|\nabla\sqrt{\xi_{\ell,\delta,j}}|^2=O\left(\frac{N}{\ell\delta}\right),$$
where $N=\int_{R^3}\rho_\Gamma$. For the total energy we obtain
\begin{multline}
\cE(\Gamma)\geq\frac{1-{\kappa\delta}/\ell}{\ell^3}\sum_{z\in\Z^3}\sum_{j=1}^{24}\int_{SO(3)}\int_{C_\ell} \cE\left(\Gamma_{|\sqrt{\xi_{\ell,\delta,j}(R\,\cdot\,-\ell z-\tau)}}\right)dR\,d\tau\\
-\frac{C}{\ell}\left(1+\frac{1}{\delta}\right)\int_{\R^3}\rho_\Gamma-\frac{C\delta^3}{\ell}\int_{\R^3}(\rho_\Gamma)^2,
\label{eq:smeared_Graf_Schenker_total_energy}
\end{multline}
which yields the result.
\end{proof}

\subsection{A convergence rate for tetrahedra}\label{sec:CV_rate_tetrahedra}

In this section we study the convergence of the energy per unit volume for tetrahedra and find a convergence rate. We introduce the energy per unit volume of a tetrahedron at constant density $\rho_0>0$
\begin{equation}
e_\bDelta(\rho_0,\ell,\delta):=|\ell\bDelta|^{-1}E\Big(\rho_0\,\1_{\ell\bDelta}\ast \eta_\delta\Big)
\label{eq:def_energy_tetrahedron}
\end{equation}
where $\eta_\delta(x)=(10/\delta)^{3}\eta_1(10x/\delta)$ with $\eta_1$ a fixed $C^\ii_c$ non-negative radial function with support in the unit ball and such that $\int_{\R^3}\eta_1=1$. We prove the following

\begin{proposition}[Thermodynamic limit for tetrahedra]\label{prop:estimates_tetrahedra}
For every fixed $\rho_0>0$, we have
\begin{equation}
\lim_{\substack{\delta/\ell\to0\\\delta^3/\ell\to0\\\ell\delta\to\ii}}e_\bDelta(\rho_0,\ell,\delta)=e_{\rm UEG}(\rho_0).
\label{eq:CV_energy_tetrahedron}
\end{equation}
For $\delta\leq \ell/C$ and $0<\alpha<1/2$, we have the upper bound
\begin{equation}
e_\bDelta(\rho_0,\ell,\delta)\leq e_{\rm UEG}(\rho_0)+C\frac{\rho_0}{\ell}\left(1+\delta^{-1}+\delta^3\rho_0+\delta\rho_0^{2/3}\right),
\label{eq:upper_bound_tetrahedron}
\end{equation}
and the averaged lower bound
\begin{equation}
\left(\int_{1-\alpha}^{1+\alpha}\frac{ds}{s^4}\right)^{-1}\int_{1-\alpha}^{1+\alpha}e_\bDelta(\rho_0,t\ell,t\delta)\,\frac{dt}{t^4}
\geq e_{\rm UEG}(\rho_0)-C\delta^2\rho_0^2\log(\alpha^{-1}).
\label{eq:lower_bound_tetrahedron}
\end{equation}
If in addition $\rho_0^{1/3}\ell\geq C$, we have the pointwise lower bound
\begin{equation}
 e_\bDelta(\rho_0,\ell,\delta)\geq e_{\rm UEG}(\rho_0) - C\delta\frac{\rho_0^{5/3}+\rho_0^{4/3}}{\ell}-C\frac{\rho_0^{23/15}+\rho_0^{18/15}}{\ell^{2/5}}.
 \label{eq:lower_bound_tetrahedron_worse}
\end{equation}
The constant $C$ only depends on the chosen regularizing function $\eta_1$. It is independent of $\rho_0,\ell,\delta,\alpha$.
\end{proposition}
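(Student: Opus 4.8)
The plan is to treat the four assertions in logical order, since the convergence~\eqref{eq:CV_energy_tetrahedron} will follow by combining the upper bound~\eqref{eq:upper_bound_tetrahedron} with the two lower bounds, and these in turn rest on the machinery of the previous two subsections specialized to a constant density on a single tetrahedron. For the upper bound~\eqref{eq:upper_bound_tetrahedron}, I would apply Proposition~\ref{prop:upper_bound} to the density $\rho=\rho_0\1_{\ell\bDelta}\ast\eta_\delta$ but with a \emph{finer} tiling of side length $\ell'\ll\ell$ and hole parameter $\delta'$; inside the big tetrahedron almost all the small tiles see a density exactly equal to $\rho_0$, so $E(\chi_{\ell',\delta',j}\rho)$ is exactly a rescaled copy of $E(\rho_0\1_{\ell'\bDelta}\ast\eta_{\delta'})$, i.e.\ $|\ell'\bDelta|\,e_\bDelta(\rho_0,\ell',\delta')$. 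Summing over the $\sim(\ell/\ell')^3$ interior tiles reproduces $|\ell\bDelta|\,e_\bDelta(\rho_0,\ell',\delta')$ up to a boundary-layer error of relative size $\delta/\ell$ (from the smearing of $\1_{\ell\bDelta}$) plus the $C\delta'^2\int\rho^2 = C\delta'^2\rho_0^2|\ell\bDelta|$ error from Proposition~\ref{prop:upper_bound}; using additionally the a~priori bound $e_\bDelta\le E$-upper-bound of Corollary~\ref{cor:upper_bound_E_rho} on the boundary tiles and the Lieb--Oxford/Lieb--Thirring control to keep $e_{\rm UEG}(\rho_0)\sim\rho_0^{5/3}$, one iterates $\ell'\to0$ to land on $e_{\rm UEG}(\rho_0)$ with the stated remainder. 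The terms $\delta^{-1}$, $\delta^3\rho_0$, $\delta\rho_0^{2/3}$ correspond respectively to the kinetic cost in the transition layer, the $\delta^3\rho^2$ error term, and the $\rho_0^{4/3}$ (exchange) scaling of the boundary contribution.

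For the averaged lower bound~\eqref{eq:lower_bound_tetrahedron}, the cleanest route is \emph{not} to use Proposition~\ref{prop:lower_bound} directly but to run the Graf--Schenker argument of Section~\ref{sec:lower_bound} ``in reverse'': the existence of the thermodynamic limit means $e_{\rm UEG}(\rho_0)=\inf$ of the rescaled energies, so I would compare $e_\bDelta(\rho_0,\ell,\delta)$ against the same quantity at a coarser scale $L\ell$ and take $L\to\infty$. Equivalently — and this is how I expect the paper proceeds — one applies the smeared Graf--Schenker inequality to a state on a large domain built by tiling with copies of $\rho_0\1_{\ell\bDelta}\ast\eta_\delta$, extracts $e_\bDelta(\rho_0,\ell,\delta)\ge e_{\rm UEG}(\rho_0)-C\delta^2\rho_0^2\log(\alpha^{-1})$ after the $t$-average over dilations that was precisely designed in Proposition~\ref{prop:upper_bound} to convert the $\1_{A_\alpha}\ast|\cdot|^{-2}$ factor into the $\log(\alpha^{-1})$ of Lemma~\ref{lem:estim_potential_annulus}. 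The pointwise lower bound~\eqref{eq:lower_bound_tetrahedron_worse} is obtained by a different, cruder estimate: start from Corollary~\ref{cor:lower_bound_E_rho} (or rather its gradient-improved versions~\eqref{eq:lower_bound_Nam},~\eqref{eq:Lieb-Oxford-gradient}) to get a clean lower bound on $E(\rho_0\1_{\ell\bDelta}\ast\eta_\delta)$ in terms of $\rho_0^{5/3}|\ell\bDelta|$, $\rho_0^{4/3}|\ell\bDelta|$ and $\int|\nabla\sqrt\rho|^2\sim\rho_0|\partial(\ell\bDelta)|/\delta\sim\rho_0\ell^2/\delta$, then optimize the remaining free parameter (a second, internal smearing scale) to balance $\delta\rho_0^{5/3}/\ell$ against a term of order $\ell^{-2/5}$, which produces the exponents $23/15=5/3+2/5$ and $18/15=4/3+\text{(correction)}$ appearing in~\eqref{eq:lower_bound_tetrahedron_worse}.

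Finally, to deduce~\eqref{eq:CV_energy_tetrahedron}: along any sequence with $\delta/\ell\to0$, $\delta^3/\ell\to0$, $\ell\delta\to\infty$, the upper bound~\eqref{eq:upper_bound_tetrahedron} gives $\limsup\le e_{\rm UEG}(\rho_0)$ since each of $1/\ell$, $\delta^{-1}/\ell=1/(\ell\delta)$, $\delta^3\rho_0/\ell$, $\delta\rho_0^{2/3}/\ell$ tends to $0$; for the $\liminf$, choose $\alpha=\alpha(\ell,\delta)\to0$ slowly enough in~\eqref{eq:lower_bound_tetrahedron} so that $\delta^2\log(\alpha^{-1})\to0$, combined with a mild equicontinuity of $t\mapsto e_\bDelta(\rho_0,t\ell,t\delta)$ in the dilation parameter (which follows from the same upper bound applied at nearby scales) to remove the $t$-average, giving $\liminf\ge e_{\rm UEG}(\rho_0)$; the pointwise bound~\eqref{eq:lower_bound_tetrahedron_worse} serves as a backup when $\rho_0^{1/3}\ell\ge C$. \textbf{The main obstacle} I anticipate is the bookkeeping in the upper bound~\eqref{eq:upper_bound_tetrahedron}: one must apply Proposition~\ref{prop:upper_bound} with a tiling adapted to a domain that is itself a (smeared) tetrahedron of a \emph{different} size, count interior versus boundary tiles correctly so that the interior sum collapses exactly to $|\ell\bDelta|\,e_\bDelta(\rho_0,\ell',\delta')$, and control the boundary tiles — where the density is neither $\rho_0$ nor zero — using Corollary~\ref{cor:upper_bound_E_rho} together with the Hoffmann--Ostenhof bound on $\int|\nabla\sqrt\rho|^2$ over the $O(\delta)$-neighborhood of $\partial(\ell\bDelta)$; making all of this uniform in $\rho_0$ (not just for $\rho_0$ of order one) is what forces the particular combination of error terms displayed in the statement.
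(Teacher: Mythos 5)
Your overall instinct --- that the proposition follows from a two-scale comparison using Propositions~\ref{prop:upper_bound} and~\ref{prop:lower_bound} together with the known existence of the thermodynamic limit --- is right, but you have the directions systematically reversed, and this is fatal to the argument as written. For the upper bound~\eqref{eq:upper_bound_tetrahedron} you propose to tile $\ell\bDelta$ by \emph{finer} tetrahedra of size $\ell'\ll\ell$, apply Proposition~\ref{prop:upper_bound}, and ``iterate $\ell'\to0$ to land on $e_{\rm UEG}(\rho_0)$''. But $e_\bDelta(\rho_0,\ell',\delta')$ converges to $e_{\rm UEG}(\rho_0)$ only as $\ell'\to\ii$ (and in fact blows up as $\ell'\to0$ because of the kinetic cost $\sim(\ell'\delta')^{-1}$ of the transition layer), so comparing with smaller scales can never produce $e_{\rm UEG}(\rho_0)$ on the right-hand side. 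The correct argument goes the other way: embed $\ell\bDelta$ as one tile of a tiling of a \emph{much larger} tetrahedron $\ell'\bDelta$ with constant density $\rho_0$, apply the Graf--Schenker \emph{lower} bound (Proposition~\ref{prop:lower_bound}) to the large domain so that $e_\bDelta(\rho_0,\ell',\delta')\geq(1-C\delta/\ell-C(\ell+\delta+\delta')/\ell')\,e_\bDelta(\rho_0,\ell,\delta)-\text{errors}$, control the boundary tiles with Corollary~\ref{cor:lower_bound_E_rho}, and then send $\ell'\to\ii$ so that the left side becomes $e_{\rm UEG}(\rho_0)$; this yields an \emph{upper} bound on $e_\bDelta(\rho_0,\ell,\delta)$. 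Dually, the averaged lower bound~\eqref{eq:lower_bound_tetrahedron} is obtained by applying the \emph{upper} bound of Proposition~\ref{prop:upper_bound} (not Graf--Schenker, as you suggest) to the large tetrahedron tiled at scales $t\ell$, $t\delta$, whose $C\delta^2\log(\alpha^{-1})\int\rho^2$ error is exactly the stated $C\delta^2\rho_0^2\log(\alpha^{-1})$, and again letting $\ell'\to\ii$.

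Your route to the pointwise lower bound~\eqref{eq:lower_bound_tetrahedron_worse} also cannot work: Corollary~\ref{cor:lower_bound_E_rho} and its gradient-improved versions bound $E$ from below by explicit Thomas--Fermi and Lieb--Oxford terms, and no optimization of a smearing parameter can convert those into the non-explicit thermodynamic quantity $e_{\rm UEG}(\rho_0)$. What is actually needed is a second two-scale step: apply the Graf--Schenker lower bound to the tetrahedron of size $\ell'$ with intermediate tiles of size $t\ell$, average over $t$, insert the already-established averaged lower bound~\eqref{eq:lower_bound_tetrahedron} for the intermediate tiles, and then optimize the free intermediate scales (the choices $\delta=\ell^{-1/3}\rho_0^{-4/9}$ and $\ell=(\ell')^{3/5}\rho_0^{-2/15}$ produce the exponents $23/15$ and $2/5$). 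Your final deduction of~\eqref{eq:CV_energy_tetrahedron} from the two one-sided bounds is fine in principle, but as it stands it rests on the two bounds you have not correctly established.
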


We will later see that the condition $\delta^3/\ell\to0$ is actually not needed in the limit~\eqref{eq:CV_energy_tetrahedron}. It is an interesting problem to replace the error term in the lower bound~\eqref{eq:lower_bound_tetrahedron_worse} by an error similar to the upper bound~\eqref{eq:upper_bound_tetrahedron}. Note that the error term in~\eqref{eq:lower_bound_tetrahedron} goes to zero only when $\delta\to0$ whereas~\eqref{eq:lower_bound_tetrahedron_worse} does not require $\delta\to0$.

\begin{proof}
For fixed $\rho_0>0$ and $\delta>0$, the existence of the limit~\eqref{eq:CV_energy_tetrahedron} for $\ell\to\ii$ was proved in~\cite{LewLieSei-18}, using a lower bound similar to~\eqref{eq:lower_bound}. 

We consider a large tetrahedron $\ell'\bDelta$, smeared at a scale $\delta'$ and a tiling of smaller tetrahedra of size $\ell\ll\ell'$, smeared at scale $\delta$. Applying our lower bound~\eqref{eq:lower_bound_average}, we find
\begin{multline*}
e_\bDelta(\rho_0,\ell',\delta')\\
\geq \frac{1-C\frac{\delta}\ell}{|\ell'\bDelta|\ell^3}\sum_{z\in\Z^3}\sum_{j=1}^{24}\int_{C_\ell}\int_{SO(3)}E\Big(\rho_0\,(\1_{R(\ell\mu_j\bDelta-\ell z-\tau)}\ast \eta_\delta)\,(\1_{\ell'\bDelta}\ast \eta_{\delta'})\Big)dR\,d\tau\\
-\frac{C\rho_0}{\ell}\Big(1+\delta^{-1}+\delta^3\rho_0\Big)
\end{multline*}
where for the error term we have used that 
$$\int_{\R^3}(\rho_0\1_{\ell'\bDelta}\ast \eta_{\delta'})^2=\rho_0^2\int_{\R^3}(\1_{\ell'\bDelta}\ast \eta_{\delta'})^2\leq \rho_0^2\int_{\R^3}\1_{ \ell'\bDelta}\ast \eta_{\delta'}=\rho_0^2|\ell'\bDelta|.$$
For all the tetrahedra such that $R(\ell\mu_j\bDelta-\ell z-\tau) +B_{\delta/10}\subset (\ell'-\delta')\bDelta$, we obtain exactly $|\ell\bDelta|\,e_\bDelta(\rho_0,\ell,\delta)$ in the integral. The other tetrahedra are at a distance proportional to $\ell+\delta+\delta'$ from the boundary of $\ell'\bDelta$. Hence, using our lower bound~\eqref{eq:lower_bound_E} on the energy, they give rise to an error term of the order $\rho_0^{4/3}(\ell+\delta+\delta')/\ell'$. We obtain
\begin{multline}
e_\bDelta(\rho_0,\ell',\delta')\geq \left(1-C\sigma\frac{\delta}\ell-C\sigma\frac{\ell+\delta+\delta'}{\ell'}\right)e_\bDelta(\rho_0,\ell,\delta)\\-C\rho_0^{4/3}\frac{\ell+\delta+\delta'}{\ell'}
-\frac{C\rho_0}{\ell}\Big(1+\delta^{-1}+\delta^3\rho_0\Big).
\label{eq:lower_bound_tetrahedra_two_scales}
\end{multline}
Here $\sigma=1$ if $e_\bDelta(\rho_0,\ell,\delta)\geq0$ and $\sigma=0$ otherwise.

After taking the limit $\ell'\to\ii$ at fixed $\ell,\delta,\delta',\rho_0$, we obtain 
\begin{equation*}
e_{\rm UEG}(\rho_0)\geq \left(1-C\sigma\frac{\delta}\ell\right)e_\bDelta(\rho_0,\ell,\delta) -\frac{C\rho_0}{\ell}\Big(1+\delta^{-1}+\delta^3\rho_0\Big).
\end{equation*}
It follows from our upper bound~\eqref{eq:upper_bound_E} (see also~\cite[Rmk.~5.4]{LewLieSei-18}) that 
$$e_{\rm UEG}(\rho_0)\leq q^{-2/3}c_{\rm TF}\rho_0^{5/3}$$
for all $\rho_0>0$. Hence after dividing by $1-C\sigma\delta/\ell$  we have shown the claimed upper bound
\begin{equation*}
e_\bDelta(\rho_0,\ell,\delta)\leq e_{\rm UEG}(\rho_0)+C\frac{\rho_0}{\ell}\left(1+\delta^{-1}+\delta^3\rho_0+\delta\sigma\rho_0^{2/3}\right).
\end{equation*}
We may use exactly the same argument using our upper bound~\eqref{eq:upper_bound_average} in place of~\eqref{eq:lower_bound} and we obtain the lower bound~\eqref{eq:lower_bound_tetrahedron}.

Next we replace $\ell$ by $t\ell$ and $\delta$ by $t\delta$ in our lower bound~\eqref{eq:lower_bound_tetrahedra_two_scales}  on the energy of the large simplex of size $\ell'$, and  average over $t\in(1/2,3/2)$ with the measure $t^{-4}$. We then insert our lower bound~\eqref{eq:lower_bound_tetrahedron} and, after collecting the different error terms, we obtain
\begin{multline*}
e_\bDelta(\rho_0,\ell',\delta')\geq e_{\rm UEG}(\rho_0) - C\frac{\ell+\delta+\delta'}{\ell'}(\rho_0^{5/3}+\rho_0^{4/3})\\
-C\frac{\rho_0}{\ell}\left(1+\delta^{-1}+\delta^3\rho_0+\delta\rho_0^{2/3}\right)-C\delta^2\rho_0^2.
\end{multline*}
It is natural to choose $\delta=\ell^{-1/3}(\rho_0)^{-4/9}$ which provides the estimate
\begin{multline*}
e_\bDelta(\rho_0,\ell',\delta')\geq e_{\rm UEG}(\rho_0) - C\frac{\ell+\ell^{-1/3}(\rho_0)^{-4/9}+\delta'}{\ell'}(\rho_0^{5/3}+\rho_0^{4/3})\\
-C\frac{\rho_0^{13/9}+\rho_0^{10/9}}{\ell^{2/3}}-C\frac{\rho_0^{11/9}}{\ell^{4/3}}-C\frac{\rho_0^{2/3}}{\ell^2}
\label{eq:lower_bound_to_be_optimized_over_once_again}
\end{multline*}
and then $\ell=(\ell')^{3/5}\rho_0^{-2/15}$ which gives
\begin{equation*}
e_\bDelta(\rho_0,\ell',\delta')\geq e_{\rm UEG}(\rho_0) - C\frac{\delta'}{\ell'}(\rho_0^{5/3}+\rho_0^{4/3})-C\frac{\rho_0^{23/15}+\rho_0^{18/15}}{(\ell')^{2/5}}
\label{eq:lower_bound_optimized}
\end{equation*}
under the assumption that $\ell'(\rho_0)^{1/3}\geq C$. This is exactly~\eqref{eq:lower_bound_tetrahedron_worse}. The two bounds~\eqref{eq:upper_bound_tetrahedron} and~\eqref{eq:lower_bound_tetrahedron_worse} give the limit~\eqref{eq:CV_energy_tetrahedron}.
\end{proof}

\subsection{Proof of Theorem~\ref{thm:UEG_thermo_limit_quantum}}\label{sec:proof_thermo_limit}

Let $\Omega_N$ be a sequence of domains as in the statement, that is, such that $|\Omega_N|\to\ii$ and $|\partial\Omega_N+B_r|\leq Cr|\Omega_N|^{2/3}$ for all $r\leq |\Omega_N|^{1/3}/C$. Assume also that $\delta_N|\Omega_N|^{-1/3}\to0$. 

By following the proof of~\eqref{eq:lower_bound_tetrahedra_two_scales} we see that a similar inequality holds with the large tetrahedron $\ell'\bDelta$ replaced by $\Omega_N$. This gives
\begin{multline*}
\frac{E\big(\rho_0\1_{\Omega_N}\ast\eta_{\delta_N}\big)}{|\Omega_N|}\geq \left(1-C\sigma\frac{\delta}\ell-C\sigma\frac{\ell+\delta+\delta_N}{|\Omega_N|^{1/3}}\right)e_\bDelta(\rho_0,\ell,\delta)\\-C\rho_0^{4/3}\frac{\ell+\delta+\delta_N}{|\Omega_N|^{1/3}}
-\frac{C\rho_0}{\ell}\Big(1+\delta^{-1}+\delta^3\rho_0\Big). 
\end{multline*}
Under the sole condition that $\delta_N|\Omega_N|^{-1/3}\to0$, the right side tends to $e_{\rm UEG}(\rho_0)$ if we take for instance $\delta$ fixed and $\ell=|\Omega_N|^{1/6}$.

We then use the upper bound~\eqref{eq:upper_bound_average} with $\alpha=1/2$ as well as the fact that 
\begin{multline*}
E\Big(\rho_0\chi_{t\ell,t\delta,j}(R\,\cdot\,-t\ell z-\tau)\1_{\Omega_N}\ast\eta_{\delta_N}\Big)\\
\leq C\rho_0\ell^3\left(\rho_0^{2/3}+(\ell\delta)^{-1}\right)
+C\rho_0\int_{\R^3}\chi_{t\ell,t\delta,j}(R\,\cdot\,-t\ell z-\tau)\left|\nabla \sqrt{\1_{\Omega_N}\ast\eta_{\delta_N}}\right|^2
\end{multline*}
by~\eqref{eq:upper_bound_E}, for the tetrahedra close to the boundary. We find
\begin{multline}
 \frac{E\big(\rho_0\1_{\Omega_N}\ast\eta_{\delta_N}\big)}{|\Omega_N|}\\
 \leq  \left(1+C\sigma\frac{\ell+\delta+\delta_N}{|\Omega_N|^{1/3}}\right) \left(\int_{1/2}^{3/2}\frac{ds}{s^4}\right)^{-1}\int_{1/2}^{3/2}e_\bDelta(\rho_0,t\ell,t\delta)\frac{dt}{t^4}
 \\
 +C\frac{\ell+\delta+\delta_N}{|\Omega_N|^{1/3}}\rho_0\left(\rho_0^{2/3}+(\ell\delta)^{-1}\right)+\frac{C\rho_0}{\delta_N|\Omega_N|^{1/3}}+C\rho_0^2\delta^2.
\end{multline}
We have used here that 
$$\frac{1}{|\Omega_N|}\int_{\R^3}\left|\nabla \sqrt{\1_{\Omega_N}\ast\eta_{\delta_N}}\right|^2\leq \frac{C}{\delta_N|\Omega_N|^{1/3}}.$$
Under the additional assumption that $\delta_N|\Omega_N|^{1/3}\to\ii$, we may choose for instance $\ell=|\Omega_N|^{1/6}$ and $\delta=|\Omega_N|^{-1/12}$, which yields the result.\qed

\subsection{Replacing the local density by a constant density}\label{sec:deviation}

The goal of this section is to provide estimates on the variation of the energy in a (smeared) tetrahedron, when we replace the local density by a constant, chosen to be either the minimum or the maximum of the density in the tetrahedron.

\begin{proposition}[Replacing $\rho$ by a constant locally]\label{prop:replace_constant_density}
Let $p>3$ and $0<\theta<1$ such that 
\begin{equation}
 2\leq p\theta\leq 1+\frac{p}{2}.
 \label{eq:hyp_p_theta_proof}
\end{equation}
There exists a constant $C=C(p,\theta,q)$ such that, for $\ell\geq C$ and $\delta\leq \ell/C$, we have
\begin{align}
E\big(\rho\, (\1_{\ell\bDelta}\ast\eta_\delta)\big)\leq & E\left(\underline{\rho}\,(\1_{\ell\bDelta}\ast\eta_\delta)\right)+C\eps\int_{\R^3}\left(\rho+\rho^2\right)(\1_{\ell\bDelta}\ast\eta_\delta)\nn\\
&+C\int_{\R^3}\rho\left|\nabla\sqrt{\1_{\ell\bDelta}\ast\eta_\delta}\right|^2+\frac{C}\eps \int_{\R^3}|\nabla\sqrt\rho|^2(\1_{\ell\bDelta}\ast\eta_\delta)\nn\\
&+C\left(\frac{\ell^{2p}}{\eps^{p-1}}+\frac{\ell^p}{\eps^{\frac54 p-1}}\right)\int_{\ell\bDelta+B_\delta}|\nabla\rho^\theta|^p
\label{eq:replace_constant_rho_upper}
\end{align}
and
\begin{align}
 E\big(\rho\, (\1_{\ell\bDelta}\ast\eta_\delta)\big)\geq& E\left(\overline{\rho}\,(\1_{\ell\bDelta}\ast\eta_\delta)\right)-C\eps\ell^3\left(\overline\rho+\overline\rho^2\right)
-\frac{C\ell^2}{\delta}\overline{\rho}\nn\\
&-\frac{C}\eps \int_{\R^3}|\nabla\sqrt\rho|^2(\1_{\ell\bDelta}\ast\eta_\delta)\nn\\
&-C\left(\frac{\ell^{2p}}{\eps^{p-1}}+\frac{\ell^p}{\eps^{\frac54 p-1}}\right)\int_{\ell\bDelta+B_\delta}|\nabla\rho^\theta|^p
 \label{eq:replace_constant_rho_lower}
\end{align}
for all $0<\eps\leq1/2$, where
$$\underline{\rho} =\min_{x\in {\rm supp}(\1_{\ell\bDelta}\ast\eta_\delta)}\rho(x),\qquad \overline{\rho} =\max_{x\in {\rm supp}(\1_{\ell\bDelta}\ast\eta_\delta)}\rho(x)$$
are respectively the minimum and maximum value of $\rho$ on the support of $\1_{\ell\bDelta}\ast\eta_\delta$.
\end{proposition}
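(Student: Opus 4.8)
The plan is to prove both inequalities by producing explicit trial states, and the two statements are dual: for \eqref{eq:replace_constant_rho_upper} one must manufacture, out of a near-optimal grand-canonical state for $E(\underline\rho\,\psi)$ (with $\psi:=\1_{\ell\bDelta}\ast\eta_\delta$), a state of density $\rho\,\psi$; for \eqref{eq:replace_constant_rho_lower} one must manufacture, out of a near-optimal state for $E(\rho\,\psi)$, a state of density $\overline\rho\,\psi$. Since $\underline\rho\le\rho\le\overline\rho$ on $\mathrm{supp}\,\psi$, in both cases the core task is the same: given a grand-canonical state $\Gamma$ of density $\sigma$ supported in $\ell\bDelta+B_\delta$ and a non-negative density $\mu$ supported in the same region, build a state of density $\sigma+\mu$ whose energy exceeds $\cE(\Gamma)$ by at most the displayed amount. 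The quantities that enter are controlled first: since $p>3$, Morrey's inequality on the convex set $\ell\bDelta+B_\delta$ gives $\mathrm{osc}(\rho^\theta)\le C\ell^{1-3/p}\big(\int_{\ell\bDelta+B_\delta}|\nabla\rho^\theta|^p\big)^{1/p}$, and then $\rho=(\rho^\theta)^{1/\theta}$, the mean value theorem and the hypothesis $p\theta\ge 2$ convert this into bounds on $\overline\rho-\underline\rho$, on $\int_{\R^3}(\rho-\underline\rho)^2\psi$ and $\int_{\R^3}(\overline\rho-\rho)^2\psi$, and on the corresponding first powers, in terms of $\overline\rho$, $\ell$ and $\int_{\ell\bDelta+B_\delta}|\nabla\rho^\theta|^p$; the upper constraint $p\theta\le 1+p/2$ is precisely what makes the resulting powers of $\overline\rho$ absorbable, via Young's inequality, into the $\rho+\rho^2$ term at the price of the displayed negative powers of $\eps$.

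The step where the fermionic nature bites is adding $\mu$: a state of density $\sigma$ has spatial support inside $\mathrm{supp}\,\sigma\subseteq\mathrm{supp}\,\psi$, so a naive antisymmetrized tensor product would both be illegitimate and spoil the prescribed density. Instead we invoke the hole mechanism of Proposition~\ref{prop:upper_bound} at an auxiliary fine scale $\ell'\ll\ell$ inside $\ell\bDelta$: by the construction underlying that proposition there is a trial state for $\sigma$ which, on a holey tiling at scale $\ell'$, is a tensor product over the non-overlapping tiles with the holes (of relative volume $\sim\delta'/\ell'$) left empty, at energy at most $\sum_i E(\chi_i\sigma)+C(\delta')^2\log(\alpha^{-1})\int\sigma^2$ before the holes are averaged away. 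Into those holes — averaged over the translations, rotations and dilations of the tiling, which spreads $\mu$ evenly over them — we insert an independent fermionic state carrying $\mu$, whose kinetic and Coulomb energy are bounded by Corollary~\ref{cor:upper_bound_E_rho}, hence by a multiple of $\int\mu^{5/3}+\int|\nabla\sqrt\mu|^2$, which is small by the oscillation step; antisymmetrizing this against $\Gamma$ is now legitimate since supports are disjoint. The cross interaction and the nonlinearity of the direct term, handled exactly as in the proof of Proposition~\ref{prop:upper_bound}, contribute errors of the form $\int(\sigma+\mu)^2$ and $\int(\sigma+\mu)$; after optimizing $\ell'$ and $\delta'$ and using the first step these are dominated by $C\eps\int(\rho+\rho^2)\psi+C\big(\ell^{2p}\eps^{1-p}+\ell^p\eps^{1-\frac54 p}\big)\int_{\ell\bDelta+B_\delta}|\nabla\rho^\theta|^p$.

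It remains to account for the cost of the cutoff $\psi$ and of shaping the profile, which is done as in the proof of Theorem~\ref{thm:upper_bound_T}: the smooth layer-cake trial state built on the $x$-dependent profile $\rho(x)\psi(x)$ rather than a constant carries an extra kinetic energy at most $\frac{C}{\eps}\int|\nabla\sqrt\rho|^2\psi+C\int\rho\,|\nabla\sqrt\psi|^2$; in the lower bound the cutoff contribution is estimated by $\int\overline\rho\,|\nabla\sqrt\psi|^2\le C\ell^2\overline\rho/\delta$. Collecting all contributions yields \eqref{eq:replace_constant_rho_upper}, and \eqref{eq:replace_constant_rho_lower} follows by running the same scheme with the roles of $\rho\psi$ and $\overline\rho\psi$ interchanged and Proposition~\ref{prop:lower_bound} in place of Proposition~\ref{prop:upper_bound}, the minus signs and the $\ell^2\overline\rho/\delta$ term arising because the a priori known object is now the variable-density energy.

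The main obstacle is this second step: one genuinely has to raise the density of a \emph{correlated} fermionic state by a small prescribed amount while keeping it a legitimate state and preserving its correlation energy — one cannot conjugate by $\sqrt{\rho/\underline\rho}\ge 1$ without violating the Pauli principle, and a quasi-free state of density $\rho\psi$ would cost the full Thomas--Fermi energy instead of $E(\underline\rho\psi)$. The holey-partition device circumvents this at the cost of the $\int\rho^2$-type error of Proposition~\ref{prop:upper_bound}, and the delicate part is the bookkeeping that balances this error, the Morrey oscillation bound, and the auxiliary scales $\ell',\delta'$ against one another so as to produce the stated powers of $\eps$ and $\ell$ under the constraint $2\le p\theta\le 1+\tfrac p2$.
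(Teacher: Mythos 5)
You correctly identify the central difficulty (raising the density of a correlated fermionic state without destroying it) and the auxiliary ingredients (Morrey/Sobolev oscillation bound for $\rho^\theta$, the gradient cost of the cutoff $\psi=\1_{\ell\bDelta}\ast\eta_\delta$, the role of the two constraints in~\eqref{eq:hyp_p_theta_proof}), but the mechanism you propose for the key step does not work, and the paper uses a different and much simpler idea that you have not found.

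The gap: your plan is to take a near-optimal state $\Gamma$ for $\underline\rho\,\psi$, re-run the holey-partition construction of Proposition~\ref{prop:upper_bound} at a finer scale $\ell'$, and ``insert'' a state of density $\mu=(\rho-\underline\rho)\psi$ into the holes. This cannot produce a state of density $\rho\,\psi$. If you average the holey tiling over translations/rotations/dilations (as Proposition~\ref{prop:upper_bound} requires), the resulting state already has density exactly $\underline\rho\,\psi$ everywhere: the holes are integrated away, there is no region of extra room left to put $\mu$. If instead you freeze the tiling, then the state with density $\underline\rho\,\psi\sum_i\chi_i$ has holes, but $\sum_i\chi_i$ equals $(1-\delta'/\ell')^{-3}>1$ in the bulk tiles; to reach $\rho\,\psi$ you would need to add $\rho\,\psi-\underline\rho\,\psi\sum_i\chi_i$, which is negative in the bulk and cannot be the density of a state. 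Either way the holey device does not convert a state of density $\sigma$ into one of density $\sigma+\mu$, so the sentence ``the holey-partition device circumvents this'' describes a step that is not actually available.

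What the paper does instead is Lemma~\ref{lem:subadditivity_worse}, a one-line convexity trick at the level of grand-canonical states: pick $\Gamma_1$ optimal for $\rho_1$ and $\Gamma_2$ optimal for $\rho_1+\rho_2/\eps$, and take the mixture $\Gamma=(1-\eps)\Gamma_1+\eps\Gamma_2$, whose density is exactly $\rho_1+\rho_2$ by linearity of $\Gamma\mapsto\rho_\Gamma$. Controlling $E(\rho_1+\rho_2/\eps)$ by Corollary~\ref{cor:upper_bound_E_rho} and computing the mismatch in the direct term gives precisely the error
$\frac{1-\eps}{\eps}D(\rho_2)+C\eps\int(\rho_1^{5/3}+\rho_1^{4/3})+C\eps^{-2/3}\int\rho_2^{5/3}+C\int|\nabla\sqrt{\rho_2+\eps\rho_1}|^2$;
then one takes $\rho_1=\underline\rho\,\psi$, $\rho_2=(\rho-\underline\rho)\psi$, uses the pointwise bound $\rho/(\rho-(1-\eps)\underline\rho)\le 1/\eps$ to extract the $\eps^{-1}\int|\nabla\sqrt\rho|^2\psi$ and $\int\rho\,|\nabla\sqrt\psi|^2$ pieces, and controls $D(\rho_2)$ and $\int\rho_2^{5/3}$ with the Sobolev/Morrey bound on $\rho^\theta-\underline\rho^\theta$ exactly as you anticipate. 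So your oscillation step, your cutoff bound, and your reading of why~\eqref{eq:hyp_p_theta_proof} is needed are in the right spirit, but they are assembled on top of a localization-and-insertion scheme that fails, whereas the actual proof replaces that entire mechanism by a convex mixture of two admissible states.
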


Under the assumption that $\int_{\ell\bDelta+B_\delta}|\nabla\rho^\theta|^p$ is finite, the density $\rho$ is continuous on $\overline{\ell\bDelta+B_\delta}$, so that $\underline\rho$ and $\overline\rho$ are well defined. 

We have already discussed in the beginning of Section~\ref{sec:upper_bound} the difficulty of deriving a subadditivity-type estimate relating $E(\rho_1+\rho_2)$ to $E(\rho_1)$ and $E(\rho_2)$. The following lemma provides a rather rough inequality, which however will be sufficient for our purposes.

\begin{lemma}[Rough subadditivity estimate]\label{lem:subadditivity_worse}
Let $\rho_1,\rho_2\in L^1(\R^3,\R_+)$ be two densities such that $\sqrt{\rho_1},\sqrt{\rho_2}\in H^1(\R^3)$. Then 
\begin{multline}
 E(\rho_1+\rho_2)\leq E(\rho_1)+C\eps\int_{\R^3}\left(\rho_1^{5/3}+\rho_1^{4/3}\right)+C\eps^{-2/3}\int_{\R^3}\rho_2^{5/3}\\
 +C\int_{\R^3}|\nabla\sqrt{\rho_2+\eps\rho_1}|^2+\frac{1-\eps}{\eps}D(\rho_2)
 \label{eq:subadditivity_worse}
\end{multline}
for all $0<\eps\leq1$. 
\end{lemma}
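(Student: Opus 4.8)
The plan is to combine the convexity of the grand-canonical functional $F_{\rm LL}$ with the \emph{asymmetric} splitting
\[
\rho_1+\rho_2=(1-\eps)\,\rho_1+\eps\Big(\rho_1+\tfrac{1}{\eps}\,\rho_2\Big),
\]
which is a genuine convex combination for $0<\eps\leq1$. Since $F_{\rm LL}$ is convex (see the discussion following~\eqref{eq:Levy-Lieb-grand-canonical}) and $F_{\rm LL}=E+D$, this yields
\[
F_{\rm LL}(\rho_1+\rho_2)\leq(1-\eps)\,F_{\rm LL}(\rho_1)+\eps\,F_{\rm LL}\Big(\rho_1+\tfrac{1}{\eps}\,\rho_2\Big),
\]
and the task reduces to converting this into a bound on $E(\rho_1+\rho_2)=F_{\rm LL}(\rho_1+\rho_2)-D(\rho_1+\rho_2)$ by estimating the two terms on the right.

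First I would dispose of the direct (Hartree) terms, which are the only nonlinear ingredient. Writing $D(f+g)=D(f)+2D(f,g)+D(g)$ with the bilinear form $D(f,g)=\frac12\int_{\R^3}\int_{\R^3}\frac{f(x)g(y)}{|x-y|}\,dx\,dy$, a one-line computation shows that all the $D(\rho_1)$ contributions and all the cross terms $D(\rho_1,\rho_2)$ cancel exactly, leaving
\[
(1-\eps)\,D(\rho_1)+\eps\,D\Big(\rho_1+\tfrac{1}{\eps}\,\rho_2\Big)-D(\rho_1+\rho_2)=\frac{1-\eps}{\eps}\,D(\rho_2),
\]
which is precisely the last term in~\eqref{eq:subadditivity_worse}.

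It then remains to estimate the two $F_{\rm LL}$ contributions. For the high-density piece I would apply the quasi-free upper bound of Corollary~\ref{cor:upper_bound_E_rho} in dimension $d=3$, with its internal parameter fixed (say, to $1$), which reads $E(\sigma)\leq C\int_{\R^3}\sigma^{5/3}+C\int_{\R^3}|\nabla\sqrt\sigma|^2$ for every $\sqrt\sigma\in H^1(\R^3)$. Here $\sqrt{\rho_1+\frac1\eps\rho_2}\in H^1(\R^3)$ thanks to the pointwise inequality $|\nabla\sqrt{f+g}|^2\leq|\nabla\sqrt f|^2+|\nabla\sqrt g|^2$ (joint convexity and $1$-homogeneity of $(a,v)\mapsto|v|^2/a$), and all the $L^{4/3}$ and $L^{5/3}$ norms below are finite by interpolation between $L^1$ and $L^3$. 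Using $(a+b)^{5/3}\leq C(a^{5/3}+b^{5/3})$ together with the scaling identity $\big|\nabla\sqrt{\rho_1+\frac1\eps\rho_2}\big|^2=\frac1\eps\big|\nabla\sqrt{\eps\rho_1+\rho_2}\big|^2$, the term $\eps\,E(\rho_1+\frac1\eps\rho_2)$ produces exactly
\[
C\eps\int_{\R^3}\rho_1^{5/3}+C\eps^{-2/3}\int_{\R^3}\rho_2^{5/3}+C\int_{\R^3}\big|\nabla\sqrt{\rho_2+\eps\rho_1}\big|^2.
\]
For the low-density piece I would write $(1-\eps)\,F_{\rm LL}(\rho_1)=(1-\eps)\,E(\rho_1)+(1-\eps)\,D(\rho_1)$ and use the Lieb--Thirring/Lieb--Oxford lower bound of Corollary~\ref{cor:lower_bound_E_rho}, namely $E(\rho_1)\geq-1.64\int_{\R^3}\rho_1^{4/3}$, to get $(1-\eps)\,E(\rho_1)\leq E(\rho_1)+1.64\,\eps\int_{\R^3}\rho_1^{4/3}$. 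Collecting the term $E(\rho_1)$, the Hartree cancellation above, and these error terms yields~\eqref{eq:subadditivity_worse}.

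I do not anticipate a real obstacle here: the whole argument is a short manipulation resting on the convexity of $F_{\rm LL}$ and the \emph{a priori} estimates of Section~\ref{sec:a_priori}. The only points requiring a little attention are the bookkeeping of the Hartree cancellation — which is where the precise coefficient $\frac{1-\eps}{\eps}$ of $D(\rho_2)$ originates — and the (routine) verification that all integrals above are finite, so that both the convexity inequality for $\sqrt{\,\cdot\,}$ and the upper bound of Corollary~\ref{cor:upper_bound_E_rho} may be legitimately invoked.
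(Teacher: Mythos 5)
Your proposal is correct and is essentially the proof given in the paper: the paper constructs the trial state $\Gamma=(1-\eps)\Gamma_1+\eps\Gamma_2$ with $\Gamma_1,\Gamma_2$ (near-)minimizers for $\rho_1$ and $\rho_1+\rho_2/\eps$, which is exactly the statement of convexity of $F_{\rm LL}$ that you invoke; the Hartree cancellation, the bound on $E(\rho_1+\rho_2/\eps)$ via Corollary~\ref{cor:upper_bound_E_rho}, and the Lieb--Oxford control of $-\eps E(\rho_1)$ all match the paper.
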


Here we have in mind that $\rho_2$ is small compared to $\rho_1$ and we estimate $E(\rho_1+\rho_2)$ in terms of $E(\rho_1)$ plus some error terms. The worse error in the estimate~\eqref{eq:subadditivity_worse} is $D(\rho_2)/\eps$, because it grows much faster than the volume. Later we will only use~\eqref{eq:subadditivity_worse} locally and this bad term will not be too large. But it will be responsible for the large power of $\eps$ in front of the gradient correction in our main estimate~\eqref{eq:LDA_main_estim}.
We conjecture that there is an inequality similar to~\eqref{eq:subadditivity_worse} without the term $D(\rho_2)/\eps$. 

Note that we can estimate 
$$\int_{\R^3}|\nabla\sqrt{\rho_2+\eps\rho_1}|^2\leq \int_{\R^3}|\nabla\sqrt{\rho_2}|^2+\eps\int_{\R^3}|\nabla\sqrt{\rho_1}|^2$$
by the convexity of $\rho\mapsto |\nabla\sqrt\rho|^2$.

\begin{proof}[Proof of Lemma~\ref{lem:subadditivity_worse}]
Fix an $\eps\in(0,1]$ and consider two optimal states $\Gamma_1$ and $\Gamma_2$ in Fock space, for $\rho_1$ and $\rho_2/\eps+\rho_1$, respectively. Then
$$\Gamma:=(1-\eps)\Gamma_1+\eps\Gamma_2$$
is a proper quantum state which has the density 
$$\rho_\Gamma=(1-\eps)\rho_1+\eps\left(\frac{\rho_2}{\eps}+\rho_1\right)=\rho_1+\rho_2.$$
Inserting this trial state and using~\eqref{eq:upper_bound_E} for $E(\rho_2/\eps+\rho_1)$, we deduce that 
\begin{multline*}
E(\rho_1+\rho_2)\leq (1-\eps) E(\rho_1)+\frac{C}{\eps^{2/3}}\int_{\R^3}\rho_2^{5/3}+C\int_{\R^3}|\nabla\sqrt{\rho_2+\eps\rho_1}|^2\\
+C\eps\int_{\R^3}\rho_1^{5/3}
-D(\rho_1+\rho_2)+(1-\eps)D(\rho_1)+\eps D(\rho_1+\rho_2/\eps).
\end{multline*}
We have
$$-D(\rho_1+\rho_2)+(1-\eps)D(\rho_1)+\eps D(\rho_1+\rho_2/\eps)=\frac{1-\eps}\eps D(\rho_2).$$
By the Lieb-Oxford inequality $E(\rho_1)\geq -C\int_{\R^3}\rho_1^{4/3}$, and the result follows.
\end{proof}

We are now able to provide the

\begin{proof}[Proof of Proposition~\ref{prop:replace_constant_density}]
We write $\rho=\underline{\rho}+(\rho-\underline{\rho})$ and apply~\eqref{eq:subadditivity_worse}. We obtain
\begin{align*}
E\big(\rho\, (\1_{\ell\bDelta}\ast\eta_\delta)\big)\leq & E\Big(\underline{\rho}\, (\1_{\ell\bDelta}\ast\eta_\delta)\Big)+C\eps\int_{\R^3}\left(\rho^{5/3}+\rho^{4/3}\right)(\1_{\ell\bDelta}\ast\eta_\delta)\\
&+\frac{C}{\eps^{2/3}}\int_{\R^3}(\rho-\underline{\rho})^{5/3}(\1_{\ell\bDelta}\ast\eta_\delta)+\frac1\eps D\Big((\rho-\underline{\rho})(\1_{\ell\bDelta}\ast\eta_\delta)\Big)\\
&+C\int_{\R^3}\left|\nabla\sqrt{(\1_{\ell\bDelta}\ast\eta_\delta)(\rho-(1-\eps)\underline{\rho})}\right|^2.
\end{align*}
In the first line we have used that $\underline{\rho}\leq\rho$ on the support of $\1_{\ell\bDelta}\ast\eta_\delta$ and that $\1_{\ell\bDelta}\ast\eta_\delta\leq 1$. First we can bound $\rho^{4/3}+\rho^{5/3}$ by $\rho+\rho^2$. Next, using
$$|\nabla \sqrt{fg}|^2=\frac{|\nabla (fg)|^2}{4fg}\leq \frac{f|\nabla g|^2}{2g}+\frac{g|\nabla f|^2}{2f},$$
and $\nabla(\rho-(1-\eps)\underline\rho)=\nabla\rho=2\sqrt\rho\nabla\sqrt\rho$, we can bound the gradient term pointwise by
\begin{multline*}
\left|\nabla\sqrt{(\1_{\ell\bDelta}\ast\eta_\delta)(\rho-(1-\eps)\underline{\rho})}\right|^2\\
\leq (\1_{\ell\bDelta}\ast\eta_\delta)\frac{2\rho\left|\nabla\sqrt{\rho}\right|^2}{\rho-(1-\eps)\underline\rho} 
+2\rho \left|\nabla \sqrt{\1_{\ell\bDelta}\ast\eta_\delta}\right|^2.
\end{multline*}
Since $\rho\geq\underline\rho$, we have $\eps\rho\leq \rho-(1-\eps)\underline\rho$ and hence
$$\frac{\rho}{\rho-(1-\eps)\underline\rho} \leq \frac{1}{\eps}.$$
This gives the estimate on the gradient term
\begin{multline*}
\int_{\R^3}\left|\nabla\sqrt{(\1_{\ell\bDelta}\ast\eta_\delta)(\rho-(1-\eps)\underline{\rho})}\right|^2\\
\leq \frac2\eps \int_{\R^3}|\nabla\sqrt\rho|^2(\1_{\ell\bDelta}\ast\eta_\delta)
+2\int_{\R^3}\rho \left|\nabla \sqrt{\1_{\ell\bDelta}\ast\eta_\delta}\right|^2.
\end{multline*}

Next we estimate the terms involving $\rho-\underline\rho$ in terms of the gradient of $\rho^\theta$. We use the Sobolev inequality in the bounded set $\ell\bDelta+B_\delta$ 
\begin{equation}
 \norm{u}^p_{L^\ii(\overline{\ell\bDelta+B_\delta})}\leq C\ell^{p-3} \int_{\ell\bDelta+B_\delta}|\nabla u(x)|^p\,dx
 \label{eq:Sobolev_domain}
\end{equation}
for $p>3$ and every continuous $u$ which vanishes at least at one point in $\ell\bDelta+B_\delta$ (we always assume $\delta\leq\ell/C$ so that $\ell\bDelta+B_\delta$ is included in a ball of radius proportional to $\ell$). 
By the Hardy-Littlewood-Sobolev inequality, this gives
\begin{align}
&D\Big((\rho-\underline{\rho})(\1_{\ell\bDelta}\ast\eta_\delta)\Big)\nn\\
&\qquad \qquad \leq C\norm{(\rho-\underline{\rho})(\1_{\ell\bDelta}\ast\eta_\delta)}_{L^{6/5}}^2\nn\\
&\qquad \qquad \leq  C\norm{\rho^\theta-\underline{\rho}^\theta}_{L^{\ii}(\ell\bDelta+B_\delta)}^{2}\left(\int_{\R^3}\rho^{\frac65(1-\theta)}(\1_{\ell\bDelta}\ast\eta_\delta)\right)^{\frac53}\nn\\
&\qquad \qquad \leq C\left(\ell^{2p}\int_{\ell\bDelta+B_\delta}|\nabla\rho^\theta|^p\right)^{\frac{2}p}\left(\int_{\R^3}\rho^{\frac{2p}{p-2}(1-\theta)}(\1_{\ell\bDelta}\ast\eta_\delta)\right)^{1-\frac{2}{p}}\nn\\
&\qquad \qquad \leq C\eps \left(\frac{\ell^{2p}}{\eps^{p-1}}\int_{\ell\bDelta+B_\delta}|\nabla\rho^\theta|^p+\eps \int_{\R^3}\rho^{\frac{2p}{p-2}(1-\theta)}(\1_{\ell\bDelta}\ast\eta_\delta)\right).\label{eq:estim_direct}
\end{align}
In the second estimate we have used that 
$$\rho-\underline\rho\leq C(\rho^\theta-\underline\rho^\theta)\rho^{1-\theta}$$ 
since $\theta\leq1$. In the third estimate we have used H\"older's inequality to obtain an integral to the power $1-2/p$. This yields some power of $\ell$ which has been taken into account in the first factor. In order to bound $\rho^{\frac{2p}{p-2}(1-\theta)}$ by $\rho+\rho^2$ we need that 
$$1\leq \frac{2p}{p-2}(1-\theta)\leq 2$$
which is equivalent to our assumption~\eqref{eq:hyp_p_theta_proof}.

Similarly, we can bound the other error term as follows
\begin{align}
&\int_{\R^3}(\rho-\underline{\rho})^{\frac53}(\1_{\ell\bDelta}\ast\eta_\delta)\nn\\
&\qquad \leq C\norm{\rho^\theta-\underline{\rho}^\theta}_{L^\ii(\ell\bDelta+B_\delta)}^{\frac53 a}\int_{\R^3}\rho^{\frac53(1-\theta a)}(\1_{\ell\bDelta}\ast\eta_\delta)\nn\\
&\qquad\leq C\left(\ell^{p}\int_{\ell\bDelta+B_\delta}|\nabla\rho^\theta|^p\right)^{\frac{5a}{3p}}\left(\int_{\R^3}\rho^{\frac{5p}{3p-5a}(1-\theta a)}(\1_{\ell\bDelta}\ast\eta_\delta)\right)^{1-\frac{5a}{3p}}\nn\\
&\qquad\leq C\eps^{\frac23}\left(\frac{\ell^{p}}{\eps^{\frac{p}a-1}}\int_{\ell\bDelta+B_\delta}|\nabla\rho^\theta|^p+\eps \int_{\R^3}\rho^{\frac{5p}{3p-5a}(1-\theta a)}(\1_{\ell\bDelta}\ast\eta_\delta)\right)
\label{eq:estim_L_5_3}
\end{align}
where $0< a\leq 1$ is a parameter to be chosen. As before we need the condition 
$$1\leq \frac{5p}{3p-5a}(1-\theta a)\leq 2$$
in order to bound the last term by $\rho+\rho^2$. This is equivalent to
$$2-\frac{p}{5a}\leq\theta p\leq 1+\frac{2p}{5a}$$
where the left inequality is always satisfied under our assumption~\eqref{eq:hyp_p_theta_proof}. If we choose $a=1$ then the upper bound on $p\theta$ is stronger than~\eqref{eq:hyp_p_theta_proof}. Hence we rather choose $a=4/5$ and obtain~\eqref{eq:replace_constant_rho_upper}.

The argument for~\eqref{eq:replace_constant_rho_lower} is similar. This time we write $\overline{\rho}=\rho+(\overline\rho-\rho)$ and obtain from~\eqref{eq:subadditivity_worse}
\begin{align*}
E\big(\overline\rho\, (\1_{\ell\bDelta}\ast\eta_\delta)\big)\leq & E\Big(\rho\, (\1_{\ell\bDelta}\ast\eta_\delta)\Big)+C\eps\int_{\R^3}\left(\rho^{5/3}+\rho^{4/3}\right)(\1_{\ell\bDelta}\ast\eta_\delta)\\
&+\frac{C}{\eps^{2/3}}\int_{\R^3}(\overline\rho-\rho)^{5/3}(\1_{\ell\bDelta}\ast\eta_\delta)+\frac1\eps D\Big((\overline\rho-\rho)(\1_{\ell\bDelta}\ast\eta_\delta)\Big)\\
&+C\int_{\R^3}\left|\nabla\sqrt{(\1_{\ell\bDelta}\ast\eta_\delta)(\overline\rho-(1-\eps)\rho)}\right|^2.
\end{align*}
The gradient term can be bounded above by
\begin{align*}
&\int_{\R^3}\left|\nabla\sqrt{(\1_{\ell\bDelta}\ast\eta_\delta)(\overline\rho-(1-\eps)\rho)}\right|^2\\
&\qquad \leq \frac2\eps \int_{\R^3}|\nabla\sqrt\rho|^2(\1_{\ell\bDelta}\ast\eta_\delta)
+2\overline\rho\int_{\R^3} \left|\nabla \sqrt{\1_{\ell\bDelta}\ast\eta_\delta}\right|^2\\
&\qquad \leq \frac2\eps \int_{\R^3}|\nabla\sqrt\rho|^2(\1_{\ell\bDelta}\ast\eta_\delta)
+\frac{C\ell^2}{\delta}\overline\rho.
\end{align*}
The other terms are estimated as before, using  that $\overline\rho-\rho\leq \overline\rho$.
\end{proof}

\subsection{Lipschitz regularity of $e_{\rm UEG}$}\label{sec:Lipschitz_UEG}

In this section we prove that the UEG energy $e_{\rm UEG}$ is locally Lipschitz. The main result is the following.

\begin{proposition}[Lipschitz regularity of $e_{\rm UEG}$]\label{prop:Lipschtiz}
There exists a universal constant $C$ so that 
\begin{equation}
e_{\rm UEG}(\rho)-C\big(\rho^{\frac13}+\rho^{\frac23}\big)\rho'\leq e_{\rm UEG}(\rho-\rho')\leq e_{\rm UEG}(\rho)+C\rho'\rho^{\frac13} 
\label{eq:estim_continuity_e_UEG}
\end{equation}
for every $0\leq \rho'\leq\rho$. In particular, we have
\begin{equation}
 \left|e_{\rm UEG}(\rho_1)-e_{\rm UEG}(\rho_2)\right|\leq C\left(\max(\rho_1,\rho_2)^{\frac13}+\max(\rho_1,\rho_2)^{\frac23}\right)|\rho_1-\rho_2|.
 \label{eq:Lipschitz}
\end{equation}
\end{proposition}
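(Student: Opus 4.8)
The bound~\eqref{eq:Lipschitz} follows at once from~\eqref{eq:estim_continuity_e_UEG} by taking $\rho=\max(\rho_1,\rho_2)$ and $\rho'=|\rho_1-\rho_2|$, so the plan is to establish the two-sided estimate~\eqref{eq:estim_continuity_e_UEG} for $0\le\rho'\le\rho$. The idea is to compare $e_{\rm UEG}(\rho)$ and $e_{\rm UEG}(\rho-\rho')$ by dilating near-optimal states and appealing to the thermodynamic limit of Theorem~\ref{thm:UEG_thermo_limit_quantum}, say along a sequence of cubes $\Omega_N$. Writing $E(\sigma)=\inf_\Gamma\cE(\Gamma)$ over Fock-space states $\Gamma$ of density $\sigma$, with $\cE(\Gamma)=\cT(\Gamma)+\cC(\Gamma)-D(\rho_\Gamma)$, and letting $\Gamma_\lambda$ be the dilation of $\Gamma$ obtained by replacing each $n$-body wavefunction $\Psi(x_1,\dots,x_n)$ by $\lambda^{-3n/2}\Psi(x_1/\lambda,\dots,x_n/\lambda)$, one has $\rho_{\Gamma_\lambda}(\cdot)=\lambda^{-3}\rho_\Gamma(\cdot/\lambda)$ and, since $\cT$ scales like $\lambda^{-2}$ while $\cC$ and $D$ scale like $\lambda^{-1}$,
\[
\cE(\Gamma_\lambda)=\lambda^{-1}\cE(\Gamma)+(\lambda^{-2}-\lambda^{-1})\,\cT(\Gamma).
\]
Moreover, dilating $\rho_0\,\1_{\Omega_N}\ast\chi_{\delta_N}$ by $\lambda$ produces $(\lambda^{-3}\rho_0)\,\1_{\lambda\Omega_N}\ast\chi_{\lambda\delta_N}$, and $\{\lambda\Omega_N\}$ together with $\{\lambda\delta_N\}$ still satisfies the hypotheses of Theorem~\ref{thm:UEG_thermo_limit_quantum} provided $\lambda$ stays in a compact subset of $(0,\infty)$.

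For the upper bound in~\eqref{eq:estim_continuity_e_UEG} (the cases $\rho'\in\{0,\rho\}$ being immediate), fix $0<\rho'<\rho$, set $\lambda=(\rho/(\rho-\rho'))^{1/3}\ge1$, and let $\Gamma_N$ be a state of density $\rho\,\1_{\Omega_N}\ast\chi_{\delta_N}$ with $\cE(\Gamma_N)\le E(\rho\,\1_{\Omega_N}\ast\chi_{\delta_N})+1$. Since $\lambda\ge1$ and $\cT(\Gamma_N)\ge0$, $\cE(\Gamma_{N,\lambda})\le\lambda^{-1}\cE(\Gamma_N)$; as $\Gamma_{N,\lambda}$ has density $(\rho-\rho')\,\1_{\lambda\Omega_N}\ast\chi_{\lambda\delta_N}$ and $|\lambda\Omega_N|=\lambda^3|\Omega_N|$, dividing by $|\lambda\Omega_N|$ and letting $N\to\infty$ gives $e_{\rm UEG}(\rho-\rho')\le\lambda^{-4}e_{\rm UEG}(\rho)=\big((\rho-\rho')/\rho\big)^{4/3}e_{\rm UEG}(\rho)$. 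Subtracting $e_{\rm UEG}(\rho)$, the right side is $\le0$ when $e_{\rm UEG}(\rho)\ge0$, and when $e_{\rm UEG}(\rho)<0$ one uses $1-(1-x)^{4/3}\le\tfrac43x$ together with $|e_{\rm UEG}(\rho)|\le1.64\,\rho^{4/3}$, a consequence of Corollary~\ref{cor:lower_bound_E_rho}; in both cases $e_{\rm UEG}(\rho-\rho')\le e_{\rm UEG}(\rho)+C\rho'\rho^{1/3}$.

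For the lower bound I would split into two cases. If $\rho'\ge\rho/2$, then $e_{\rm UEG}(\rho)\le q^{-2/3}c_{\rm TF}\rho^{5/3}$ (from~\eqref{eq:upper_bound_E}) and $e_{\rm UEG}(\rho-\rho')\ge-1.64(\rho-\rho')^{4/3}$ give at once $e_{\rm UEG}(\rho)-e_{\rm UEG}(\rho-\rho')\le C(\rho^{5/3}+\rho^{4/3})\le C\rho'(\rho^{1/3}+\rho^{2/3})$. If $\rho'\le\rho/2$, take $\lambda=((\rho-\rho')/\rho)^{1/3}\in[2^{-1/3},1]$ and let $\Gamma_N$ be near-optimal for $(\rho-\rho')\,\1_{\Omega_N}\ast\chi_{\delta_N}$; since now $\lambda<1$, dilating and passing to the limit yields
\[
e_{\rm UEG}(\rho)\le\lambda^{-4}e_{\rm UEG}(\rho-\rho')+(\lambda^{-5}-\lambda^{-4})\,\limsup_{N\to\infty}\frac{\cT(\Gamma_N)}{|\Omega_N|}.
\]
The point is now to bound the kinetic energy of a near-optimal state: by the Lieb--Oxford inequality~\eqref{eq:Lieb-Oxford}, $\cT(\Gamma_N)=\cE(\Gamma_N)-\big(\cC(\Gamma_N)-D(\rho_{\Gamma_N})\big)\le\cE(\Gamma_N)+1.64\int_{\R^3}\rho_{\Gamma_N}^{4/3}$, and using the a priori upper bound~\eqref{eq:upper_bound_E} on $E((\rho-\rho')\,\1_{\Omega_N}\ast\chi_{\delta_N})$ (its gradient term being $o(|\Omega_N|)$ since $\delta_N|\Omega_N|^{1/3}\to\infty$) one gets $\limsup_N\cT(\Gamma_N)/|\Omega_N|\le C(\rho^{5/3}+\rho^{4/3})$. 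Since $\lambda\ge2^{-1/3}$ we have $\lambda^{-4}-1\le C\rho'/\rho$ and $\lambda^{-5}-\lambda^{-4}\le C\rho'/\rho$, and together with $|e_{\rm UEG}(\rho-\rho')|\le C(\rho^{5/3}+\rho^{4/3})$ this yields $e_{\rm UEG}(\rho)-e_{\rm UEG}(\rho-\rho')\le C\rho'(\rho^{1/3}+\rho^{2/3})$.

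The genuinely delicate step is the compression $\lambda<1$, where the dilation amplifies the kinetic energy, so one must control a priori the kinetic energy of a near-optimal state; this is precisely what Lieb--Oxford provides. The rest is bookkeeping --- keeping track of the possible negativity of $e_{\rm UEG}$ (hence the use of both $e_{\rm UEG}(\rho)\ge-1.64\rho^{4/3}$ and $e_{\rm UEG}(\rho)\le q^{-2/3}c_{\rm TF}\rho^{5/3}$) and checking that the dilated cubes $\lambda\Omega_N$ still satisfy the hypotheses of Theorem~\ref{thm:UEG_thermo_limit_quantum}, which is automatic because $\lambda$ ranges over a compact subset of $(0,\infty)$.
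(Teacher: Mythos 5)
Your proof is correct, and the scaling-by-dilation mechanism you use is exactly the one the paper uses. Both arguments rest on the identity
\[
E\big(\alpha\rho(\alpha^{1/3}\cdot)\big)=\min_{\Gamma:\,\rho_\Gamma=\rho}\left\{\alpha^{2/3}\cT(\Gamma)+\alpha^{1/3}\big(\cC(\Gamma)-D(\rho_\Gamma)\big)\right\},
\]
from which the upper bound $e_{\rm UEG}(\alpha\rho_0)\le\alpha^{4/3}e_{\rm UEG}(\rho_0)$ for $\alpha\le1$ follows identically in your write-up and in the paper. The genuine difference is in the lower bound. You compress ($\lambda<1$), observe that the troublesome term is $(\lambda^{-2}-\lambda^{-1})\cT(\Gamma)$, and control it by bounding the kinetic energy of a near-optimal state a priori (Lieb--Oxford plus the $E\le C\rho^{5/3}$ upper bound), which then forces a case split at $\rho'\ge\rho/2$ because $\lambda^{-5}-\lambda^{-4}$ degenerates as $\lambda\to0$. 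The paper instead adds $c_{\rm LO}\alpha^{1/3}\int\rho^{4/3}$ inside the minimization, where Lieb--Oxford makes the bracketed Coulomb part non-negative, so that $\alpha^{1/3}\ge\alpha^{2/3}$ lets one factor out $\alpha^{2/3}$ uniformly in $\alpha\in(0,1]$; this yields
\[
e_{\rm UEG}(\alpha\rho_0)+\alpha^{4/3}c_{\rm LO}\rho_0^{4/3}\ge\alpha^{5/3}\big(e_{\rm UEG}(\rho_0)+c_{\rm LO}\rho_0^{4/3}\big)
\]
directly, with no separate estimate on $\cT$ of a near-optimal state, no limiting argument to pass to $e_{\rm UEG}$ inside the bound, and no case split. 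The inputs are the same --- Lieb--Oxford and the $\rho^{5/3}$ upper bound --- but the paper's algebraic rearrangement is tighter and uniform in the compression parameter. Your version is correct but slightly less economical; it is worth noting that the kinetic-energy bound you extract from Lieb--Oxford is precisely what the paper's rearrangement encodes implicitly.
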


\begin{proof}
By scaling we have 
$$E\big(\alpha\rho(\alpha^{\frac13}\cdot)\big)=\min_\Gamma \left\{\alpha^{\frac23}\cT(\Gamma)+\alpha^{\frac13}(\cC(\Gamma)-D(\rho_\Gamma))\right\}\leq \alpha^{\frac13} E(\rho)$$
for $\alpha\leq1$. This proves that 
$$\frac{E\big(\alpha\rho_0\1_{\ell/\alpha^{\frac13} \bDelta}\ast\chi_{\delta/\alpha^{\frac13}}\big)}{\ell^3|\bDelta|}=\frac{e_\bDelta(\alpha\rho_0,\ell/\alpha^{\frac13},\delta/\alpha^{\frac13})}{\alpha}\leq \alpha^{\frac13}\,e_\bDelta(\rho_0,\ell,\delta).$$
Passing to the limit using Proposition~\ref{prop:estimates_tetrahedra}, we find
$$e_{\rm UEG}(\alpha\rho_0)\leq \alpha^{\frac43}e_{\rm UEG}(\rho_0)$$
for every $0\leq\alpha=1-\eps\leq1$, hence
$$e_{\rm UEG}\big((1-\eps)\rho_0\big)\leq (1-\eps)^{\frac43}e_{\rm UEG}(\rho_0)\leq e_{\rm UEG}(\rho_0)+C\eps e_{\rm UEG}(\rho_0)_-.$$
Here we have used the notation $x_-=\max(-x,0)$ for the negative part. Using that $e_{\rm UEG}(\rho_0)\geq -c_{\rm LO}\rho_0^{4/3}$ by the Lieb-Oxford inequality (with $c_{\rm LO}\leq1.64$), we obtain 
$$e_{\rm UEG}\big((1-\eps)\rho_0\big)\leq e_{\rm UEG}(\rho_0)+C\eps\rho_0^{\frac43}$$
for all $0\leq \eps\leq1$. This proves the upper bound in~\eqref{eq:estim_continuity_e_UEG}.

Similarly, we can write (still for $0\leq \alpha\leq1$)
\begin{align*}
&E\big(\alpha\rho(\alpha^{\frac13}\cdot)\big)+c_{\rm LO}\alpha^{\frac13}\int_{\R^3}\rho^{\frac43}\\
&\qquad =\min_\Gamma \left\{\alpha^{\frac23}\cT(\Gamma)+\alpha^{\frac13}\left(\cC(\Gamma)-D(\rho_\Gamma)+c_{\rm LO}\int_{\R^3}\rho^{\frac43}\right)\right\}\\
&\qquad \geq \alpha^{\frac23} \left(E(\rho)+ c_{\rm LO}\int_{\R^3}\rho^{\frac43}\right).
\end{align*}
This gives as before
$$e_{\rm UEG}(\alpha\rho_0)+\alpha^{\frac43}c_{\rm LO}\rho_0^{\frac43}\geq \alpha^{\frac53}\left(e_{\rm UEG}(\rho_0)+c_{\rm LO}\rho_0^{\frac43}\right).$$
Using this time $e_{\rm UEG}(\rho_0)\leq C\rho_0^{5/3}$, we obtain 
$$e_{\rm UEG}\big((1-\eps)\rho_0\big)\geq e_{\rm UEG}(\rho_0)-C\eps\big(\rho_0^{\frac43}+\rho_0^{\frac53}\big)$$
for all $0\leq \eps\leq1$.
\end{proof}

\subsection{Proof of Theorem~\ref{thm:LDA}}\label{sec:conclusion}

We have derived all the estimates we need to prove the main inequality~\eqref{eq:LDA_main_estim} in Theorem~\ref{thm:LDA}. 

Let $\rho\in L^1(\R^3,\R_+)\cap L^2(\R^3,\R_+)$ be any density so that $\nabla \sqrt{\rho}\in L^2(\R^3)$ and $\nabla\rho^\theta\in L^p(\R^3)$. First we recall from our upper bound~\eqref{eq:upper_bound_E} and the lower bound~\eqref{eq:lower_bound_E} that 
$$|E(\rho)|\leq c_{\rm TF}q^{-2/3}(1+\eps)\int_{\R^3}\rho^{5/3}+c_{\rm LO}\int_{\R^3}\rho^{4/3}+\frac{C(1+\eps)}{\eps}\int_{\R^3}|\nabla\sqrt\rho|^2.$$
Similarly, we have
\begin{equation}
 |e_{\rm UEG}(\rho)|\leq c_{\rm TF}q^{-2/3}\rho^{5/3}+c_{\rm LO}\rho^{4/3}.
 \label{eq:estim_simple_UEG}
\end{equation}
In particular, the inequality~\eqref{eq:LDA_main_estim} is obvious for large $\eps$ and we only have to consider small $\eps$. 

In our upper bound~\eqref{eq:upper_bound_average} and our lower bound~\eqref{eq:lower_bound_average}, the worse coefficient involving $\ell$ and $\delta$ in front of $\rho+\rho^2$ is 
$\delta^2+1/(\ell\delta)$.
This suggests to take
\begin{equation}
 \boxed{\delta=\sqrt\eps,\qquad \ell=\eps^{-\frac32}}
 \label{eq:choice_ell_delta}
\end{equation}
which we do for the rest of the proof. In fact, in our proof we will replace $\ell$ and $\delta$ by $t\ell$ and $t\delta$ and average over $t\in[1/2,3/2]$. 

\subsubsection*{Step 1. Upper bound}
Let us first take $1/4\leq \eps^{3/2}\ell\leq 2$ and $1/4\leq \delta\eps^{-1/2}\leq 2$ and derive an upper bound on $E(\rho(1_{\ell\bDelta}\ast\eta_\delta))$. We recall that $\bDelta$ is a tetrahedron of volume $1/24$ as described in Section~\ref{sec:upper_bound} and that $\eta_\delta(x)=(10/\delta)^{3}\eta_1(10x/\delta)$ with $\eta_1$ a fixed $C^\ii_c$ non-negative radial function with support in the unit ball and such that $\int_{\R^3}\eta_1=1$.
We denote by 
$$\underline\rho:=\min_{{\rm supp}(\1_{\ell\bDelta}\ast \eta_\delta)}\rho,\qquad \overline\rho:=\min_{{\rm supp}(\1_{\ell\bDelta}\ast \eta_\delta)}\rho$$
the maximal and minimal values of $\rho$ on the support of $\1_{\ell\bDelta}\ast \eta_\delta$, as in Proposition~\ref{prop:replace_constant_density}.
We use the upper bound~\eqref{eq:replace_constant_rho_upper} from Proposition~\ref{prop:replace_constant_density} which quantifies the error made when replacing $E(\rho(1_{\ell\bDelta}\ast\eta_\delta))$ by $E(\underline\rho(1_{\ell\bDelta}\ast\eta_\delta))$. For the latter we then use our estimate~\eqref{eq:upper_bound_tetrahedron} in Proposition~\ref{prop:estimates_tetrahedra} on the energy of a smeared tetrahedron. With our choice~\eqref{eq:choice_ell_delta} of $\ell$ and $\delta$ in terms of $\eps$, this leads to
\begin{align}
E\big(\rho\, (\1_{\ell\bDelta}\ast\eta_\delta)\big)\leq& e_{\rm UEG}(\underline{\rho})\int_{\R^3}\1_{\ell\bDelta}\ast\eta_\delta+C\eps\int_{\R^3}\left(\rho+\rho^2\right)(\1_{\ell\bDelta}\ast\eta_\delta)\nn\\
&\ +C\int_{\R^3}\rho\left|\nabla\sqrt{\1_{\ell\bDelta}\ast\eta_\delta}\right|^2\nn\\
&\ +\frac{C}\eps \int_{\R^3}|\nabla\sqrt\rho|^2(\1_{\ell\bDelta}\ast\eta_\delta) +\frac{C}{\eps^{4p-1}}\int_{\ell\bDelta+B_\delta}|\nabla\rho^\theta|^p.
\label{eq:replace_constant_rho_upper_proof1}
\end{align}
In the first line we have bounded the error terms in~\eqref{eq:upper_bound_tetrahedron}  by
$$\eps^{\frac32}\rho\left(1+\eps^{-\frac12}+\eps^{\frac32}\rho+\sqrt\eps \rho^{2/3}\right)\leq C\eps(\rho+\rho^2)$$
in order to to simplify our final bound. 
In the support of $\1_{\ell\bDelta}\ast\eta_\delta$ we have by~\eqref{eq:estim_continuity_e_UEG}
$$e_{\rm UEG}(\underline{\rho})\leq e_{\rm UEG}(\rho(x))+C(\rho(x)-\underline\rho)\rho(x)^{\frac13}$$
hence
\begin{equation*}
 e_{\rm UEG}(\underline{\rho})\int_{\R^3}\1_{\ell\bDelta}\ast\eta_\delta\leq\int_{\R^3}e_{\rm UEG}(\rho)\,(\1_{\ell\bDelta}\ast\eta_\delta)
 +C\int_{\R^3}(\rho-\underline\rho)\rho^{\frac13}(\1_{\ell\bDelta}\ast\eta_\delta).
\end{equation*}
Similarly as we did for~\eqref{eq:estim_L_5_3}, we can bound for $0<a\leq1$
\begin{align}
&\int_{\R^3}(\rho-\underline{\rho})\rho^{\frac13}(\1_{\ell\bDelta}\ast\eta_\delta)\nn\\
&\qquad \leq C\norm{\rho^\theta-\underline{\rho}^\theta}_{L^\ii(\ell\bDelta+B_\delta)}^{a}\int_{\R^3}\rho^{\frac43-\theta a}(\1_{\ell\bDelta}\ast\eta_\delta)\nn\\
&\qquad\leq C\left(\eps^{-\frac32p}\int_{\ell\bDelta+B_\delta}|\nabla\rho^\theta|^p\right)^{\frac{a}{p}}\left(\int_{\R^3}\rho^{\frac{4-3\theta a}{3}\frac{p}{p-a}}(\1_{\ell\bDelta}\ast\eta_\delta)\right)^{1-\frac{a}{p}}\nn\\
&\qquad\leq C\left(\frac{1}{\eps^{\frac32p+\frac{p}a-1}}\int_{\ell\bDelta+B_\delta}|\nabla\rho^\theta|^p+\eps \int_{\R^3}\rho^{\frac{4-3\theta a}{3}\frac{p}{p-a}}(\1_{\ell\bDelta}\ast\eta_\delta)\right).
\label{eq:bound_4_3}
\end{align}
Again we need 
$$1\leq \frac{4-3\theta a}{3}\frac{p}{p-a}\leq 2$$
which is equivalent to
$$2-\frac{2p}{3a}\leq\theta p\leq 1+\frac{p}{3a}$$
where the left side is automatically satisfied under our main assumption~\eqref{eq:hyp_p_theta}.
In order to get an error controlled by the other gradient terms, we need
$$\frac32p+\frac{p}{a}-1\leq 4p-1$$
which requires $a\geq 2/5$. Taking $a=2/3$ provides the smallest power of $\eps$. Collecting our estimates, we have proved the following upper bound on the energy in a tetrahedron
\begin{align}
E\Big(\rho(\1_{\ell\bDelta}\ast \eta_\delta)\Big)\leq& \int_{\R^3}e_{\rm UEG}\big(\rho(x)\big)\, (\1_{\ell\bDelta}\ast \eta_\delta)\,dx+C\eps\int_{\R^3}(\1_{\ell\bDelta}\ast \eta_\delta)\big(\rho+\rho^2\big)\nn\\
&+\frac{C}{\eps^{4p-1}}\int_{\ell\bDelta+B_\delta}|\nabla\rho^\theta|^p
+\frac{C}{\eps}\int_{\R^3}(\1_{\ell\bDelta}\ast \eta_\delta)|\nabla\sqrt\rho|^2\nn\\
&+C\int_{\R^3}\rho\left|\nabla\sqrt{\1_{\ell\bDelta}\ast \eta_\delta}\right|^2.
\label{eq:estim_local_upper}
\end{align}
Here we have considered a tetrahedron placed at the origin for simplicity, but we of course get a similar inequality for any tetrahedron, by translating and rotating $\rho$. 

Next we recall our upper bound~\eqref{eq:upper_bound_average} on the total energy $E(\rho)$
\begin{multline}
 E(\rho)\leq  \left(\int_{1/2}^{3/2}\frac{ds}{s^4}\right)^{-1}\int_{1/2}^{3/2}\frac{dt}{t^4}\int_{SO(3)}dR\int_{C_{t\ell}}\frac{d\tau}{(t\ell)^3}\times\\
 \times\sum_{z\in\Z^3}\sum_{j=1}^{24} E\Big(\chi_{t\ell,t\delta,j}(R\,\cdot\,-t\ell z-\tau)\rho\Big)+C\eps \int_{\R^3}\rho^2,
 \label{eq:upper_bound_average_proof}
\end{multline}
with $ \chi_{\ell,\delta,j}:=(1-\eps^2)^{-3}\1_{\ell\mu_j(1-\eps^2)\bDelta}\ast \eta_\delta$. We also recall from Section~\ref{sec:upper_bound} that $\delta/\ell=(t\delta)/(t\ell)=\eps^2$.
Inserting~\eqref{eq:estim_local_upper} into~\eqref{eq:upper_bound_average_proof} and using the fact~\eqref{eq:resolution_identity} that $\chi_{t\ell,t\delta,j}$ forms a partition of unity after averaging over translations and rotations, we obtain 
\begin{multline}
 E(\rho)\leq  (1-\eps^2)^{3}\int_{\R^3}e_{\rm UEG}\Big((1-\eps^2)^{-3}\rho(x)\Big)\,dx+C\eps \int_{\R^3}\big(\rho+\rho^2\big)\\
 +\frac{C}{\eps}\int_{\R^3}|\nabla\sqrt\rho|^2+\frac{C}{\eps^{4p-1}}\int_{\R^3}|\nabla\rho^\theta|^p.
 \label{eq:upper_bound_average_proof2}
\end{multline}
Note that when we sum over the tiling, the sets $t\ell\mu_j\bDelta+B_{t\delta}$ have finitely many intersections,  which just results in a bigger constant in front of $|\nabla\rho^\theta|^p$. We have also used that 
\begin{multline*}
\int_{SO(3)}dR\int_{C_{t\ell}}\frac{d\tau}{(t\ell)^3} \sum_{z\in\Z^3}\sum_{j=1}^{24}\int_{\R^3}\rho\left|\nabla\sqrt{\chi_{t\ell,t\delta,j}(R\,\cdot\,-\ell z-\tau)}\right|^2\\
=\frac{24}{(t\ell)^3}\int_{\R^3}\rho\int_{\R^3}\left|\nabla\sqrt{\chi_{t\ell,t\delta,j}(R\,\cdot\,-\ell z-\tau)}\right|^2\leq \frac{C}{\ell\delta}\int_{\R^3}\rho= C\eps \int_{\R^3}\rho.
\end{multline*}
From Proposition~\ref{prop:Lipschtiz} and~\eqref{eq:estim_simple_UEG}, we have 
$$(1-\eps^2)^3\int_{\R^3}e_{\rm UEG}\Big((1-\eps^2)^{-3}\rho\Big)\leq \int_{\R^3}e_{\rm UEG}\big(\rho\big)+C\eps^2\int_{\R^3}\left(\rho^{\frac43}+\rho^{\frac53}\right)$$
hence we obtain the desired upper bound 
\begin{equation}
 E(\rho)\leq \int_{\R^3}e_{\rm UEG}\big(\rho\big)+ \eps \int_{\R^3}\big(\rho+\rho^2\big)
 +\frac{C}{\eps}\int_{\R^3}|\nabla\sqrt\rho|^2+\frac{C}{\eps^{4p-1}}\int_{\R^3}|\nabla\rho^\theta|^p
 \label{eq:proved_upper_bound}
\end{equation}
for $\eps$ small enough. 

\subsubsection*{Step 2. Lower bound}
The lower bound is slightly more tedious since all our lower estimates involve $\overline{\rho}$ which can in general not be bounded by $\rho$. We shall argue as follows. 
First we average our lower bound~\eqref{eq:lower_bound_average} over $t$. This gives
\begin{multline}
E(\rho)\geq \left(\int_{\frac12}^{\frac32}\frac{ds}{s^4}\right)^{-1}\int_{\frac12}^{\frac32}\frac{dt}{t^4}\frac{1-C\eps}{(t\ell)^3}\times\\
\times\sum_{z\in\Z^3}\sum_{j=1}^{24}\int_{SO(3)}\int_{C_{t\ell}} E\Big(\xi_{t\ell,t\delta,j}(R\,\cdot\,-t\ell z-\tau)\rho\Big)dR\,d\tau\\
-C\eps\int_{\R^3}\big(\rho+\eps^2\rho^2\big).
 \label{eq:lower_bound_average_proof}
\end{multline}
We recall that here $\xi_{\ell,\delta,j}=\1_{\ell\mu_j\bDelta}\ast\eta_\delta$, see Section~\ref{sec:lower_bound}. In order to use the same argument as for the upper bound, we are going to prove the estimate
\begin{multline}
\left(\int_{\frac12}^{\frac32}\frac{ds}{s^4}\right)^{-1}\int_{\frac12}^{\frac32}\frac{dt}{t^4(t\ell)^3}\bigg\{E\Big(\rho(\1_{t\ell\bDelta}\ast \eta_{t\delta})\Big)-\int_{\R^3}e_{\rm UEG}\big(\rho(x)\big)\, (\1_{t\ell\bDelta}\ast \eta_{t\delta})\,dx\\
+C\int_{\R^3}\rho|\nabla\sqrt{\1_{t\ell\bDelta}\ast \eta_{t\delta}}|^2+C\eps\int_{\R^3}\big(\rho+\rho^2\big)(\1_{t\ell\bDelta}\ast \eta_{t\delta})\\+\frac{C}{\eps}\int_{\R^3}|\nabla\sqrt\rho|^2(\1_{t\ell\bDelta}\ast \eta_{t\delta})\bigg\}
\geq-\frac{C}{\eps^{4p-1}}\int_{2\ell\bDelta+B_{2\delta}}|\nabla\rho^\theta|^p.
\label{eq:estim_local_lower}
\end{multline}
That the last integral is over the larger set $2\ell\bDelta+B_{2\delta}$ will only affect the multiplicative constant $C$. Inserting~\eqref{eq:estim_local_lower} into~\eqref{eq:lower_bound_average_proof} gives a bound as in~\eqref{eq:proved_upper_bound} but in the opposite direction. This concludes the proof of the theorem and it therefore only remains to prove~\eqref{eq:estim_local_lower}.

With an abuse of notation we consider the minimal and maximal values over the larger set $2(\ell\bDelta+B_\delta)$,
\begin{equation}
 \underline\rho:=\min_{2\ell\bDelta+B_{2\delta}}\rho,\qquad \overline\rho:=\min_{2\ell\bDelta+B_{2\delta}}\rho
 \label{eq:min_max_larger_set}
\end{equation}
instead of the corresponding definitions on the smaller set $\ell\bDelta+B_\delta$. 
First we again recall that, by~\eqref{eq:upper_bound_E} and~\eqref{eq:lower_bound_E}, we have 
\begin{align*}
&\left|E\Big(\rho(\1_{t\ell\bDelta}\ast\eta_{t\delta})\Big)-\int_{\R^3}(\1_{t\ell\bDelta}\ast\eta_{t\delta})e_{\rm UEG}\big(\rho(x)\big)\,dx\right|\\
&\qquad \leq C\int_{\R^3}(\1_{t\ell\bDelta}\ast\eta_{t\delta})\big(\rho^{4/3}+\rho^{5/3}\big)\\
&\quad\qquad +C\int_{\R^3}(\1_{t\ell\bDelta}\ast\eta_{t\delta})|\nabla\sqrt\rho|^2+C\int_{\R^3}\rho\left|\nabla\sqrt{\1_{t\ell\bDelta}\ast\eta_{t\delta}}\right|^2.
\end{align*}
Hence there is nothing to prove when 
\begin{equation*}
 \int_{\R^3}\big(\rho^{4/3}+\rho^{5/3}\big)(\1_{t\ell\bDelta}\ast\eta_{t\delta})\leq C\eps\int_{\R^3}\big(\rho+\rho^2\big)(\1_{t\ell\bDelta}\ast\eta_{t\delta})
 +\frac{1}{\eps^{4p-1}}\int_{2\ell\bDelta+B_{2\delta}}|\nabla\rho^\theta|^p.
\end{equation*}
This is the case if $\overline\rho^{1/3}\leq C\eps$, for instance. Hence we may assume in the following that $\overline\rho\geq C\eps^3$ and that 
\begin{equation*}
 \int_{\R^3}\big(\rho^{4/3}+\rho^{5/3}\big)(\1_{t\ell\bDelta}\ast\eta_{t\delta})\geq\frac{1}{\eps^{4p-1}}\int_{2\ell\bDelta+B_{2\delta}}|\nabla\rho^\theta|^p.
\end{equation*}
By~\eqref{eq:Sobolev_domain}, this implies 
$$\frac{\ell^3}{\eps^{3p\theta-4}}\overline\rho^{p\theta}\geq \frac{C}{\ell^{p-3}\eps^{4p-1}}\left(\overline\rho^\theta-\underline\rho^\theta\right)^p,$$
that is,
$$\overline\rho^\theta-\underline\rho^\theta\leq C\eps^{\frac{3}{p}\left(1+\frac{5p}{6}-\theta p\right)}\overline\rho^\theta.$$
Under our assumption~\eqref{eq:hyp_p_theta} on $p$ and $\theta$ the exponent is positive, hence we deduce that for $\eps$ small enough
$$\overline\rho\leq C\underline\rho\leq C\rho(x)$$
on $2\ell\bDelta+B_{2\delta}$. With this additional information we can use our previous estimates. 

By arguing exactly as in the proof of Proposition~\ref{prop:replace_constant_density} with $\overline\rho$ the maximum over $2\ell\bDelta+B_{2\delta}$ instead of the support of $\1_{t\ell\bDelta}\ast\eta_{t\delta}$, we get the estimate  similar to~\eqref{eq:replace_constant_rho_lower}
\begin{multline}
E\big(\rho\, (\1_{t\ell\bDelta}\ast\eta_{t\delta})\big)\geq E\left(\overline{\rho}\,(\1_{t\ell\bDelta}\ast\eta_{t\delta})\right)-C\eps\ell^3\left(\overline\rho+\overline\rho^2\right)\\
-\frac{C}\eps \int_{\R^3}|\nabla\sqrt\rho|^2(\1_{t\ell\bDelta}\ast\eta_{t\delta})
-\frac{C}{\eps^{4p-1}}\int_{2\ell\bDelta+B_{2\delta}}|\nabla\rho^\theta|^p.
 \label{eq:replace_constant_rho_lower_proof}
\end{multline}
From the fact that $\overline\rho\leq C\rho$, the second term on the right side can be bounded by 
$$C\eps\int_{\R^3}\big(\rho+\rho^2\big)(\1_{t\ell\bDelta}\ast\eta_{t\delta}).$$
Then we average over $t$ and use our lower estimate~\eqref{eq:lower_bound_tetrahedron} on the averaged energy of a tetrahedron. This gives
$$\left(\int_{\frac12}^{\frac32}\frac{ds}{s^4}\right)^{-1}\int_{\frac12}^{\frac32}\,\frac{dt}{t^4}\frac{E\left(\overline{\rho}\,(\1_{t\ell\bDelta}\ast\eta_{t\delta})\right)}{(t\ell)^3|\bDelta|}
\geq e_{\rm UEG}(\overline\rho)-C\eps\overline\rho^2.$$
The last term can again be bounded by 
$$\eps(t\ell)^{-3}\int_{\R^3}\rho^2(\1_{t\ell\bDelta}\ast\eta_{t\delta})$$ 
and included into the average over $t$.
Finally, using~\eqref{eq:estim_continuity_e_UEG} and $\overline\rho\leq C\rho$, we infer that 
$$e_{\rm UEG}(\overline\rho)\geq e_{\rm UEG}(\rho(x))-C\left(\overline\rho-\rho(x)\right)\rho(x)^{\frac13}$$
on the support of $\1_{t\ell\bDelta}\ast\eta_{t\delta}$. To conclude the proof of~\eqref{eq:estim_local_lower} we can proceed in the same way as for the upper bound~\eqref{eq:estim_local_upper}. This concludes the proof of Theorem~\ref{thm:LDA}.
\qed

\appendix
\section{Classical case}\label{app:classical}

In the classical case where the kinetic energy is neglected, the grand canonical energy functional is defined~\cite{LewLieSei-18} by
\begin{multline}
E_{\rm cl}(\rho):=\inf_{\substack{\sum_{n=0}^\ii\bP_n(\R^{3n})=1\\ \sum_{n=1}^\ii \rho_{\bP_n}=\rho}} \sum_{n=1}^\ii \int_{(\R^3)^n}\sum_{1\leq j<k\leq n}\frac{1}{|x_j-x_k|}d\bP_n(x_1,...,x_n)\\
-\frac12\int_{\R^3}\int_{\R^3}\frac{\rho(x)\rho(y)}{|x-y|}\,dx\,dy
\label{eq:def_classical_energy}
\end{multline}
where each $\bP_n$ is a symmetric probability measure on $(\R^3)^n$ with density 
$$\rho_{\bP_n}(x)=n\int_{\R^{3(n-1)}}d\bP_n(x,x_2...,x_n).$$
This classical energy~\eqref{eq:def_classical_energy} is obtained from the quantum energy in the limit 
$$\lim_{\alpha\to0}\alpha^{-\frac43}E(\alpha^3\rho(\alpha\cdot)\big)=E_{\rm cl}(\rho)$$
see~\cite{CotFriKlu-13,BinPas-17,Lewin-18,CotFriKlu-18}. When $\int_{\R^3}\rho=N\in\mathbb{N}$, the canonical version of $E_{\rm cl}$ reads
\begin{multline}
E^{\rm can}_{\rm cl}(\rho):=\inf_{\rho_{\bP}=\rho} \int_{(\R^3)^N}\sum_{1\leq j<k\leq N}\frac{1}{|x_j-x_k|}d\bP(x_1,...,x_N)\\
-\frac12\int_{\R^3}\int_{\R^3}\frac{\rho(x)\rho(y)}{|x-y|}\,dx\,dy.
\label{eq:def_classical_energy_canonical}
\end{multline}
In~\cite{LewLieSei-18} we have shown that for $\rho_N(x)=\rho_1(N^{-1/3}x)$ with $\int_{\R^3}\rho_1=1$,
\begin{equation}
\lim_{N\to\ii}\frac{E^{\rm can}_{\rm cl}(\rho_N)}{N}= \lim_{N\to\ii}\frac{E_{\rm cl}(\rho_N)}{N}=c_{\rm UEG}\int_{\R^3}\rho(x)^{\frac43}\,dx
\label{eq:limit_classical}
\end{equation}
where
$$c_{\rm UEG}=\lim_{\rho\to0^+}\frac{e_{\rm UEG}(\rho)}{\rho^{4/3}}<0$$
is the energy per unit volume of the classical uniform electron gas at density $1$. In this appendix we quickly explain how to derive the following quantitative estimate on the convergence rate in~\eqref{eq:limit_classical}.

\begin{theorem}[Estimate in the (grand-canonical) classical case]\label{thm:classical}
Let $p>3$ and $0<\theta<1$ such that $\theta p\geq4/3$.
There exists a universal constant $C=C(p,\theta)$ such that 
\begin{multline}
 \left|E_{\rm cl}(\rho)- c_{\rm UEG}\int_{\R^3}\rho(x)^{\frac43}\,dx\right|
 \leq \eps \int_{\R^3}\big(\rho(x)+\rho(x)^{\frac43}\big)\,dx\\
+\frac{C}{\eps^{b}}\int_{\R^3}|\nabla\rho^\theta(x)|^p\,dx
 \label{eq:LDA_main_estim_classical}
\end{multline}
with
$$b=\max\big\{2p-1,(1+3\theta)p-4\big\},$$
for every $\eps>0$ and every non-negative density $\rho\in L^1(\R^3)\cap L^{4/3}(\R^3)$ such that $\nabla \rho^\theta\in L^p(\R^3)$.
\end{theorem}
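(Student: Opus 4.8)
The plan is to mimic the proof of Theorem~\ref{thm:LDA}, taking advantage of three simplifications special to the classical setting. First, we have the \emph{exact subadditivity} $E_{\rm cl}(\rho_1+\rho_2)\le E_{\rm cl}(\rho_1)+E_{\rm cl}(\rho_2)$ from~\cite[Lem.~2.5]{LewLieSei-18}, which makes both Proposition~\ref{prop:upper_bound} and the rough subadditivity Lemma~\ref{lem:subadditivity_worse} unnecessary. Second, testing with a Poisson point process of intensity $\rho$ shows $E_{\rm cl}(\rho)\le0$ for every $\rho\ge0$: such a state has density $\rho$ and interaction energy exactly $D(\rho)$. Third, since there is no kinetic energy we may work throughout with \emph{sharp} characteristic functions of tetrahedra (no smoothing parameter, no $H^1$ hypothesis), and the classical uniform electron gas energy at density $\rho_0$ is simply $c_{\rm UEG}\rho_0^{4/3}$ by scaling. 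As in Theorem~\ref{thm:LDA}, the bound is trivial for large $\eps$ because $|E_{\rm cl}(\rho)|\le c_{\rm LO}\int_{\R^3}\rho^{4/3}$ by the Lieb-Oxford inequality together with $E_{\rm cl}\le0$, so we may assume $\eps$ small.

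I would begin with the classical analogue of Proposition~\ref{prop:estimates_tetrahedra} for the sharp tetrahedron energy $e_\bDelta^{\rm cl}(\rho_0,\ell):=|\ell\bDelta|^{-1}E_{\rm cl}(\rho_0\1_{\ell\bDelta})$. On the one hand, subadditivity applied to a large region tiled by copies of $\ell\bDelta$ gives, in the thermodynamic limit, $e_\bDelta^{\rm cl}(\rho_0,\ell)\ge c_{\rm UEG}\rho_0^{4/3}$ for every $\ell$. On the other hand, the classical (unsmeared) Graf-Schenker inequality of~\cite{GraSch-95,LewLieSei-18}, used exactly as in the derivation of~\eqref{eq:upper_bound_tetrahedron}, yields the matching upper bound $e_\bDelta^{\rm cl}(\rho_0,\ell)\le c_{\rm UEG}\rho_0^{4/3}+C\rho_0/\ell$: the error $\rho_0/\ell$ comes from the $O(\ell^{-1})$ self-energy of the auxiliary potential, and the boundary tetrahedra are controlled by the Lieb-Oxford bound $E_{\rm cl}\ge-c_{\rm LO}\int\rho^{4/3}$. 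Only this $\ell^{-1}$ \emph{rate} is new relative to~\cite{LewLieSei-18}, where the limit itself was already established.

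Next comes the localization. For the \textbf{upper bound} in~\eqref{eq:LDA_main_estim_classical} I would tile $\R^3$ by tetrahedra $T_k$ of side $\ell$, use exact subadditivity to get $E_{\rm cl}(\rho)\le\sum_kE_{\rm cl}(\rho\1_{T_k})$, then write $\rho\1_{T_k}=\underline\rho_k\1_{T_k}+(\rho-\underline\rho_k)\1_{T_k}$ with $\underline\rho_k=\min_{T_k}\rho$, apply subadditivity once more, drop the second piece by $E_{\rm cl}\le0$, and invoke the tetrahedron bound to obtain $E_{\rm cl}(\rho\1_{T_k})\le|T_k|c_{\rm UEG}\underline\rho_k^{4/3}+C|T_k|\underline\rho_k/\ell$. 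For the \textbf{lower bound} I would instead use the classical Graf-Schenker inequality, as in Proposition~\ref{prop:lower_bound} but without the terms coming from the kinetic energy and from the $L^1$-tail, to get $E_{\rm cl}(\rho)\ge(1-C/\ell)\sum_kE_{\rm cl}(\rho\1_{T_k})-C\ell^{-1}\int_{\R^3}(\rho+\rho^{4/3})$, and then $E_{\rm cl}(\rho\1_{T_k})\ge E_{\rm cl}(\overline\rho_k\1_{T_k})\ge|T_k|c_{\rm UEG}\overline\rho_k^{4/3}$ again by subadditivity, $E_{\rm cl}\le0$, and $e_\bDelta^{\rm cl}\ge c_{\rm UEG}\rho_0^{4/3}$. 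Choosing $\ell=\eps^{-1}$ makes all the $\ell^{-1}$ errors above $\lesssim\eps\int(\rho+\rho^{4/3})$, and in both directions the remaining error is $|c_{\rm UEG}|\sum_k\big|\int_{T_k}\rho^{4/3}-\rho_k^{4/3}|T_k|\big|$ with $\rho_k\in\{\underline\rho_k,\overline\rho_k\}$, that is, exactly the cost of replacing the local constant by $\rho(x)$ inside $c_{\rm UEG}\int\rho^{4/3}=\int e_{\rm UEG}^{\rm cl}(\rho)$.

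The last step, and the main obstacle, is to absorb this error into the gradient term. Using the Sobolev inequality~\eqref{eq:Sobolev_domain} on $T_k$, the oscillation satisfies $\overline\rho_k^\theta-\underline\rho_k^\theta\le C\ell^{1-3/p}\|\nabla\rho^\theta\|_{L^p(T_k)}$; arguing as in Step~2 of the proof of Theorem~\ref{thm:LDA} one may assume $\rho$ has bounded oscillation on $T_k$ (otherwise the gradient term already dominates on that tile), and then, since $\theta<1$, $|\rho^{4/3}-\rho_k^{4/3}|\le C\,\overline\rho_k^{\,4/3-\theta a}(\overline\rho_k^\theta-\underline\rho_k^\theta)^a$ on $T_k$ for every $0<a\le1$. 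Hence the per-tile error is at most $C\ell^{3+(1-3/p)a}\overline\rho_k^{\,4/3-\theta a}\|\nabla\rho^\theta\|_{L^p(T_k)}^a$, and a Young inequality with conjugate exponents $(\tfrac{p}{p-a},\tfrac{p}{a})$ and a parameter $\lambda$ splits it into $C\lambda^{p/(p-a)}\overline\rho_k^{\,(4/3-\theta a)p/(p-a)}\ell^3$ plus $C\lambda^{-p/a}\ell^p\|\nabla\rho^\theta\|_{L^p(T_k)}^p$. Summing over $k$, taking $\lambda^{p/(p-a)}=\eps$ and $\ell=\eps^{-1}$, the first family sums to $\le C\eps\int_{\R^3}(\rho+\rho^{4/3})$ provided $1\le(4/3-\theta a)\tfrac{p}{p-a}\le4/3$, and the second to $C\eps^{-(p+p/a-1)}\int_{\R^3}|\nabla\rho^\theta|^p$. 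The upper inequality is equivalent to $\theta p\ge4/3$, independently of $a$, which is the hypothesis; the lower one forces $a\le\frac{p}{3(\theta p-1)}$. Choosing $a=\min\{1,\frac{p}{3(\theta p-1)}\}$ gives the exponent $b=p+\tfrac pa-1=\max\{2p-1,(1+3\theta)p-4\}$ and completes the proof. The only delicate points are the bounded-oscillation reduction and keeping track of which power of $\rho$ (between $\rho$ and $\rho^{4/3}$) survives each H\"older step; everything else is routine and much easier than the quantum case.
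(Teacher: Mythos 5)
Your proposal is correct and follows essentially the same route as the paper's proof in Appendix~A: exact subadditivity plus negativity of $E_{\rm cl}$ for the upper bound, the classical Graf--Schenker inequality for the lower bound, the $O(\rho_0/\ell)$ convergence rate for constant-density tetrahedra, and the same Sobolev--H\"older--Young treatment of the oscillation with the parameter $a$, the dichotomy reducing to $\overline\rho\leq C\underline\rho$ on each tile, and the final choice $a=\min\{1,\tfrac{p}{3(\theta p-1)}\}$ yielding $b=\max\{2p-1,(1+3\theta)p-4\}$. The only cosmetic differences are that you re-derive the negativity of $E_{\rm cl}$ (Poisson state) and the tile lower bound (subadditivity in the thermodynamic limit) where the paper cites \cite{LewLieSei-18}.
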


Under the condition that $\theta p\leq 1+p/3$, which is slightly more restrictive than in the quantum case~\eqref{eq:hyp_p_theta}, we get the much smaller power $2p-1$ of $\eps$ in front of the gradient term. Then, after optimizing~\eqref{eq:LDA_main_estim_classical} in $\eps$, we obtain the quantitative estimate
\begin{multline}
 \left|E_{\rm cl}(\rho_N)- N\,c_{\rm UEG}\int_{\R^3}\rho(x)^{\frac43}\,dx\right|\\
 \leq CN^{\frac56} \left(\int_{\R^3}\big(\rho(x)+\rho(x)^{\frac43}\big)\,dx\right)^{1-\frac1{2p}} \left(\int_{\R^3}|\nabla\rho^\theta(x)|^p\,dx\right)^{\frac1{2p}}
\end{multline}
for every $\rho_N(x)=\rho_1(N^{-1/3}x)$ and for $4/3\leq \theta p\leq 1+p/3$. The rate $N^{5/6}$ is better than the $N^{11/12}$ obtained in the quantum case, but still far from the expected rate $N^{1/3}$. 

\begin{proof}
The estimate~\eqref{eq:LDA_main_estim_classical} follows from the Lieb-Oxford inequality~\eqref{eq:Lieb-Oxford} (without the gradient term) when $\eps$ is large, so we only have to consider the case where $\eps$ is small. 

We use again the tiling~\eqref{eq:tiling} and, as in the proof in~\cite{LewLieSei-18}, the upper bound
\begin{equation}
E_{\rm cl}(\rho)\leq \sum_{z\in\Z^3}\sum_{j=1}^{24}E_{\rm cl}(\1_{\ell\mu_j\bDelta+\ell z}\rho)\leq \sum_{z\in\Z^3}\sum_{j=1}^{24}E_{\rm cl}(\1_{\ell\mu_j\bDelta+\ell z}\underline\rho)
\label{eq:upper_bound_classical}
\end{equation}
where $\underline\rho=\min_{\ell\mu_j\bDelta+\ell z}\rho$. The inequality~\eqref{eq:upper_bound_classical} is a consequence of the subadditivity and the negativity of $E_{\rm cl}$. Now it follows from~\cite[Cor.~3.4]{LewLieSei-18} and from the Graf-Schenker inequality as in Section~\ref{sec:CV_rate_tetrahedra}, that in a tetrahedron
$$c_{\rm UEG}\rho_0^{4/3}\leq \frac{E_{\rm cl}(\rho_0\1_{\ell\bDelta})}{\ell^3|\bDelta|}\leq c_{\rm UEG}\rho_0^{4/3}+\frac{C\rho_0}{\ell}.$$
This provides the upper bound
$$E_{\rm cl}(\rho)\leq c_{\rm UEG}\int_{\R^3}\rho^{4/3}+|c_{\rm UEG}|\sum_{z\in\Z^3}\sum_{j=1}^{24}\int_{\ell\mu_j\bDelta+\ell z}\left(\rho^{4/3}-\underline{\rho}^{4/3}\right) + \frac{C}{\ell}\int_{\R^3}\rho.$$
For the rest of the argument we use the notation $\eps=1/\ell$.
In each tetrahedron we can follow the argument in~\eqref{eq:bound_4_3} and estimate
\begin{align*}
&\int_{\R^3}(\rho^{\frac43}-\underline{\rho}^{\frac43})(\1_{\ell\bDelta}\ast\eta_\delta)\nn\\
&\qquad \leq C\norm{\rho^\theta-\underline{\rho}^\theta}_{L^\ii(\ell\bDelta+B_\delta)}^{a}\int_{\R^3}\rho^{\frac43-\theta a}(\1_{\ell\bDelta}\ast\eta_\delta)\nn\\
&\qquad\leq C\left(\frac{1}{\eps^p}\int_{\ell\bDelta+B_\delta}|\nabla\rho^\theta|^p\right)^{\frac{a}{p}}\left(\int_{\R^3}\rho^{\frac{4-3\theta a}{3}\frac{p}{p-a}}(\1_{\ell\bDelta}\ast\eta_\delta)\right)^{1-\frac{a}{p}}\nn\\
&\qquad\leq C\left(\frac{1}{\eps^{p+\frac{p}a-1}}\int_{\ell\bDelta+B_\delta}|\nabla\rho^\theta|^p+\eps \int_{\R^3}\rho^{\frac{4-3\theta a}{3}\frac{p}{p-a}}(\1_{\ell\bDelta}\ast\eta_\delta)\right)
\end{align*}
with $0<a\leq1$. In order to estimate the second term by $\rho+\rho^{4/3}$, we need that
$$1\leq \frac{4-3\theta a}{3}\frac{p}{p-a}\leq \frac43$$
which is equivalent to
$$\frac43\leq \theta p\leq 1+\frac{p}{3a}$$
and which we assume for the rest of the proof. 

We finally turn to the lower bound. Up to an appropriate rotation and translation of the tiling (or, equivalently, of the density $\rho$), the Graf-Schenker inequality gives the following lower bound~\cite[p.~100]{LewLieSei-18}
\begin{equation}
E_{\rm cl}(\rho)\geq \sum_{z\in\Z^3}\sum_{j=1}^{24}E_{\rm cl}(\1_{\ell\mu_j\bDelta+\ell z}\rho).
\label{eq:lower_bound_classical}
\end{equation}
We recall that 
$$E_{\rm cl}(\1_{\ell\mu_j\bDelta+\ell z}\rho)\geq -c_{\rm LO}\int_{\ell\mu_j\bDelta+\ell z}\rho^{\frac43}$$
by the Lieb-Oxford inequality~\eqref{eq:Lieb-Oxford}. We have nothing to prove in any tetrahedron $\ell\mu_j\bDelta+\ell z$ such that 
$$c_{\rm LO}\int_{\ell\mu_j\bDelta+\ell z}\rho^{\frac43}\leq \eps\int_{\ell\mu_j\bDelta+\ell z}\big(\rho+\rho^{\frac43}\big)
+\frac{A}{\eps^{p+\frac{p}{a}-1}}\int_{\ell\mu_j\bDelta+\ell z}|\nabla\rho^\theta|^p.$$
Here $A$ is a large constant to be chosen later. This is in particular the case when $\overline\rho^{1/3}=\max_{\ell\mu_j\bDelta+\ell z}\rho\leq \eps/c_{\rm LO}$.
So we may assume that 
$$c_{\rm LO}\int_{\ell\mu_j\bDelta+\ell z}\rho^{\frac43}> \eps\int_{\ell\mu_j\bDelta+\ell z}\big(\rho+\rho^{\frac43}\big)
+\frac{A}{\eps^{p+\frac{p}{a}-1}}\int_{\ell\mu_j\bDelta+\ell z}|\nabla\rho^\theta|^p$$
and that $\overline\rho>(\eps/c_{\rm LO})^3$. This implies
$$\overline\rho^\theta -\underline\rho^\theta\leq \frac{C\eps^{\frac{1}{a}-\frac1p}}{A^{\frac1p}}\overline\rho^{\frac4{3p}}\leq \frac{C\eps^{\frac3p(1+\frac{p}{3a}-\theta p)}}{A^{\frac1p}}\overline\rho^{\theta}.$$
The power of $\eps$ is non-negative when, again,
$$\theta p\leq 1+\frac{p}{3a}.$$
For $A$ large enough (or $\eps$ small enough) this gives $\overline\rho\leq C\underline\rho\leq C\rho$ in the simplex $\ell\mu_j\bDelta+\ell z$. The rest of the argument is then exactly the same as for the upper bound.

As a conclusion we obtain the bound~\eqref{eq:LDA_main_estim_classical} with the error term
$$\frac{C}{\eps^{p+\frac{p}{a}-1}}\int_{\ell\mu_j\bDelta+\ell z}|\nabla\rho^\theta|^p$$
and the restrictions that $p>3$, $0<\theta\leq1$, $0<a\leq1$ and
$$\frac43\leq\theta p\leq 1+\frac{p}{3a}.$$
In order to minimize the power of $\eps$ we want to take $a$ as large as possible, that is,
$$a=\min\left(1,\frac{p}{3(\theta p-1)}\right).$$
For $\theta p\leq 1+p/3$ we take $a=1$ whereas for $\theta p>1+p/3$ we choose the other value and get the stated inequality~\eqref{eq:LDA_main_estim_classical}. 
\end{proof}

\begin{remark}[Canonical case]\rm 
As was mentioned in~\eqref{eq:limit_classical}, in~\cite{LewLieSei-18} we could also handle the canonical case. We would easily obtain a quantitative estimate on the canonical energy $E^{\rm can}_{\rm cl}(\rho)$ if we knew the speed of convergence of $\ell^{-3}E_{\rm cl}^{\rm can}(\rho_0\1_{\ell\bDelta})$ to its limit $c_{\rm UEG}|\bDelta|$. Unfortunately, the argument used in~\cite[Lem.~3.2]{LewLieSei-18} to prove that the limit coincides with the grand canonical one does not seem to produce a quantitative bound. 
\end{remark}


\end{document}